\documentclass[twocolumn,superscriptaddress,floatfix,letterpaper,nofootinbib]{revtex4-2}

\usepackage{graphicx}
\usepackage[export]{adjustbox}

\usepackage{dcolumn}
\usepackage{bm}
\usepackage{multirow}

\usepackage[normalem]{ulem}

\usepackage{amsmath}
\usepackage{amsthm}
\usepackage{amstext}
\usepackage{amssymb}
\usepackage{mathrsfs}
\usepackage{amsfonts}
\usepackage{amsbsy} 

\usepackage[all,matrix,cmtip]{xy}

\usepackage{color}

\usepackage{hyperref}
\hypersetup{
    colorlinks = true,
    allcolors = blue,
}

\definecolor{red}{rgb}{1,0,0}
\definecolor{blue}{rgb}{0,0,1}
\definecolor{dblue}{rgb}{0,0,0.4}
\definecolor{green}{rgb}{0,1,0}
\definecolor{black}{rgb}{0,0,0}
\definecolor{white}{rgb}{1,1,1}

\definecolor{brn}{rgb}{.8,.4,.0}
\definecolor{redo}{rgb}{1,.5,.0}
\definecolor{ddgrn}{rgb}{0,0.4,0}
\definecolor{dgrn}{rgb}{0,0.55,0}
\definecolor{dbl}{rgb}{0,0,0.5}

\usepackage[bbgreekl]{mathbbol}
\newcommand{\id}{\text{id}}

\newcommand{\Z}{\mathbb{Z}}
\newcommand{\C}{\mathbb{C}}
\newcommand{\R}{\mathbb{R}}

\newcommand{\M}{\mathbb{M}}

\renewcommand{\t}[1]{\widetilde{#1}} 
\newcommand{\ii}{\hspace{1pt}\mathrm{i}\hspace{1pt}}
\newcommand{\ee}{\hspace{1pt}\mathrm{e}}
\newcommand{\dd}{\hspace{1pt}\mathrm{d}}
 
\renewcommand{\>}{\rangle} 

\newcommand{\Rf}[1]{Ref.~\onlinecite{#1}}

\newcommand{\eqn}[1]{eqn.~(\ref{#1})}

\newcommand{\prt}{\partial}

\newcommand{\ie}{{\it i.e.~}} 
\newcommand{\eg}{{\it e.g.~}} 
\newcommand{\etc}{{\it etc.}}

\newcommand{\bpm}{\begin{pmatrix}}
\newcommand{\epm}{\end{pmatrix}}
\newcommand{\bmm}{\begin{matrix}}
\newcommand{\emm}{\end{matrix}}

\newcommand{\cA}{ {\cal A} } 
\newcommand{\cB}{ {\cal B} }
\newcommand{\cC}{ {\cal C} }

\newcommand{\cF}{ {\cal F} } 
\newcommand{\cG}{ {\cal G} } 
\newcommand{\cH}{ {\cal H} } 
\newcommand{\cK}{ {\cal K} } 
\newcommand{\cL}{ {\cal L} } 
\newcommand{\cM}{ {\cal M} }

\newcommand{\cV}{ {\cal V} } 
\newcommand{\cX}{ {\cal X} }

\usepackage{euscript}

\newcommand\eM          {\EuScript{M}}

\newcommand\eZ         {\EuScript{Z}}

\newcommand{\al}{\alpha}

\newcommand{\Del}{\Delta} 
\newcommand{\eps}{\epsilon}

\newcommand{\om}{\omega} 
\newcommand{\Om}{\Omega}

\newcommand{\frmbox}[1]{\begin{center}\fbox{\parbox{3.3in}{\parindent=0pt #1}}\end{center}}

\usepackage{pgf,tikz}
\usepackage{tikz-3dplot}
\usetikzlibrary{shapes.geometric}
\usetikzlibrary{decorations.pathmorphing}
\usetikzlibrary{positioning}
\usetikzlibrary{calc}
\usetikzlibrary{decorations.markings, arrows.meta}
\usetikzlibrary{decorations.pathreplacing} 

\tikzset{
  midarrow/.style={
    postaction={
      decorate,
      decoration={
        markings,
        mark=at position 0.6 with {\arrow{Latex[length=2.5mm,width=1.6mm]}}
      }
    }
  },
  vertex/.style={circle,fill=black,inner sep=1.6pt},
  vlabel/.style={font=\scriptsize, yshift=-1.8mm}
}

\usepackage[all]{xy}

\usepackage{euscript}

\newcommand{\w}{{\rm w}}

\newcommand{\Aut}{\mathrm{Aut}}
\renewcommand{\mod}{\ \mathrm{mod}\ }
\newcommand{\Sq}{\mathrm{Sq}}
\newcommand{\gSq}{\mathbb{Sq}}
\newcommand{\Bs}{\cB}
\newcommand{\RZ}{{\mathbb{R}/\mathbb{Z}}}

\newcommand\se[1]{\overset{\scriptscriptstyle #1}{=}}
\newcommand\hcup[1]{\underset{{\scriptscriptstyle #1}}{\smile}}

\newcommand\toZ[1]{\lfloor #1 \rceil}

\newcommand{\confstate}[1]{|{#1}\rangle^\mathrm{conf}}

\newtheoremstyle{plainupright}
{\topsep}      
{1.4\baselineskip}      
{\normalfont}  
{}             
{\bfseries}    
{.}            
{.5em}         
{}             

\theoremstyle{plainupright}

\newtheorem{theorem}{Theorem}[section]

\newtheorem{conjecture}[theorem]{Conjecture}
\newtheorem{lemma}[theorem]{Lemma}

\newtheorem{definition}[theorem]{Definition}

\newtheorem{remark}[theorem]{Remark}

\newcommand{\ZZ}{\mathbb Z}
\newcommand{\RR}{\mathbb R}
\newcommand{\supp}{\operatorname{supp}}

\makeatletter
\def\l@subsubsection#1#2{}
\makeatother

\begin{document}

\begin{titlepage}

\title{Holographic Theory of Mixed-Dimensional Statistics\\
and Conservation-Encoding Hopping-Operator Algebras}

\author{Hanyu Xue} \affiliation{Department of Physics, Massachusetts
Institute of Technology, Cambridge, Massachusetts 02139, USA}

\author{Xiao-Gang Wen} \affiliation{Department of Physics,
Massachusetts Institute of Technology, Cambridge, Massachusetts 02139,
USA}

\begin{abstract}

We develop a general framework for the statistics of mixed-dimensional
excitations subject to intertwined conservation laws, extending the familiar
Fermi statistics with conserved particle number. We define
statistics microscopically through a \emph{hopping-operator algebra}: a local
operator subalgebra (LOsA) generated by operators that locally move or deform
excitations while preserving the conservation law. Nontrivial statistics arise
when this subalgebra is nontrivial.

We first focus on LOsAs that encode \emph{pointed} conservation laws. These
give rise to invertible excitations, whose fusion rules are exactly those of
the symmetry defects of a higher group $\cG$. For such
$\cG$-conserved excitations in $d$-dimensional space, we show that the
corresponding LOsA -- and hence the statistics it defines -- is classified by a
cohomology class $[\omega] \in H^{d+2}(B\cG;\R/\Z)$, where changing $[\omega]$
by a coboundary corresponds merely to a rephasing of the local operators. We
further provide a holographic realization: excitations with this prescribed
conservation law and statistics live on the boundary of a $\cG$ higher-group
gauge theory in $(d+1)$-dimensional space, twisted by $[\omega]$.

More generally, non-pointed conservation laws and the associated statistics of
non-invertible excitations are defined by a pair: a LOsA together with its
excitation-complex representation. This is equivalent to the pair consisting of
a LOsA and its Hilbert-space representation, which is the data
defining a generalized symmetry. Consequently, non-pointed conservation laws
and their statistics in $d$-dimensional space are classified by fusion
$d$-categories, just as generalized symmetries are. The higher-group results
above are the fully-pointed special cases of this more general
classification.

\end{abstract}

\pacs{}

\maketitle

\end{titlepage}

{
\small \tableofcontents
}

\

\

\section{Introduction}

Quantum statistics is a fundamental property of quantum systems. Particles in
$3$-dimensional space obey either Bose or Fermi statistics; the fermionic
electron, together with particle-number conservation, is the most familiar
example. In $2$-dimensional space, particles may instead be Abelian anyons
\cite{LM7701,W8257,H8483,ASW8422} or non-Abelian anyons
\cite{W8411,W8951,MR9162,W9102}. More generally, extended excitations, such as
strings, membranes, and higher-dimensional objects, can also exhibit
nontrivial statistics. For instance, loop excitations in $3$-dimensional space
support rich braiding processes
\cite{WL1437,JMR1462,WW1454,T14044385,FH211014654,WW211212148,KC241201886,
Xue2026StatisticsAbelian}.

In this paper, we study the statistics of point-like, string-like, and
higher-dimensional excitations, with particular emphasis on systems in which
excitations of different dimensions coexist and obey intertwined conservation
laws. To formulate the problem in full generality, we place the excitations on
a triangulated spatial complex $X$, mathematically known as a
\textit{combinatorial manifold}. In a microscopic lattice realization, the
degrees of freedom may be qudits placed on vertices, links, and higher
simplices. We take the qudit state $0$ to represent the absence of an
excitation, while the other qudit states represent the presence of various
excitations.

Mathematically, such excitations are labeled by $0$-chains, $1$-chains, and
higher chains in the spatial complex. This abstraction allows us to ignore the
details of the underlying qudit Hilbert space and, in particular, to avoid
assuming a tensor-product decomposition. We assume only that excitation
configurations label quantum states by chains in the spatial complex; we do
not even require that states labeled by different chains be orthogonal. This
paper emphasizes this more general point of view.

\begin{figure}[t]
\begin{center}
\includegraphics[scale=1.0]{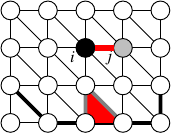}
\end{center}
\caption{A particle on site $i$ can be moved to site $j$ by a hopping operator
$t_{ji}$ supported on the link $(ij)$. A string can be locally deformed by a
hopping operator $t_\triangle$ supported on a triangle.}
\label{hop}
\end{figure}

The dynamics of these excitations are generated by hopping operators, which
locally move or deform them. Concretely, the hopping operators for particles
are supported on links, those for strings are supported on triangles, and so
on; see Fig.~\ref{hop}. We require hopping operators with disjoint supports to
commute. This condition defines a weak notion of locality: hopping operators
are \emph{weakly local} if they have finite support and commute whenever their
supports do not overlap. Because we do not assume a tensor-product structure, \textit{support} is an additional input data assigned to a hopping operator.
 Weakly local hopping operators generate a local
operator subalgebra, abbreviated LOsA, that encodes a conservation law. The
use of a proper \emph{subalgebra} is essential: the algebra of \emph{all}
local operators would encode the trivial case with no conservation law. When a
selected set of local operators generates a proper subalgebra, that subalgebra
encodes a nontrivial conservation law (equivalently, a nontrivial generalized
symmetry up to holo-equivalence \cite{JW191213492,KZ200514178}).

In analogy with the distinction between groups and their representations, a
LOsA also admits representations. One class consists of the
\textit{realizations} of a LOsA as operators on a Hilbert space. A more
abstract class of representation is given by purely geometric data, such as
chains and cycles, which encode the conservation law. Such a representation is
called an \emph{excitation complex}\footnote{This notion was introduced in
Ref.~\cite{Xue2026StatisticsAbelian} under the name \textit{excitation
pattern}.}. The term \textit{statistics} then refers to an invariant of a LOsA
that classifies its various realizations relative to a fixed excitation
complex (i.e., a fixed conservation law).

If the total Hilbert space admits a tensor-product decomposition
\begin{align}
\cV_{\mathrm{tot}}=\bigotimes_i \cV_i ,
\end{align}
then one can define \emph{strongly local} hopping operators as those generated
by operators acting only on the tensor factors within their support. Strongly
local hopping operators are, of course, weakly local. In this paper, however,
we work mainly with weakly local hopping operators, which we will simply call
hopping operators.

Statistics then arise as a classification of \emph{hopping algebras}, or
equivalently of LOsAs. Without a conservation law, the allowed local operators
form the full local operator algebra and describe trivial bosonic statistics.
With a conservation law, however, the allowed hopping operators are restricted
to a nontrivial LOsA. Such a LOsA encodes not only the conservation law but
also, in general, nontrivial statistics. In this sense, the statistics of
excitations are determined by the LOsA of their hopping operators
\cite{LW0316,FH211014654,KC241201886,Xue2026StatisticsAbelian}.

In spacetime, conserved excitations sweep out worldlines, worldsheets, and
higher-dimensional worldvolumes. These trajectories are chains in the
spacetime complex. Equivalently, they can be described by their
Poincar\'e-dual cochains $f_i$, which represent excitation \emph{currents}.
For example, a conserved particle in $d$-dimensional space has a conserved
current represented by a $d$-cocycle $f_d$ in spacetime. More generally, a
conserved codimension-$i$ excitation has a conserved current represented by an
$i$-cocycle $f_i$.

The fusion rules that encode conservation laws fall into several classes, for
which we adopt the following terminology:
\begin{enumerate}
\item \textit{Abelian.}
These are the fusion rules of symmetry defects of higher Abelian groups, whose
classifying spaces have the form
$K(A_1,1)\times K(A_2,2)\times \cdots$. In this case, excitations of different
dimensions have independent fusion rules and are described by cocycles $f_q\in Z^q(M,A_q)$, although their statistics may still
be mixed.

\item \textit{Fully pointed, or invertible.}
These are the fusion rules of symmetry defects of higher groups. Here,
intersections or junctions of higher-dimensional excitations may act as
sources for lower-dimensional ones, while the fusion rules remain pointed.
Equivalently, the conserved excitations are invertible. An excitation $a$ is
invertible if fusion with $a$ merely permutes the set of excitation types,
or, equivalently, if there exists an excitation $\bar a$ such that
\begin{align}
a\otimes \bar a=\mathbf 1 .
\end{align}
If all excitations in a system are invertible, their fusion rules are
group-like, or more generally higher-group-like, and the conservation law is
described by a higher group $\cG$ \cite{BLm0307200,KT13094721}. The Postnikov
data of $\cG$ encode how currents of different cochain degrees are coupled:
\begin{align}
\dd f_1 &=0, \nonumber\\
\dd f_2 &=k_3(f_1), \nonumber\\
\dd f_3 &=k_4(f_1,f_2), \nonumber\\
&\hspace{1em}\vdots
\end{align}
Thus invertible excitations can be viewed as symmetry defects of the higher
group $\cG$, and their conservation laws are precisely the fully-pointed
fusion rules of those defects.

\item \textit{Non-invertible.}
This is the most general case. Fusion may proceed through multiple channels
and is described by higher categories. Typical examples include non-Abelian
anyons. The traditional terminology is somewhat misleading: historically,
``non-Abelian anyon'' refers to the fact that braiding can act by matrices
rather than by mere phase factors. This characterization is difficult to
generalize to extended excitations, whose statistical processes are richer
than braiding. We therefore find it more natural to understand non-Abelian
statistics in terms of non-invertible fusion. For ordinary anyons, ``Abelian''
and ``pointed'' fusion rules coincide, but in general mixed-dimensional
systems the two notions are distinct.
\end{enumerate}

This paper studies primarily conservation laws described by fully-pointed
fusion rules. Our main result is a systematic construction of the statistics,
defined by the LOsA of hopping operators, for invertible excitations of mixed
dimensionalities. Let the excitations live in a $D=d+1$ dimensional spacetime
$M^{D}$, and let their fully-pointed conservation laws be modeled by symmetry
defects of a higher group $\cG$. Then their statistics are described by an
$\mathbb R/\mathbb Z$-valued $(d+2)$-cocycle on the classifying space $B\cG$:
\begin{align}
[\om_{d+2}]\in H^{d+2}(B\cG,\mathbb R/\mathbb Z).
\end{align}
The corresponding statistical phase is controlled by the Wess-Zumino-Witten
term
\begin{align}
\exp\left(
2\pi \ii \int_{N^{d+2}}\om_{d+2}(\{f_i\})
\right),
\qquad
M^{D}=\partial N^{d+2}.
\end{align}
This description combines the geometry of the conserved excitations, encoded by
the higher group $\cG$, with their statistical phases, encoded by
$[\om_{d+2}]$.

We know that the cohomology class $[\om_{d+2}]$ describes the anomaly of
the higher group $\cG$ symmetry. Our result indicates that the statistics of
the $\cG$-symmetry defects are described by precisely the same cohomology
class, $[\om_{d+2}]$. In other words, a symmetry anomaly manifests itself as
nontrivial statistics of symmetry defects \cite{LG1209,BZ14104540,W181202517}
(see Section \ref{anosta}).

The formulation is intrinsically holographic. The cocycle $\om_{d+2}(\{f_i\})$
is the topological action of a higher-group gauge theory in one higher
dimension. Thus the dynamics of excitations in $(d+1)$-dimensional spacetime,
together with their conservation laws and statistics, can be realized
holographically on the boundary of an $\om_{d+2}$-twisted $\cG$ higher-group
gauge theory in $(d+2)$-dimensional spacetime. This construction generalizes
Dijkgraaf-Witten theory \cite{DW9093,FreedQuinn1993} to higher groups
\cite{KT13094721}, and is consistent with previous results on string
statistics
\cite{T14044385,FH211014654,WW211212148,KC241201886,Xue2026StatisticsAbelian}.
Whereas recent works have developed an operator or Hilbert-space approach to
these statistics \cite{Xue2026StatisticsAbelian}, our approach is
field-theoretic and holographic, making it particularly well suited to
mixed-dimensional excitations.

This perspective has several consequences:
\begin{enumerate}
\item
\textbf{A unified language in arbitrary dimensions.} In our formulation, the
statistics of point particles in $(d+1)$-dimensional spacetime are described by
$H^{d+2}(K(A,d),\mathbb R/\mathbb Z)$, where $A$ is the Abelian fusion group
and $B\cG=K(A,d)$ encodes the conservation law. For spatial dimension $d=2$,
this gives $H^4(K(A,2),\mathbb R/\mathbb Z)$, which reproduces the group of
quadratic functions on $A$. This agrees with the familiar classification of
pointed braided tensor categories (PBTC) with fusion group $A$
\cite{JoyalStreet1993,EtingofGelakiNikshychOstrik2015}:
\begin{align}
H^4(K(A,2),\mathbb R/\mathbb Z)
&\cong \mathrm{Quad}(A,\mathbb R/\mathbb Z) \nonumber\\
&\cong \mathrm{PBTC}(A).
\end{align}

\item
\textbf{String statistics in $d$-dimensional space.}
For strings with Abelian fusion group $A$, the statistics are governed by
\begin{align}
H^{d+2}(K(A,d-1),\mathbb R/\mathbb Z).
\end{align}
This agrees with the classification obtained in
Refs.~\cite{KC241201886,Xue2026StatisticsAbelian}.

\item
\textbf{String statistics and the $\w_3$-structure of spacetime.}
For $A=\mathbb Z_2$ in $4$-dimensional spacetime ($d=3$), the nontrivial
string statistics are described by
\begin{align}
\frac12 f_2\Sq^1 f_2 \in H^5(K(\Z_2,2),\mathbb R/\mathbb Z).
\end{align}
The extension ambiguity of the corresponding WZW term on a closed $5$-manifold
$W^5$ is controlled by
\begin{align}
\int_{W^5} f_2\Sq^1 f_2
\se{2}
\int_{W^5} f_2\,\w_3(TW)
 .
\end{align}
Thus nontrivial $\mathbb Z_2$ string statistics require an oriented spacetime
equipped with a trivialization of $\w_3$ (see Section \ref{z2strw3}). This is
analogous to the spin structure required for Fermi statistics.

\item
\textbf{Mixed-dimensional conservation laws.}
Higher groups naturally couple excitations of different dimensions. A key
example is a twisted $\mathbb Z_2$ particle-string system in $d$-dimensional
space (assuming $d>1$), whose currents satisfy
\begin{align}
\dd f_{d-1} & \se{2} 0, \nonumber\\
\dd f_d & \se{2} \Sq^2 f_{d-1} .
\end{align}
We believe that the possible statistics form a $\mathbb Z_4$ group for
$d>2$, and a $\mathbb Z_2$ group for $d=2$
(see Section \ref{partstr}). This example is closely related to $p$-wave
topological superconducting strings and to fermionic bosonization
\cite{GuWen2014,KTT1429,W161201418,KT170108264,WangGu2018}.

\item
\textbf{Hopping algebras and WZW terms.}
Our result can also be read as a classification of LOsAs that encode
invertible conservation laws in $d$-dimensional space. If the conservation law
is described by a higher group $\cG$, then the corresponding LOsAs are
classified by
\begin{align}
[\om_{d+2}]\in H^{d+2}(B\cG,\mathbb R/\mathbb Z).
\end{align}
We will present two complementary approaches to these statistics: one based on
WZW terms, and one based on the LOsA of hopping operators, the latter serving
as our microscopic definition.
\end{enumerate}

More broadly, the statistics of potentially non-invertible excitations are
still defined by the LOsA of their hopping operators. In this case, however,
the LOsA encodes a general conservation law whose fusion rules need not be
pointed. Strongly local LOsAs in $d$-dimensional space are conjectured to be
classified by braided fusion $d$-categories in the trivial Witt class
\cite{KZ200514178,CW220303596,FT220907471}. A collection of excitations,
together with its conservation laws (encoded in the excitation complex) and its
top-dimensional coherence data (encoded in the statistics), is described by a
fusion $d$-category. This leads to the general result that generalized
statistics in $d$-dimensional space are described by fusion $d$-categories,
with higher groups arising as the pointed, or invertible, special case. Since a
fusion $d$-category defines a $(d+2)$-dimensional state-sum model, the dynamics
of generalized statistical excitations can be simulated on the boundary of a
state-sum model in $(d+2)$-dimensional spacetime. Higher gauge theories are the
invertible special case of this construction.

\begin{figure}[t]
\includegraphics[scale=0.7]{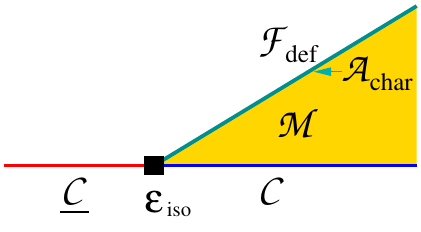}
\caption{An isomorphic holographic decomposition $\eps_{\mathrm{iso}}$ of a
system $\underline{\cC}$, or of a quantum field theory, expresses
$\underline{\cC}$ as a stacking of two boundaries -- the physical boundary
$\cC$ and the symmetry boundary $\cF_{\mathrm{def}}$ -- through a bulk
topological order $\eM$, also called the symmetry topological order:
$\underline{\cC}=\cF_{\mathrm{def}}\otimes_{\eM}\cC$ \cite{KZ150201690}. This
decomposition reveals the generalized symmetry of $\underline{\cC}$. The
boundary excitations on the symmetry boundary form a fusion higher category
$\cF_{\mathrm{def}}$, which characterizes the generalized symmetry of
$\underline{\cC}$ \cite{TW191202817}. Equivalently, $\cF_{\mathrm{def}}$
describes the fusion of symmetry defects \cite{KZ200514178}. The bulk symmetry
topological order $\eM=\eZ_2(\cF_{\mathrm{def}})$ describes the
holo-equivalence class of the symmetry \cite{JW191213492,KZ200514178}. The
boundary $\cF_{\mathrm{def}}$ is induced by a Lagrangian condensable algebra
$\cA_{\mathrm{char}}$, whose anyons are interpreted as the charges of the
symmetry $\cF_{\mathrm{def}}$ \cite{KZ200514178,FT220907471}.}
\label{CCmorph}
\end{figure}

The above categorical description of statistics echoes the
symmetry/topological-order correspondence for generalized symmetries. To make
the connection precise, let us briefly review this correspondence.

In 2014, a correspondence between non-invertible gravitational anomalies and
topological orders in one higher dimension was developed \cite{KW1458}. Using
this correspondence, one can reveal hidden gravitational anomalies in a system
through a \emph{holographic isomorphic decomposition} \cite{KZ150201690} (see
Fig.~\ref{CCmorph}), now often called the \emph{sandwich picture}
\cite{FT220907471}. It was later understood that non-invertible gravitational
anomalies are generalized symmetries (up to holo-equivalence
\cite{JW191213492,KZ200514178}), which may be anomalous, higher-group, and/or
non-invertible. Thus the sandwich picture of Fig.~\ref{CCmorph} can be used to
expose and characterize hidden generalized symmetries in a system
\cite{TW191202817,JW191213492,LB200304328,KZ200514178,GK200805960,
AS211202092,CW220303596,FT220907471,CW221214432}. In this language, the bulk
topological order -- a braided fusion $d$-category -- describes a
holo-equivalence class of symmetries \cite{KZ200514178,FT220907471}, while the
symmetry boundary specifies a particular symmetry within that class.

Traditionally, a symmetry is defined as invariance under a set of
transformations. For lattice quantum systems, however, it is more intrinsic to
define a symmetry as an isomorphism class of strongly local LOsAs
\cite{KZ200514178,CW220303596}. Refs.~\cite{DHR1969_fields_I,
DHR1969_fields_II} showed that isomorphism classes of LOsAs in one-dimensional
space are classified by braided fusion $1$-categories. More generally, it has
been proposed that isomorphism classes of strongly local LOsAs in
$d$-dimensional space are classified by braided fusion $d$-categories in the
trivial Witt class \cite{KZ200514178,CW220303596,FT220907471}. Different
Hilbert-space representations of a LOsA -- equivalently, different symmetry
boundaries of the bulk topological order determined by the LOsA -- are
described by fusion $d$-categories whose centers are the corresponding braided
fusion $d$-category. These fusion $d$-categories describe the symmetry defects
of the associated generalized symmetries. In the present paper, those symmetry
defects are precisely the mixed-dimensional excitations, whose potentially
non-invertible conservation laws and generalized statistics are described by
fusion $d$-categories.

The rest of the paper is organized as follows. We first review how Fermi
statistics arise macroscopically from a twisted higher gauge theory in one
higher dimension, and how they are defined microscopically by the algebra of
hopping operators supported on links. We then extend this holographic
framework to invertible excitations of mixed dimensionalities. Finally, we
analyze several examples: particles in $2+1$ dimensions, strings in $3+1$
dimensions, and mixed particle-string systems in $d+1$ dimensions.


\section{Notations and Conventions} \label{notation}

We utilize algebraic topology constructs such as cochains, coboundaries, cocycles, higher cup products $\hcup{k}$, Steenrod squares $\Sq^k$ (for $\Z_2$-valued cocycles), generalized Steenrod squares $\gSq^k$ (for generic coefficients), and Bockstein homomorphisms $\Bs_N$. We frequently abbreviate the ordinary cup product $a\smile b$ as $ab$. A brief review of these operations is provided in Appendix~\ref{cochain}.

We use some standard notation:
\begin{itemize}
    \item $K(A,n)$ denotes an Eilenberg-MacLane space where $\pi_n=A$ and all
other $\pi_i=0$.  We use $A^{(n)}$ to denote the $n$-group whose classifying
space $B A^{(n)}$ is $K(A,n)$.

    \item $C^n(K,A)$, $Z^n(K,A)$, and $B^n(K,A)$ denote the spaces of
$A$-valued $n$-cochains, $n$-cocycles, and $n$-coboundaries on a complex $K$.
Cohomology is $H^n(K,A) = Z^n(K,A)/B^n(K,A)$.

    \item $X \se{n} Y$ denotes equality modulo $n$ ($X-Y \equiv 0 \pmod n$).

    \item $X \se{\dd} Y$ denotes equality up to a coboundary. $X \se{n,\dd} Y$
indicates equality modulo $n$ and up to a coboundary.

    \item $\Z_n = \left\{\toZ{-\frac n2+1}, \dots, \toZ{\frac n2}\right\}$
represents a chosen integer lift of the cyclic group $Z_n$, where $\toZ{x}$ is
the integer closest to $x$ with $\toZ{1/2}=0$. Operations on $\Z_n$-valued
quantities use standard integer addition before the modulo $n$ equivalence is
applied.  In other words a $\Z_n$ valued cohain $f^{\Z_n}$ is always regarded
as $f^{\Z_n}-n\toZ{f^{\Z_n}/n}$.  In the paper, we some times use
\begin{align}
 \t f^{\Z_n} = f^{\Z_n}-n\toZ{f^{\Z_n}/n}
\end{align}
to stress that $ \t f^{\Z_n}$ is a integer lift of $\Z_n$-valued cohain
$f^{\Z_n}$.

    \item $\RZ = (-\frac12,\frac12]$ serves as a real lift of $U(1)$.  $\RZ$
additively represents $U(1)$.  In other words a $\RZ$ valued cohain $f$ is
always regarded as $f-\toZ{f}$.  Thus, $\frac{1}{n} f^{\Z_n}$ means the
integer representative of $f^{\Z_n}$ is divided by $n$, yielding an
$\RZ$-valued cochain.

\end{itemize}

\section{Holographic perspective on fermions: path-integral formalism} 
\label{bosonization}

Let us review higher-dimensional bosonization \cite{W161201418, KT170108264,
LW180901112}. We treat the fermion as a point-like excitation with
$\Z_2$ conservation. Its worldline is a $\Z_2$-valued cycle in a
$d+1$-dimensional spacetime, whose Poincar\'e dual is a $\Z_2$-valued $d$-cocycle
$f_d$.

\subsection{Spin structure from WZW term}
\label{bosonization1}

Typically, a bosonic field $f_d$ describes a bosonic excitation via the standard path integral:
\begin{align}
	\label{Zb}
	Z(M^{d+1}) = \sum_{ f_d \in Z^d(M^{d+1},\Z_2) } \ee^{- \int_{M^{d+1}} \cL( f_d)}.
\end{align}
Refs.~\cite{KT170108264,LW180901112} proposed that $f_d$ can alternatively describe a fermion by appending a WZW topological term $ \ee^{\ii \pi \int_{N^{d+2}} \Sq^2 f_d }$
\begin{align}
	\label{Zf1}
	Z(M^{d+1}) &= \sum_{ f_d \in Z^d(M^{d+1},\Z_2) } 
\ee^{-\int_{M^{d+1}} \cL( f_d)} \ee^{\ii \pi \int_{N^{d+2}} \Sq^2 f_d },
\end{align}
where $N^{d+2}$ is a \emph{regular extension} of the closed manifold $M^{d+1}$,
defined such that $\prt N^{d+2} = M^{d+1}$ and the Stiefel-Whitney classes of the
tangent bundle of $N^{d+2}$ restrict to those of $M^{d+1}$.  For \eqref{Zf1} to
be a rigorous path integral strictly defining $d+1$-dimensional physics, the bulk
amplitude must be independent of how $f_d$ is extended into $N^{d+2}$
\cite{GK150505856}.  Since $\Sq^2 f_d$ is in general a cocycle instead of a
coboundary, the term $ \ee^{\ii \pi \int_{N^{d+2}} \Sq^2 f_d } = \pm 1$ does
depend on the extension.

To fix this problem, we rewrite \eqref{Zf1} using the identity $\ee^{\ii \pi
\int_{M^{d+1}}  f_d s + \ii \pi \int_{N^{d+2}} f_d(\w_2+\w_1^2)} =1$
involving the spacetime's spin structure $s$ and Stiefel-Whitney classes
$\w_n$:
\begin{align}
	\label{Zf}
	Z(M^{d+1},s) &= \sum_{ f_d \in Z^d(M^{d+1},\Z_2) } \ee^{-\int_{M^{d+1}} \cL( f_d)} \nonumber\\
	&\quad  \ee^{\ii \pi \int_{M^{d+1}} f_d s + \ii \pi \int_{N^{d+2}} \bigl(\Sq^2 f_d + f_d(\w_2+\w_1^2)\bigr)}, \nonumber\\
	\dd s &\se{2} \w_2+\w_1^2 \quad \text{on } M^{d+1}.
\end{align}
The modified WZW term $\ee^{ \ii \pi \int_{N^{d+2}} \bigl(\Sq^2 f_d +
f_d(\w_2+\w_1^2)\bigr)}$ does not depend on the extension since $\Sq^2 f_d +
(\w_1^2+\w_2) f_d \se{2,\dd}0$ on $N^{d+2}$ due to Wu's relation. 
  Thus, this bosonization is well-defined only when $\w_2+\w_1^2$
is a coboundary.  In other words, the WZW term requires the spacetime to be
spin, consistent with fermions requiring a spin manifold.  This spin manifold
requirement from the WZW term indirectly suggests that the WZW term identifies
$f_d$ as a current of fermionic particles.

We notice that the pure bulk action amplitude
\begin{align}
	\label{LNd2}
	\ee^{\ii \pi \int_{N^{d+2}} \Sq^2 f_d}
\end{align}
defines a $(d+2)$-dimensional higher gauge theory with $f_d$ as the gauge
field.  The gauge group is a $d$-group $\Z_2^{(d)}$ describing a $(d-1)$-form
symmetry \cite{NOc0605316,GW14125148}, whose classifying space is $B\Z_2^{(d)}
= K(\Z_2,d)$.  We usually describe the $\Z_2$ conservation of the fermion by a
$\Z_2$ 0-form symmetry.  However, here, we equivalently describe the $\Z_2$
conservation of the fermion in $d$-dimensional space by a $d$-group
$\Z_2^{(d)}$.  This is why we can use a higher-form $\Z_2^{(d)}$-gauge theory
to describe the $\Z_2$-conserved fermions. In fact, the gauge potential $f_d$
of the $\Z_2^{(d)}$-gauge theory is the fermion current, and the
flat-connection condition is the $\Z_2$ conservation.

The field $f_d$ (the fermion current) can be viewed as the pullback $f_d =
\phi^* x_d$ of the canonical $d$-cocycle $x_d$ on the classifying space
$K(\Z_2,d)$ via a map $\phi: N^{d+2}\to K(\Z_2,d)$. Thus, the action represents
a nonlinear $\sigma$-model whose target space is the classifying space of the
higher group $\cG = \Omega K(\Z_2,d)=\Z_2^{(d)}$. 

When the higher gauge theory describing the conservation law is untwisted, the
conserved boundary objects are bosonic. Introducing the cocycle twist
$\Sq^2 x_d$ endows the boundary excitations with fermionic statistics.
This holographic perspective naturally combines statistics with its
prerequisite conservation law: the higher group describes the conservation
law, while the $\RZ$-valued $(d+2)$-cocycle describes the statistics. It also
allows us to generalize the above results on Fermi statistics for particles to
generalized statistics for extended objects and mixed-dimensional excitations.

\subsection{Statistics from WZW term}

In the preceding discussion, we derived the spin structure from the WZW term
$\ee^{\ii \pi \int_{N^{d+2}} \Sq^2 f_d}$. In this section, we will demonstrate
how this WZW term gives rise to Fermi braiding statistics. We first restrict
ourselves to a specific class of particle configurations where the $\Z_2$
current $f_d$ is a coboundary:
\begin{align}
 f_d \se{2} \dd a, \ \ \ \ a \in C^{d-1}(N^{d+2}, \Z_2).
\end{align}
Using the generalized Steenrod square and Eq.~\eqref{Sqd1}, we find
\begin{align}
\Sq^2 f_d = \gSq^2 f_d \se{2} \dd \gSq^2 a.
\end{align}
Thus, the WZW term reduces to an action strictly on the spacetime boundary
$M^{d+1}$ \cite{LW180901112}:
\begin{align}
 \ee^{\ii \pi \int_{N^{d+2}} \Sq^2 f_d} =
 \ee^{\ii \pi \int_{M^{d+1}} \gSq^2 a} .
\end{align}

To see how Fermi statistics emerge from the term $\ee^{\ii \pi \int_{M^{d+1}} \gSq^2 a}$,
let us specialize to $d=2$-dimensional space, where the action
simplifies to
\begin{align}
 \ee^{\ii \pi \int_{M^3} \gSq^2 a} =
 \ee^{\ii \pi \int_{M^3} a\dd a} =
 \ee^{\ii 2\pi \int_{M^3} \frac12 a\dd a} 
.
\end{align}
This is a fractional Chern-Simons term, well known to give rise to Fermi
statistics.

Specifically, we note that $f_2 = \dd a$ is the conserved $\Z_2$ current of
the particles in 2-dimensional space. The relation $f_2 = \dd a$ implies that
the particle acts as a unit flux, generating a circulating gauge field $a$
around itself. The coupling term $\frac12 a\dd a = \frac12 a f_2$ indicates
that a particle carrying current $f_2$ couples to the gauge field $a$ produced
by other particles exactly as a charge-$\frac12$ object. Thus, the particle
behaves as a bound state of a unit flux and a half-charge. 

Moving such a particle in a full loop around another particle induces an
Aharonov-Bohm phase of 
$2\pi \times \left(\frac12 \times 1\right) \times 2$.
Here, $2\pi \times (\frac12 \times 1)$ is the phase acquired from moving the
charge-$\frac12$ around the unit flux, and the overall factor of $2$ accounts
for their mutual statistics (moving the charge of the first particle around the
flux of the second, plus moving the flux of the first around the charge of the
second). Therefore, moving a particle halfway around another (\ie, exchanging
the two particles) induces a phase of $\pi\times  (\frac12 \times 1) \times 2 =
\pi$, which precisely corresponds to Fermi statistics. 

In the next section, we will formalize this heuristic argument and generalize
it to arbitrary dimensions: \frmbox{For general $d$-dimensional space, the
topological term in action amplitude 
\begin{align}
\ee^{\ii \pi \int_{M^{d+1}} \gSq^2 a},\
a \in C^{d-1}(M^{d+1}, \Z_2)  
\end{align}
endows Fermi statistics to the particle described by the $\Z_2$-conserved
current $f_d=\dd a$.  } We will show that in the Hamiltonian description of
the particles, $\ee^{\ii \pi \int_{M^{d+1}} \gSq^2 a}$ and the related WZW
term induce non-trivial phases in the hopping amplitudes. These phases lead to
a non-trivial statistical algebra for the hopping operators, strictly
enforcing Fermi statistics.

\section{Holographic perspective of fermions: Hilbert space formalism}
\label{sec:WZW_operator_fermion}

In Sec.~\ref{bosonization} we linked the fermionic statistics of a $\mathbb
Z_2$-conserved point excitation to a particular WZW term
\begin{equation}
	\exp\left(2\pi \ii \int_{N^{d+2}} \frac12 \Sq^2 f_d\right) .
\end{equation}
This statement should be read with some care.  The word ``fermion'' has several
historically different meanings: an anti-symmetric wave function,
anti-commuting operators, an anti-commuting Grassmannian field, a particle with
topological spin $\frac12$, or a particle requiring a spin structure for
spacetime.  We used the spin structure in Sec.~\ref{bosonization1} to argue that the above WZW term
gives rise to Fermi statistics; this gives an indirect argument.

In this section we use the axiomatic framework of
\Rf{Xue2026StatisticsAbelian}, which defines statistics directly from
configuration states and the algebra of hopping operators.  This framework was
first introduced for point-like fermions \cite{LW0316}, and it also generalizes
naturally to extended excitations such as loops \cite{FH211014654,KC241201886}.
Our goal is to connect this operator definition of statistics to the WZW term
in one higher dimension. 


\subsection{Review: the framework of statistics}
\label{subsec:mp_pattern}

\begin{figure}[t]
	\begin{center}
		\includegraphics[scale=1.0]{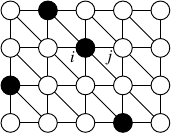}
	\end{center}
	\caption{The dots represent the sites that form the space $X$. On each
site $i$ there are two states $|0\>_i,|1\>_i$ forming a $\Z_2$ group.  The
configuration group is given by $A = \frac{\otimes_i \Z_2}{\Z_2^\text{total}}$.
The elementary subset $S$ is formed by all the links $(ij)$ (the 1-simplices):
$S =\{ (ij) \}$.
The map $\prt$ is given by $\prt (ij) = |1\>_i\otimes |1\>_j$. The support of
the link $(ij)$ consists of sites $i$ and $j$.} \label{StaAx}
\end{figure}

In the setting of Ref.~\cite{Xue2026StatisticsAbelian}, one considers $p$-dimensional excitations with invertible Abelian fusion rule, characterized by a finite Abelian group $G$. In our main theorem, the geometry of space $X$ is modeled as a combinatorial $d$-sphere, but at this moment we treat $X$ as a generic finite cell complex. The geometric shapes of excitations, called configurations, are labeled by $p$-dimensional boundaries $B_p(X,G)$.
These configurations form an Abelian group, denoted by
\begin{equation}
	A=B_p(X,G).
\end{equation}
Hopping operators are associated with $(p+1)$-dimensional objects.  Instead of using all of $C_{p+1}(X,G)$ as elementary labels, Ref.~\cite{Xue2026StatisticsAbelian} uses the elementary subset
\begin{equation}
	S=G_0\times X_{p+1}\subset C_{p+1}(X,G),
\end{equation}
where $X_{p+1}$ is the set of $(p+1)$-simplices and $G_0\subset G$ is a
generating subset.  The set $G_0\times X_{p+1}$ still generates $C_{p+1}(X,G)$.
There are two reasons for using this elementary set.  First, it reflects the
locality intuition that an extended excitation is built from local pieces.
Second, in many-body physics an operator labeled by $s\in S$ should have a
well-defined geometric support $\supp(s)$, and this is most natural for
simplex-supported operators.

These data are packaged into the following definition (see Fig.~\ref{StaAx} for a $p=0$ case).
\begin{definition}\label{defexcitationPattern}
	An \textit{excitation complex} $m$ consists of the data $(A,S,\partial,\supp)$:
	\begin{enumerate}
		\item a finite Abelian group $A$, called the configuration group;
		\item a finite set $S$ and a map $\partial:S\to A$ such that $\{\partial s\mid s\in S\}$ generates $A$;
		\item a topological space $\cX$ and a subspace $\supp(s)\subset \cX$ for each $s\in S$.
	\end{enumerate}
\end{definition}

The basic example above will be denoted by
\begin{equation}
	m_p(X,G):\ \
	(A=B_p(X,G),\; S=G_0\times X_{p+1},\; \partial,\; \supp).
\end{equation}
An excitation complex only describes geometric data and a conservation law.  A physical system realizes this excitation complex if these data label a collection of states and operators satisfying the following axioms\footnote{This definition is slightly different from that in Ref.~\cite{Xue2026StatisticsAbelian}; see Appendix \ref{sec: majorana fermions}.}.

\begin{definition}\label{defRealization}
	A \textit{realization}\footnote{In this paper, the word "realize" always refers to this specific terminology.} of the excitation complex $m=(A,S,\partial,\supp)$ consists of a Hilbert space $\mathcal H$, a collection of normalized \textit{configuration states} $\{|a\rangle\mid a\in A\}$ in $\cH$, and a collection of \textit{hopping operators} $\{U(s)\mid s\in S\}$, satisfying the following two axioms.
	\begin{itemize}
		\item \textbf{Configuration axiom:} for any $s\in S$ and $a\in A$,
		\begin{equation}\label{eqChangeConfig}
			U(s)|a\rangle=e^{i\theta(s,a)}|a+\partial s\rangle
		\end{equation}
		for some $\theta(s,a)\in\RR/2\pi\ZZ$.
		\item \textbf{Locality axiom:} for any $s_1,s_2,\ldots,s_k\in S$ satisfying
		\begin{equation}
			\supp(s_1)\cap\supp(s_2)\cap\cdots\cap\supp(s_k)=\emptyset,
		\end{equation}
		one has
		\begin{align}\label{axiomLocalityIdentity}
			[U(s_k),[\cdots,[U(s_2),U(s_1)]]]=1\in U(\mathcal H),
		\end{align}
		where $[a,b]=a^{-1}b^{-1}ab$.
	\end{itemize}
\end{definition}

The configuration axiom can be viewed as conservation rules. For example, in the quasiparticle case ($p=0$), it means that a particle and its antiparticle are always created in pairs by hopping operator $U(s)$ at $\partial s$, so the total number is always zero. Currently, it only describes Abelian fusion rules; we discuss a potential generalization in Appendix~\ref{AbNonAbElementary}.

For $k=2$, the locality axiom says that $U(s_1)$ and $U(s_2)$ commute whenever $\supp(s_1)\cap\supp(s_2)=\emptyset$.  This captures the intuition that $U(s)$ acts only on degrees of freedom inside $\supp(s)$, without requiring us to specify a bosonic tensor-product decomposition of the Hilbert space.  Moreover, $U(s)$ should be regarded as a finite-depth local quantum circuit.  Therefore $[U(s_2),U(s_1)]$ is supported near $\supp(s_1)\cap\supp(s_2)$, and hence
\begin{equation}
	[U(s_3),[U(s_2),U(s_1)]]=1
\end{equation}
whenever $\supp(s_1)\cap\supp(s_2)\cap\supp(s_3)=\emptyset$.  The higher nested-commutator conditions are interpreted similarly.

The configuration axiom assigns a collection of phases $\{\theta(s,a)\}$ to each realization of $m$, and the locality axiom imposes linear equations on them. For example, if $\supp(s_1)\cap\supp(s_2)=\emptyset$, then
\begin{equation}
	U(s_2)U(s_1)|a\rangle=U(s_1)U(s_2)|a\rangle
\end{equation}
implies
\begin{equation}
	\theta(s_1,a)+\theta(s_2,a+\partial s_1)
	=
	\theta(s_2,a)+\theta(s_1,a+\partial s_2).
\end{equation}

The solution space of these equations, denoted by $R(m)$, is a subgroup of $(\RR/2\pi\ZZ)^{S\times A}$ and can be decomposed into a continuous part and a discrete part $T^*$:
\begin{equation}
	R(m)\simeq (\RR/2\pi\ZZ)^n\oplus T^*(m).
\end{equation}
More precisely, there is a canonical surjective map $R(m)\to T^*(m)$ such that $(\RR/2\pi\ZZ)^n$ is the kernel. In other words, we have the canonical short exact sequence
\begin{equation}\label{eq:short exact sequence}
	0\to(\RR/2\pi\ZZ)^n\to R(m)\to T^*(m)\to 0.
\end{equation}

During computation, it appears that the continuous part varies for different spatial triangulations $X$, while the discrete part $T^*(m)$ only depends on the topology. Thus we define $T^*(m)$ as the \textit{classification of statistics}, and for a realization $\in R(m)$, its \textit{statistics} is defined as the image under the map $R(m)\to T^*(m)$. The discrete nature of $T^*(m)$ makes statistics very robust:
continuously modifying $|a\rangle$ or $U(s)$ may change each phase $\theta(s,a)$, but the image under $R(m)\to T^*(m)$ remains the same. For $\Z_2$ conserved particles in $2$ spatial dimensions, one expects
\begin{equation}
	T^*(m)\simeq \ZZ_4,
\end{equation}
corresponding to boson, semion, fermion, and anti-semion. 

The statistics of realizations are diagnosed by linear functions $e:R(m)\to
\RR/2\pi\ZZ$ that are zero on the continuous subgroup $(\RR/2\pi\ZZ)^n$, and
whose values are called \textit{statistical phases}. Physically, a statistical
phase is more convenient to be denoted as the phase of a sequence of excitation
operators $U(s_1)^{\epsilon_1},\cdots,U(s_n)^{\epsilon_n}$ acting on a
particular configuration state $|a\rangle$, where $\epsilon_i=\pm1$. Here we
require that $U(s)$ and $U(s)^{-1}$ always appear in pairs, so the final state
is collinear to the initial state $|a\rangle$.  In other words, the statistical
phase is
\begin{equation}\label{eq: statistical phase}
	\langle a|U(s_n)^{\epsilon_n}\cdots U(s_2)^{\epsilon_2}U(s_1)^{\epsilon_1}|a\rangle.
\end{equation}
We view Eq.~\eqref{eq: statistical phase} as the value of
\begin{equation}
	P=s_n^{\epsilon_n}\cdots s_2^{\epsilon_2}s_1^{\epsilon_1},
	\qquad \epsilon_i=\pm 1.
	\label{eq:statistical-process-word}
\end{equation}
$P$ is called a \textit{statistical process}, and the corresponding statistical phase Eq.~\eqref{eq: statistical phase} is denoted by $\langle a|U(P)|a\rangle$. As a formal definition, a statistical process is a formal sequence Eq.~\eqref{eq:statistical-process-word} (\ie, an element in the free group $\operatorname{F}(S)$) such that $s,s^{-1}$ always appear in pairs, and the evaluation
\begin{equation}
	\begin{aligned}
		&\text{a realization determined by }\left\{|a\rangle,U(s)\right\}\\\mapsto\; &\langle a_0|U(P)|a_0\rangle\in U(1).
	\end{aligned}
\end{equation}
is trivial on the continuous subgroup for some $a_0\in A$ \footnote{This implies the property for all $a\in A$.}. Thus a statistical process distinguishes different statistics via a map $T^*(m)\to \RR/\ZZ$, and we define two statistical processes as equivalent if their corresponding maps are equal. Consequently, statistical processes are classified by
\begin{equation}
	T(m):=\hom(T^*(m),\RR/\ZZ).
\end{equation}

\begin{figure*}[t]
	\begin{center}
		\includegraphics[scale=0.9]{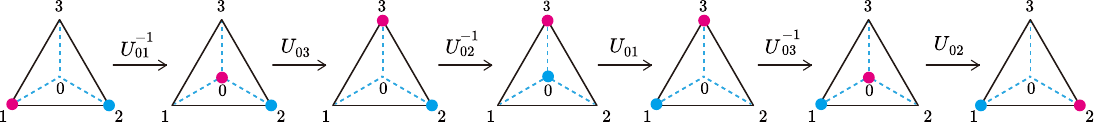}
	\end{center}
	\caption{The statistical phase of the T-junction process detects an anyon's self-statistics.}
	\label{figure: T-junction}
\end{figure*}

Statistical phases obtained from statistical processes are much more robust than is apparent from the definition. Let $P$ be a statistical process of the excitation complex $m_p(X,G)$. If $X$ is a combinatorial manifold (\cite{Xue2026StatisticsAbelian}, Theorem VI.11), then we have 
\begin{itemize}
	\item \textbf{The initial-state independence theorem}(\cite{Xue2026StatisticsAbelian}, Theorem VI.4)
	
	The statistical phase $\langle a_0|U(P)|a_0\rangle$ does not depend on the initial configuration $a_0\in A$. This usually implies that $U(P)$ is a pure phase, at least in the subspace spanned by configuration states.
	\item \textbf{The operator independence theorem}(\cite{Xue2026StatisticsAbelian}, Theorem VI.6)
	
	 As long as the Hilbert space $\cH$ and all configuration states $|a\rangle$ are fixed, the statistical phase $\langle a_0|U(P)|a_0\rangle$ does not depend on hopping operators $\{U(s)|s\in S\}$.
\end{itemize}

The simplest example of a statistical process is the T-junction process 
\cite{LW0316}
that detects the self-statistics of a $\ZZ_2$-anyon:
\begin{equation}
	P=s_{02}s_{03}^{-1}s_{01}s_{02}^{-1}s_{03}s_{01}^{-1},
\end{equation}
where $s_{ab}$ is the edge $ab$. The corresponding statistical phase is
\begin{equation}\label{eqTjunction}
	\left\langle \hbox{\raisebox{-2ex}{\includegraphics[width=1.2cm]{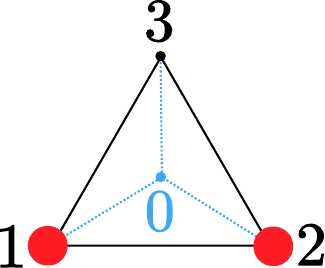}}}\right|
	U_{02}U_{03}^{-1}U_{01}U_{02}^{-1}U_{03}U_{01}^{-1}
	\left|\hbox{\raisebox{-2ex}{\includegraphics[width=1.2cm]{T_junction_state_12_detailed.pdf}}}\right\rangle.
\end{equation}
Geometrically, this T-junction process may be viewed as exchanging two anyons; see Fig.~\ref{figure: T-junction}. But in general, statistical processes may not have a very intuitive geometric explanation for higher-dimensional excitations. 

For a fixed complex $X$, the group $T^*(m_p(X,G))$ and its dual group $T(m_p(X,G))$ can be computed by a finite
algorithm.  When $X=\partial\Delta^{d+1}$, the boundary of a $(d+1)$-simplex
and the simplest triangulation of $S^d$, Ref.~\cite{Xue2026StatisticsAbelian}
proves that the statistics is classified by (see also
\Rf{LW0316,FH211014654,KC241201886})
\begin{equation}
	T^*(m_p(X,G))\simeq H^{d+2}(K(G,d-p),\RR/\ZZ).
\end{equation}
Computations suggest that the same formula holds for any generic combinatorial $d$-sphere. Although we do not have a proof, the object $K(G,d-p)$ indicates that $p$-chains are better viewed as $(d-p)$-cochains through Poincar\'e duality.  Setting $q=d-p$, we will use the cochain-version excitation complex $m^q(X,G)$ when $X$ is a $d$-dimensional combinatorial manifold.
\begin{itemize}
	\item $A=B^q(X,G)$;
	\item $S=G_0\times X_{q-1}\subset C^{q-1}(X,G)$;
	\item the boundary map $\partial:S\to A$ is replaced by the coboundary map $d:C^{q-1}(X,G)\to B^q(X,G)$;
	\item For some $s_i=(g_i,\alpha_i)$, we say 
	\begin{equation}
		\cap_{i=1}^k\supp(s_i)=\emptyset
	\end{equation}
	if and only if these simplices $\alpha_i$ are not simultaneously contained in a common $d$-simplex.\footnote{In other words, if
		$s=(g,\alpha)$, where $\alpha$ is a $(q-1)$-simplex, then $\operatorname{supp}(s)$
		is the closed barycentric dual cell of $\alpha$.}
\end{itemize}

This cochain version is more closely related to field theories and holographic constructions. This enables us to prove Theorem~\ref{thm:Realization}, which claims that:
\begin{equation}
	T^*(m^q(X,G))\supset H^{d+2}(K(G,q),\RR/\ZZ).
\end{equation}
The ``$\subset$'' direction, however, remains a conjecture.

In the next several subsections, we focus on fermions in two spatial
dimensions.  Thus we will study several realizations of $m^2(X,\ZZ_2)$, where
$X$ is a triangulation of $S^2$.  The configuration group $A=B^2(X,\ZZ_2)$
means that fermions live on $2$-simplices and that their total number is even.
The elementary set $S=X_1\subset C^1(X,\ZZ_2)$ means that fermions are created
by string operators at the ends of strings.  In practice, whenever possible, we
will define $U(s)$ for arbitrary cochains $s\in C^1(X,\ZZ_2)$; these
definitions are chosen to be consistent with the elementary operators $U(s)$
for $s\in X_1$, which already determine the statistics.

\subsection{Elementary fermions}
\label{subsec: complex fermions}

We now realize the excitation complex $m^2(X,\ZZ_2)$ in a Hilbert space of
fermions, where the space $|X|\simeq S^2$.  Assign one complex fermion operator
to each $2$-simplex $\sigma\in X_2$:
\begin{equation}
	c_\sigma,\;c_\sigma^\dagger,
	\qquad
	\{c_\sigma,c_{\sigma'}^\dagger\}=\delta_{\sigma\sigma'},
	\qquad
	\{c_\sigma,c_{\sigma'}\}=0 .
\end{equation}
Let $|0\rangle_{\rm F}$ be the Fock vacuum.  For
$a\in B^2(X,\ZZ_2)$, define the configuration state
\begin{equation}
	|a\rangle_{\rm F}
	=
	\prod_{\sigma\in X_2}^{<}(c_\sigma^\dagger)^{a_\sigma}|0\rangle_{\rm F},
\end{equation}
where $<$ is a fixed ordering of the $2$-simplices.  Since $X$ is closed and
$a$ is a coboundary, the total fermion parity of $|a\rangle_{\rm F}$ is even.
Thus all configuration states lie in the even-parity sector
$\cH^+_{\rm F}\subset\cH_{\rm F}$.

Define the Majorana operator
\begin{equation}
	\gamma_\sigma=c_\sigma+c_\sigma^\dagger,
	\qquad
	\{\gamma_\sigma,\gamma_{\sigma'}\}=2\delta_{\sigma\sigma'} .
\end{equation}
For $s\in C^1(X,\ZZ_2)$, set
\begin{equation}
	\label{eqMajoranaOperator}
	M(s)=\prod_{\sigma\in X_2}^{<}\gamma_\sigma^{(\dd s)(\sigma)} .
\end{equation}
This is a ``string operator'' that changes the fermion number by $\pm 1$ at its boundary.
Since $\dd s$ has even total parity, $M(s)$ preserves $\cH^+_{\rm F}$ and satisfies
\begin{equation}
	M(s)|a\rangle_{\rm F}
	=e^{\ii\theta_{\rm F}(s,a)}|a+\dd s\rangle_{\rm F}
\end{equation}
for a phase determined by the ordering convention.  Hence the complex-fermion
Hilbert space realizes $m^2(X,\ZZ_2)$.

These operators are multiplicative up to a phase
\begin{equation}
	M(s)M(t)\propto M(s+t).
\end{equation}
While this phase depends on the ordering of $2$-simplices, the commutator has a closed form. Two operators
$M(s)$ and $M(t)$ anticommute precisely when their endpoint configurations
overlap on an odd number of $2$-simplices. This can be expressed in terms of higher cup products:
\begin{equation}
	\label{eqMajoranaCommutator}
	M(s)M(t)
	=
	(-1)^{\int_X \dd s\hcup{2}\dd t}M(t)M(s),
\end{equation}
or, in our convenient form,
\begin{equation}
	\label{eqMajoranaCommutator2}
	[M(s),M(t)]
	=
	(-1)^{\int_X \dd s\hcup{2}\dd t}.
\end{equation}

In the T-junction process (Eq.~\eqref{eqTjunction}), we label the strings by $s_{0i}\in C^1(X,\ZZ_2)$ such that
\begin{equation}
	\dd s_{0i}=\delta_0+\delta_i,
\end{equation}
where $\delta_i\in C^2(X,\ZZ_2)$ is the Poincar\'e-dual endpoint cochain at the
vertex $i$. We evaluate the T-junction process $P$ as
\begin{equation}
	\label{eqTjunctionWordRewritten}
	M(P)
	=M_{02}M_{03}^{-1}M_{01}M_{02}^{-1}M_{03}M_{01}^{-1}.
\end{equation}

Because $[M_{0i},M_{0j}]$ are all pure phases, this simplifies to

\begin{equation}
	\begin{aligned}
		M(P)&= [M_{01},M_{02}] [M_{02},M_{03}][M_{03},M_{01}]\\&=(-1)^\tau.
	\end{aligned}
\end{equation}
where
\begin{equation}
	\tau=
	\int_X
	\dd s_{02}\hcup{2}\dd s_{03}
	+\dd s_{02}\hcup{2}\dd s_{01}
	+\dd s_{03}\hcup{2}\dd s_{01}.
\end{equation}

In the local T-junction geometry, each pair of endpoint configurations shares
only the common endpoint $0$.  Hence
\begin{align}
	\tau
	&=
	\int_X
	(\delta_0+\delta_2)\hcup{2}(\delta_0+\delta_3)
	+(\delta_0+\delta_2)\hcup{2}(\delta_0+\delta_1)
	\nonumber\\
	&\hspace{2.5cm}
	+(\delta_0+\delta_3)\hcup{2}(\delta_0+\delta_1)
	\nonumber\\
	&=1\mod 2.
\end{align}
In conclusion, we have
\begin{equation}
	M(P)=-1.
\end{equation}
Thus the fermionic statistics from the T-junction process are consistent with the traditional view from anti-commutative algebra.

\subsection{Emergent fermions in the toric code}
\label{subsec:toric_code_f_anyon}

Here, we would like to distinguish two types of excitations in a triangulated
spatial complex, whose degrees of freedom, such as qudits, live on vertices,
links, and higher-dimensional simplices.

The first type consists of \emph{elementary excitations}, namely the
excitations studied so far.  These are directly described by points, strings,
and higher-dimensional objects in space, formed by nonzero qudits, while zero
qudits are viewed as the vacuum.  Their spacetime trajectories are 1-chains,
2-chains, and higher chains.

The second type consists of \emph{bounding excitations}. These arise as
boundaries of \emph{condensed} elementary excitations. Their spacetime
trajectories are boundaries of one-higher-dimensional chains; equivalently,
their Poincar\'e-dual currents are exact cochains, namely coboundaries. 

In this section, we realize $m^2(X,\ZZ_2)$ in the bosonic toric code model,
where fermions appear as bounding excitations, namely emergent
fermions. The toric-code Hilbert space is formed by qubits
living on the edges of the spatial complex, which has basis
\begin{equation}
	\{|c\rangle\mid c\in C^1(X,\ZZ_2)\},
\end{equation}
and the vacuum is
\begin{equation}
	|\Omega\rangle=\confstate{0}=
	\sum_{\lambda\in C^0(X,\ZZ_2)}|\dd \lambda\rangle .
\end{equation}
The $m$ string translates the edge variable $c$, while the electric string
contributes a phase.  For the $f=e\times m$ anyon, we choose
the following representative of the string operator:
\begin{equation}
	\label{eq:U_f(s)}
	U(s)|c\rangle
	=
	(-1)^{\int_X c s+\dd c\hcup{1}s}|c+s\rangle,
	\qquad s\in C^1(X,\ZZ_2).
\end{equation}

\begin{remark}\label{remark: alternative solution}
	One may alternatively define
	\begin{equation}
		\label{eq:U_f'(s)}
		U'(s)|c\rangle
		=
		(-1)^{\int_X c s}|c+s\rangle.
	\end{equation}
	Both $U$ and $U'$ satisfy the two axioms and have fermionic statistics $U(P)=U'(P)=-1$. In fact, they only differ by a phase convention.
	 We choose Eq.~\eqref{eq:U_f(s)} because the commutator of $U(s)$ is the same as that of fermions, while that of $U'(s)$ is not.
\end{remark}

To get the configuration state $|a\rangle$, we choose some $s$ such that $\dd s=a$ and write
\begin{equation}
	\label{eq:configuration state}
	\confstate{\dd s}
	\sim
	U(s)|\Omega\rangle
	=
	\sum_{\lambda\in C^0(X,\ZZ_2)}
	(-1)^{\int_X(\dd\lambda)s}|\dd\lambda+s\rangle .
\end{equation}
It is easy to check the configuration axiom, so these $U(s)$ and $|a\rangle$ realize the excitation complex $m^2(X,\ZZ_2)$. Note that if $s$ is replaced by
$s+\dd\epsilon$, then $\dd s$ is unchanged but the state changes by an overall
phase,
\begin{equation}
	U(s+\dd\epsilon)|\Omega\rangle
	=
	(-1)^{\int_X(\dd\epsilon)s}U(s)|\Omega\rangle .
\end{equation}
Thus different choices of $s$ lead to an ambiguity in the phase convention for configuration states, a feature of nontrivial statistics.

From the formula of $U(s)$, a direct calculation gives
\begin{equation}
	\label{eqUnifiedProductLaw}
	U(s)U(t)
	=
	(-1)^{\int_X ts+\dd t\hcup{1}s}U(s+t).
\end{equation}
Therefore
\begin{align}
	[U(s),U(t)]
	&=
	(-1)^{\int_X ts+\dd t\hcup{1}s+st+\dd s\hcup{1}t}  
	\nonumber\\
	&=
	(-1)^{\int_X \dd s\hcup{2}\dd t}.
	\label{eqCochainCommutator}
\end{align}
The second equality follows from Eq.~\eqref{cupkrel}. This is exactly the fermionic commutator Eq.~\eqref{eqMajoranaCommutator2}; thus $U(P)=-1$.

Recall that the phase difference between $M(s)M(t)$ and $M(s+t)$ depends on the ordering of fermions, so the multiplication law is different from
Eq.~\eqref{eqUnifiedProductLaw} in general. Despite that, since they have the same commutator bicharacter, Lemma~III.1 of
Ref.~\cite{Xue2026StatisticsAbelian} implies that the difference can be
removed by a phase redefinition
\begin{equation}\label{eq: U_rm F(s)}
	U_{\rm F}(s)=\mu(s)M(s).
\end{equation}
After this redefinition, the complex-fermion operators may be chosen
to obey the same projective multiplication law as the toric-code representative
\eqref{eqUnifiedProductLaw}:
\begin{equation}
	\label{eq:Mprime_product}
	U_{\rm F}(s)U_{\rm F}(t)
	=
	(-1)^{\int_X ts+\dd t\hcup{1}s}U_{\rm F}(s+t).
\end{equation}
Thus hopping operators in the bosonic toric-code realization and the
complex-fermion realization carry the same algebra. The only difference is how
they are embedded in a total Hilbert space.

\subsection{Fermions and WZW term in one higher dimension}
\label{subsec:WZW_boundary_operator}

We now return to the WZW term and explain how the two boundary realizations above arise from the same one-higher-dimensional gauge theory.  Consider the $3+1$-dimensional $\ZZ_2$ one-form gauge theory twisted by
\begin{equation}
	\omega(f)=\frac12 f^2\in Z^4(\cdot,\RR/\ZZ),
	\qquad
	f\in Z^2(\cdot,\ZZ_2).
\end{equation}
All cochain formulas below are $\ZZ_2$-valued and appear in the path integral through the sign $(-1)^{\int(\cdots)}$.

Let $N^4$ be a spacetime whose boundary is the spatial three-manifold $M^3=\partial N^4$.  The path integral defines a state on $M$:
\begin{equation}
	Z(N)=
	\sum_{f\in Z^2(N,\ZZ_2)}
	(-1)^{\int_N f^2}|f_M\rangle .
\end{equation}
A gauge transformation with parameter $s\in C^1(M,\ZZ_2)$,
\begin{equation}
	f\longmapsto f+\dd s,
\end{equation}
changes the WZW action by a boundary term.  One convenient representative is
\begin{equation}
	(f+\dd s)^2-f^2=\dd \Theta(s,f),
	\qquad
	\Theta(s,f)=s \dd s+f\hcup{1}\dd s .
	\label{eq:Theta0}
\end{equation}

\begin{remark}
	An alternative solution is
	\begin{equation}
		\Theta'(s,f)=sf+fs+s \dd s.
	\end{equation}
	Indeed, the two solutions differ by an exact term:
	\begin{equation}
		\Theta'(s,f)-\Theta(s,f)=\dd (f\hcup{1}s).
		\label{eq:Theta_difference}
	\end{equation}
	Thus they describe the same bulk gauge transformation on a closed spatial manifold, but they induce slightly different boundary operators when $\partial M\neq\emptyset$ (Eq.~\eqref{eq:U_f(s)} versus \eqref{eq:U_f'(s)}).

\end{remark}

Thus the gauge transformation acts on the spatial state by
\begin{equation}
	T_{s}|f\rangle
	=
	(-1)^{\int_M\Theta(s,f)}|f+\dd s\rangle .
\end{equation}
The state $Z(N)$ is invariant under this transformation.

Now restrict to the coboundary sector,
\begin{equation}
	f=\dd v,
	\qquad
	v\in C^1(M,\ZZ_2),
\end{equation}
where $v$ is a choice of gauge potential trivializing $f$.  We write
\begin{equation}
	\beta(v)=v \dd v,
	\qquad
	\dd \beta(v)=(\dd v)^2.
\end{equation}
If $M$ is closed, the WZW state can be written as
\begin{equation}
	|\psi_M\rangle
	=
	\sum_{v\in C^1(M,\ZZ_2)}
	(-1)^{\int_M v\dd v}|\dd v\rangle.
	\label{eq:closed_M_WZW_state}
\end{equation}
It satisfies
\begin{equation}
	T_{s}|\psi_M\rangle=|\psi_M\rangle.
\end{equation}

The interesting point appears when $M$ itself has a boundary
\begin{equation}
	X=\partial M .
\end{equation}
Then $v$ and $v+\dd\epsilon$ represent the same bulk coboundary $\dd v$, but their WZW amplitudes differ by
\begin{equation}
	(-1)^{\int_M (v+\dd\epsilon)\dd(v+\dd\epsilon)-v\dd v}
	=
	(-1)^{\int_X\epsilon \dd v}.
\end{equation}

If one still uses Eq.~\eqref{eq:closed_M_WZW_state} as the ground state, interference will cancel all components with $\dd v|_X\ne 0$. To avoid this cancellation, the WZW term requires a boundary Hilbert space carrying a projective action of the boundary gauge transformations. From now on, to distinguish bulk and boundary cochains, we use the notation $\widetilde{v},\widetilde{s}$ to denote those in the bulk, and their boundary restrictions are denoted by $v$ and $s$.

\subsubsection{Bosonic boundary Hilbert space and the toric-code $f$ anyon}

Our first realization keeps the boundary cochain $v$ as a bosonic edge-qubit degree of freedom.  Define
\begin{equation}
	|\Psi_M^{\rm TC}\rangle
	=
	\sum_{\widetilde{v}\in C^1(M,\ZZ_2)}
	(-1)^{\int_M \widetilde{v}\dd \widetilde{v}}|\dd \widetilde{v}\rangle\otimes |v\rangle .
	\label{eq:TC_bulk_boundary_state}
\end{equation}
The identity associated with $\Theta$ is
\begin{equation}
	\beta(v+s)-\beta(v)-\Theta(s,\dd v)=\dd L(s,v),
	\label{eq:L0_descent}
\end{equation}
and a solution is
\begin{equation}
	L(s,v)=vs+\dd v\hcup{1}s.
\end{equation}
Therefore the bulk gauge transformation $T_{\widetilde{s}}$ leaves a boundary factor $(-1)^{\int_X vs}$, cancelled by the boundary operator:
\begin{equation}
	U(s)|c\rangle
	=
	(-1)^{\int_X c s+\dd c\hcup{1}s}|c+s\rangle .
	\label{eq:U0_boundary_operator}
\end{equation}
Equivalently,
\begin{equation}
	\bigl(T_{\widetilde{s}}\otimes U(s)\bigr)|\Psi_M^{\rm TC}\rangle
	=|\Psi_M^{\rm TC}\rangle .
\end{equation}
This is exactly the toric-code $f$-anyon string operator in Eq.~\eqref{eq:U_f(s)}.  Thus the emergent $f$ anyon appears as a bosonic boundary realization of the projective gauge-transformation law imposed by the bulk WZW term. 

The configuration states are constructed in a similar way.  Let $a\in B^2(X,\ZZ_2)$ be the boundary excitation configuration, and let $\widetilde a\in B^2(M,\ZZ_2)$ be a bulk extension with $\widetilde a|_X=a$. Define
\begin{equation}\label{eq:configuration state2}
	\confstate{a}_{\rm TC}
	=
	\sum_{\widetilde{v}\in C^1(M,\ZZ_2),\; \dd \widetilde{v}=\widetilde a}
	(-1)^{\int_M \widetilde{v}\dd \widetilde{v}}|v\rangle .
\end{equation}
Choose $\widetilde s\in C^1(M,\ZZ_2)$ with $\dd\widetilde s=\widetilde a$.  Then every term in the sum can be written as $\widetilde{v}=\widetilde s+\dd\widetilde\lambda$ for some $\widetilde\lambda\in C^0(M,\ZZ_2)$.  With $\lambda=\widetilde\lambda|_X$, one obtains
\begin{equation}
	\begin{aligned}
		\confstate{a}_{\rm TC}
		&=\sum_{\widetilde\lambda\in C^0(M,\ZZ_2)}
		(-1)^{\int_M(\widetilde s+\dd\widetilde\lambda)\widetilde a}|s+\dd\lambda\rangle \\
		&=(-1)^{\int_M\widetilde s\widetilde a}
		\sum_{\widetilde\lambda\in C^0(M,\ZZ_2)}
		(-1)^{\int_X\lambda a}|s+\dd\lambda\rangle \\
		&=(-1)^{\int_M\widetilde s\widetilde a}
		\sum_{\lambda\in C^0(X,\ZZ_2)}
		(-1)^{\int_X(\dd\lambda)s}|s+\dd\lambda\rangle .
	\end{aligned}
\end{equation}
Up to an overall phase, this is the state in Eq.~\eqref{eq:configuration state}.

\subsubsection{Fermionic boundary Hilbert space and complex fermions}

The same bulk WZW term also admits a boundary realization by complex fermions. The corresponding Hilbert space is constructed in Section~\ref{subsec: complex fermions}. On the operator algebra level, we replace the bosonic string operator $U(s)$ by its fermionic counterpart $U_{\rm F}(s)$ defined in Eq.~\eqref{eq: U_rm F(s)}. On the state level, we define the bulk-boundary state as
\begin{equation}
	|\Psi_M^{\rm F}\rangle
	=
	\sum_{\widetilde{v}\in C^1(M,\ZZ_2)}
	(-1)^{\int_M \widetilde{v}\dd \widetilde{v}}|\dd\widetilde{v}\rangle\otimes U_{\rm F}(v)|0\rangle_{\rm F} .
	\label{eq:F_bulk_boundary_state}
\end{equation}
Thus 
\begin{equation}
	\bigl(T_{\widetilde{s}}\otimes U_{\rm F}(s)\bigr)|\Psi_M^{\rm F}\rangle
	=|\Psi_M^{\rm F}\rangle.
\end{equation}

We therefore arrive at the following holographic picture.  The one-higher-dimensional WZW term fixes a projective boundary action of gauge transformations.  This projective action can be realized in at least two ways.  Keeping the boundary cochain $v_X$ as an edge-qubit variable gives the toric-code $f$ anyon, an emergent fermion in a bosonic Hilbert space.  Realizing the same boundary transformation law in the even-parity Hilbert space of complex fermions gives a fermion. 
In short, both bosonic and fermionic boundary realizations carry fermionic statistics as boundary manifestations of the same WZW term in one higher dimension.

\section{Statistics of Abelian excitations from one-higher-dimensional WZW terms}
\label{sec:higher-form-realization}

In this section we explain how a one-higher-dimensional WZW term induces a hopping operator algebra with nontrivial statistics. For notational simplicity,
we restrict to the homogeneous higher-form case, so there is only one cochain
degree. This construction
applies to all Abelian excitations whose fusion rules form an Abelian group and do
not mix different dimensions. The argument may potentially apply to higher-group cases, but the
axiomatic framework of statistics has not yet been fully generalized to
non-Abelian cases.

\subsection{Bulk higher-form gauge theory and descendants}\label{subsec: decendant functions}

The target hopping operator will be defined on a $d$-dimensional space $X$ with the form of Eq.~\eqref{eq:boundary-excitation-operator}, while our input is a $(d+2)$-cocycle
\begin{equation}
	\omega: Z^q(\cdot,G)\to Z^{d+2}(\cdot,\RZ).
\end{equation}
The main task of this subsection is to construct the function $L(s,v)\in C^d(X,\RZ)$ that decrease the dimension by $2$.

On a closed $(d+2)$-dimensional spacetime $N$, the associated twisted
higher-form gauge theory has field
\begin{equation}
	f\in Z^q(N,G)
\end{equation}
and topological weight
\begin{equation}
	\exp\left(2\pi\ii\int_N\omega(f)\right).
\end{equation}
Here $\omega(f)$ means that the universal cocycle $\omega$ is evaluated on the
$q$-form gauge field $f$.

A gauge transformation is parametrized by a cochain
\begin{equation}
	s\in C^{q-1}(N,G),
	\qquad f\longmapsto f+\dd s .
\end{equation}
The action is gauge invariant on a closed spacetime.  On a spacetime with
boundary, however, the variation becomes a boundary term.  We choose a local
cochain operation $\Theta$ satisfying
\begin{equation}
		\dd \Theta(s,f)=\omega(f+\dd s)-\omega(f) .
	\label{eq:Theta-descent}
\end{equation}

Thus $\Theta$ is the first descendant of the bulk WZW cocycle.  Physically,
$\Theta$ is the phase picked up by a spatial wavefunction under a bulk gauge
transformation. We also require
\begin{equation}
		\Theta(0,f)=0.
	\label{eq:Theta-descent1}
\end{equation}
This can always be done by the replacement $\Theta(s,f)\mapsto\Theta(s,f)-\Theta(0,f)$.

Next restrict to the exact sector
\begin{equation}
	f=\dd v,
	\qquad v\in C^{q-1}(\cdot,G).
\end{equation}
In this sector the WZW action itself has a spatial descendant
\begin{equation}
		\beta(v):=\Theta(v,0),
		\qquad \dd \beta(v)=\omega(\dd v) .
	\label{eq:beta-descent}
\end{equation}
Thus $\int_M\beta(v)$ is the WZW phase on a spatial filling $M$ when the bulk
$q$-form field is written as $\dd v$.

Finally, there is a second descendant $L$.  We use the convention
\begin{equation}
		\dd L(s,v)=
		\Theta(s,\dd v)+\beta(v)-\beta(s+v) .
	\label{eq:L-descent}
\end{equation}
and
\begin{equation}
		L(0,v)=0.
	\label{eq:L-descent1}
\end{equation}

The right-hand side is closed because the two terms
$\omega(\dd v+\dd s)-\omega(\dd v)$ cancel.  Therefore a local primitive $L$ exists for
standard local cocycle representatives, obtained by the usual prism or
cochain-homotopy construction.  In the following, $\Theta$, $\beta$, and $L$
are chosen once and for all as normalized local cochain operations.

The physical meaning of $L$ is the following.  Let $M$ be a spatial filling with
boundary $X$, and let $\widetilde s\in C^{q-1}(M,G)$ extend
$s\in C^{q-1}(X,G)$.  Integrating Eq.~\eqref{eq:L-descent} over $M$ gives
\begin{align}
	&\int_M \beta(\widetilde{v})
	+\int_M \Theta(\widetilde s,\dd \widetilde{v})
	-\int_M \beta(\widetilde{v}+\widetilde s)
	\nonumber\\
	&\hspace{3.0cm}=\int_X L(s,v) .
	\label{eq:L-physical-identity}
\end{align}
Thus a bulk gauge transformation changes the WZW amplitude as if the potential
were shifted from $\widetilde{v}$ to $\widetilde{v}+\widetilde s$, but leaves a boundary phase.  The
boundary hopping operator is precisely the operator that cancels this
leftover phase.

\subsection{Boundary configuration states and hopping operators}

Consider a spacetime $N^{d+2}$ with a spatial boundary $\partial N^{d+2}=M^{d+1}$. In the most fundamental form, the WZW wavefunction is defined by
\begin{equation}
	\sum_{f\in Z^q(N,G)}e^{\int_N2\pi \ii \omega(f)}|f_{M}\rangle.
\end{equation}

If we only consider the exact sector $f=\dd \widetilde{v}$, then we can get rid of the spacetime $N$, but only do integration on the space $M$:
\begin{equation}
	|\Psi_M^\omega\rangle=\sum_{v\in C^{q-1}(M,G)}e^{2\pi \ii\int_M\beta(\widetilde{v})}|\dd \widetilde{v}\rangle.
\end{equation}

Due to Eq.~\eqref{eq:L-descent}, many $v$ contribute to the same $|\dd v\rangle$ without interference. Equivalently, $|\Psi_M^\omega\rangle$ is invariant under the gauge transformation
\begin{equation}
	T^M_{\widetilde s}|\dd\widetilde{v}\rangle_{\rm bulk}
	={}e^{2\pi\ii\int_M\Theta(\widetilde s,\dd\widetilde{v})}     
	|\dd(\widetilde{v}+\widetilde s)\rangle_{\rm bulk} .
\end{equation}

 However, when $M$ has a boundary, interference may happen and $|\Psi_M^\omega\rangle$ is no longer gauge-transformation invariant. Thus we construct a specific boundary condition. We assume $M$ to be a combinatorial $(d+1)$-disk, so $X=\partial M$ is a combinatorial $d$-sphere. We assign the boundary Hilbert space
\begin{equation}
	\mathcal H_X=\mathbb C\bigl[C^{q-1}(X,G)\bigr],
\end{equation}
with basis vectors $|v\rangle$, $v\in C^{q-1}(X,G)$.  We define the boundary-bulk state as
\begin{equation}
	|\Psi_M^\omega\rangle
	={}\sum_{\widetilde{v}\in C^{q-1}(M,G)}
	e^{2\pi\ii\int_M\beta(\widetilde{v})}
	|\dd\widetilde{v}\rangle_{\rm bulk}\otimes |v\rangle .
	\label{eq:bulk-boundary-WZW-state}
\end{equation}
We define the boundary hopping operator by
\begin{equation}
	U(s)|v\rangle
	=e^{-2\pi\ii\int_X L(s,v)}|v+s\rangle .
	\label{eq:boundary-excitation-operator}
\end{equation}
This specific construction implies
\begin{equation}
	\bigl(T^M_{\widetilde s}\otimes U(s)\bigr)
	|\Psi_M^\omega\rangle
	=|\Psi_M^\omega\rangle .
	\label{eq:bulk-boundary-gauge-invariance}
\end{equation}
This is the central holographic picture: a boundary hopping operator is the
boundary remnant of a bulk higher-form gauge transformation; together they form
a gauge-invariant state $|\Psi^\omega_M\rangle$.

Similarly, a \textit{configuration} corresponds to the boundary restriction of bulk gauge potential.

For any
\begin{equation}
	a\in B^q(X,G),
\end{equation}
choose a bulk coboundary extension
\begin{equation}
	\widetilde a\in B^q(M,G),
	\qquad \widetilde a|_X=a .
\end{equation}
Analogous to Eq.~\eqref{eq:bulk-boundary-WZW-state}, we define the configuration state by
\begin{equation}
		\confstate{a}_{\widetilde a}
		=\sum_{\substack{\widetilde{v}\in C^{q-1}(M,G)\\ \dd\widetilde{v}=\widetilde a}}
		e^{2\pi\ii\int_M\beta(\widetilde{v})}|v\rangle .
	\label{eq:configuration-state-beta}
\end{equation}
 Changing
$\widetilde a$ changes only the overall phase convention of $\confstate{a}$.

\subsection{Main Theorems}
\label{maintheorem}

Here we present the main theorem connecting statistics and a WZW term in one-higher dimension. We state the theorem only for
$q$-dimensional excitations, while it automatically generalizes to all Abelian
cases. We only sketch the proof here; the full proof is given in
Appendix~\ref{app:wzw-realization-details}.

\begin{theorem}\label{thm:Realization}
	(\textbf{Standard realization of statistics}) Let $X$ be a combinatorial $d$-sphere. Pick any $M$ as a combinatorial $(d+1)$-disk with $\partial M=X$. Given a cocycle $\omega\in
Z^{d+2}(K(G,q),\mathbb R/\mathbb Z)$ and descendant functions $\Theta,\beta$,
and $L$, the configuration states Eq. \eqref{eq:configuration-state-beta} and
the hopping operators Eq. \eqref{eq:boundary-excitation-operator} \textit{realize} the
excitation complex $m^q(X,G)$. For different choices of $M,\omega,\Theta,\beta,L$, the resulting
statistics $\in T^*(m^q(X,G))$ are the same if and only if they have the same cohomology class
$[\omega]\in H^{d+2}(K(G,q),\RR/\ZZ)$. Thus we have a canonical
embedding\footnote{We believe (Conjecture IV.1 of
Ref.~\cite{Xue2026StatisticsAbelian}) this map is an isomorphism, but we do not
have a proof yet.} \begin{equation}
		H^{d+2}\bigl(K(G,q),\mathbb R/\mathbb Z\bigr)
		\hookrightarrow T^*\bigl(m^q(X,G)\bigr) .
		\label{eq:WZW-subgroup-statistics}
	\end{equation}
	Dually, there is an canonical surjection:
	\begin{equation}\label{eq: map statistical process to homology}
		T\bigl(m^q(X,G)\bigr)\twoheadrightarrow  H_{d+2}\bigl(K(G,q),\mathbb Z\bigr).
	\end{equation}
	They provide a partial classification of statistics and statistical processes.
\end{theorem}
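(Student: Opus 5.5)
The plan is to prove the three parts of the theorem in order: the realization axioms, well-definedness on cohomology classes, and injectivity. The tool throughout is the descent identity \eqref{eq:L-physical-identity} together with locality of the cochain operations $\Theta,\beta,L$.

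\textbf{Configuration axiom.} Take $a\in B^q(X,G)$ with bulk extension $\widetilde a$, and $s\in S$ with an extension $\widetilde s$ to $M$.
\begin{itemize}
\item Apply $U(s)$ to $\confstate{a}_{\widetilde a}$ and change summation variable $\widetilde v\mapsto \widetilde v+\widetilde s$. This is a bijection from $\{\dd\widetilde v=\widetilde a\}$ onto $\{\dd\widetilde v=\widetilde a+\dd\widetilde s\}$.
\item By \eqref{eq:L-physical-identity}, $e^{2\pi\ii\int_M\beta(\widetilde v)}e^{-2\pi\ii\int_X L(s,v)}=e^{2\pi\ii\int_M\beta(\widetilde v+\widetilde s)}\,e^{-2\pi\ii\int_M\Theta(\widetilde s,\widetilde a)}$.
\item The last factor is independent of $\widetilde v$ because $\dd\widetilde v=\widetilde a$ is fixed. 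Hence $U(s)\confstate{a}_{\widetilde a}=e^{\ii\theta}\confstate{a+\dd s}_{\widetilde a+\dd\widetilde s}$.
\item Changing the extension $\widetilde a$ only rephases the state, so normalized states $|a\rangle$ can be fixed once and for all.
\end{itemize}

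\textbf{Locality axiom.} Every $U(s)$ is a shift $|v\rangle\mapsto|v+s\rangle$ times a diagonal phase. Therefore a $k$-fold nested commutator of the $U(s_i)$ is diagonal in the basis $|v\rangle$, with phase
\begin{equation*}
\exp\Bigl(-2\pi\ii\int_X \Delta_{s_k}\cdots\Delta_{s_1}\Lambda(v)\Bigr).
\end{equation*}
Here $\Delta_s F(v)=F(v+s)-F(v)$ is a finite difference, and $\Lambda$ is a local expression built from $L$ and its shifts.
\begin{itemize}
\item Since $L$ is a local cochain operation, the value on each $d$-simplex $\sigma$ depends only on the restrictions of $s_i$ and $v$ to $\sigma$.
\item If no $d$-simplex contains all the $\alpha_i$, then on every $\sigma$ some $s_i|_\sigma=0$, and the corresponding difference $\Delta_{s_i}$ kills that term.
\item So the nested commutator is $1$, which is exactly our definition of disjoint supports in $m^q(X,G)$.
\end{itemize}

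\textbf{Dependence only on $[\omega]$.} Change the data $\omega\to\omega+\dd\eta$, or change $\Theta,\beta,L$ by exact local terms, or change $M$.
\begin{itemize}
\item For changes of $\omega$ and of the descendants, I would show that $L$ shifts by $\Delta_s\gamma(v)$ plus an exact term, for some local $\gamma$. The operators and states are then conjugated by the diagonal phase $e^{2\pi\ii\int_X\gamma(v)}$ and globally rephased. Statistical phases $\langle a|U(P)|a\rangle$ are invariant under this.
\item Changing $M$ affects only the phases $\theta$ through $\Theta(\widetilde s,\widetilde a)$, and those are fixed by the boundary data.
\end{itemize}

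To organize both this step and injectivity, I would prove an evaluation formula. For a statistical process $P$, the sequence of boundary potentials $v_0\to v_0+s_1\to\cdots\to v_0$ forms a loop. Using the prism construction behind $\Theta$ and $L$, it assembles into a $q$-cocycle $f_P$ on a closed $(d+2)$-complex $W_P$, obtained by gluing $M\times$(subdivided circle) to cones over $X$. The target identity is
\begin{equation*}
\langle a|U(P)|a\rangle=\exp\Bigl(2\pi\ii\int_{W_P}\omega(f_P)\Bigr).
\end{equation*}
The class $h(P)=(f_P)_*[W_P]\in H_{d+2}(K(G,q),\ZZ)$ then determines the phase via $\langle[\omega],h(P)\rangle$. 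This gives the map $T(m^q)\to H_{d+2}$, and shows that equal classes $[\omega]$ give equal statistics.

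\textbf{Injectivity (main obstacle).} Injectivity of $H^{d+2}\to T^*$ is dual to surjectivity of $P\mapsto h(P)$. I expect this to be the hardest step. It requires that every class in $H_{d+2}(K(G,q),\ZZ)$ be represented by a hopping process on the fixed sphere $X$.
\begin{itemize}
\item First reformulation: processes are words in the free group on $S$ with $s,s^{-1}$ paired. So they should correspond to $H_2$ of the (central extension by phases of the) group $C^{q-1}(X,G)$ acting on configurations.
\item Second reformulation: this $H_2$ should surject onto $H_2$ of the component of $\mathrm{Map}(S^d,K(G,q))$ modeled by cocycles on $X$.
\item Third step: compose with the slant/transgression map $H_2(\mathrm{Map}(S^d,K))\to H_{d+2}(K)$. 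Surjectivity here would come from the cell structure of $S^d\times T^2$, using the bulk disk $M$ to cap off.
\end{itemize}
If this topological route gets stuck, the fallback is to check the duality on generators. I would compute $H_{d+2}(K(G,q),\ZZ)$ explicitly, using Eilenberg--MacLane computations and the Künneth formula for $G=\prod\ZZ_{n}$. For each generator I would then construct explicit local processes, for example T-junction-like words built from $s\in X_{q-1}$ inside a single $(d+1)$-simplex neighborhood, whose $h(P)$ hits that generator.
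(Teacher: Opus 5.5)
Your arguments for the configuration and locality axioms match the paper's: Stokes' theorem plus the change of variables $\widetilde v\mapsto\widetilde v+\widetilde s$, and nested commutators as finite differences of the local phase $L$ that die on any $d$-simplex where some $s_i$ vanishes. The genuine gap is injectivity, and the route you propose for it cannot work as stated.

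Configurations are coboundaries, so the relevant part of $\mathrm{Map}(S^d,K(G,q))$ is the null component. For $q\le d$ the null component of $\Omega^dK(G,q)$ is contractible, so this component is equivalent to $K(G,q)$ through constant maps. Restricted along that equivalence, the evaluation map is just the projection $S^d\times K(G,q)\to K(G,q)$. Hence the transgression $x\mapsto\mathrm{ev}_*([S^d]\times x)$ vanishes identically on $H_2$, and anything factored through it is zero. Yet $H_{d+2}(K(G,q),\ZZ)$ is typically nonzero (e.g.\ $\ZZ_4$ for $G=\ZZ_2$, $q=d=2$). Equivalently, every $q$-cocycle on $S^d\times\Sigma$ in the coboundary sector factors through $\Sigma$ up to homotopy, so $\int\omega=0$ there. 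The statistics arise only from the bulk disk entering the configuration states, which is exactly what your unspecified $W_P$ and unproved evaluation formula would have to capture. The fallback (computing $H_{d+2}(K(G,q),\ZZ)$ in general and building a detecting process per generator on an arbitrary $X$) is a program, not an argument. Note also that $X$ has no $(d+1)$-simplices to localize in.

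The missing idea is to reduce to the one case already settled.
\begin{itemize}
\item For $X=\partial\Delta^{d+1}$, $M=\Delta^{d+1}$, all cochains of degree $\le d$ extend uniquely. So the construction coincides with the realization of Ref.~\cite{Xue2026StatisticsAbelian}, where $T^*\cong H^{d+2}(K(G,q),\RZ)$ and every class is detected by a process $P_0$ whose labels all lie in one $d$-simplex $\sigma_0$.
\item For general $X$, pick a $d$-simplex $\sigma$ and the simplicial map $r:X\to\partial\Delta^{d+1}$ that sends $\sigma$ isomorphically onto $\sigma_0$ and collapses every other vertex to the vertex opposite $\sigma_0$.
\item The pulled-back labels $r^*s_i$ are again elementary labels in $\sigma$. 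Naturality of $L$ gives $\ell_i^\sigma=r^*\ell_i$, and $r$ has degree one, so $\Phi_\omega(P_\sigma;X)=\Phi_\omega(P_0;\partial\Delta^{d+1})\ne1$.
\end{itemize}
No computation of $H_{d+2}$ and no geometric $h(P)$ is needed. The surjection onto $H_{d+2}(K(G,q),\ZZ)$ is just the Pontryagin dual of the injection.

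Two smaller steps also need repair.

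First, you assert that changing $\widetilde a$ or $M$ only rephases $\confstate{a}$. This is exactly where $X\simeq S^d$ is used, and it needs proof. One finds a bulk correction vanishing on $X$, which is possible because the relevant cocycle on $X$ is exact. Then one glues the two copies of $M$ into $\widehat M\simeq S^{d+1}$ and shows by a prism argument that the relative phase $\exp(2\pi\ii\int_{\widehat M}\beta(\widehat v))$ is independent of $\widehat v$.

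Second, and more seriously, your claim that a change of representative shifts $L$ by $\Delta_s\gamma(v)$ plus an exact term is false. In general
\begin{equation*}
L'=L+\rho(s,\dd v)+\rho(v,0)-\rho(s+v,0)+\dd\chi,
\end{equation*}
and $\rho(s,\dd v)$ is not a finite difference in $v$. The $\ZZ_2$ fermion example ($\omega=\tfrac12 f^2$) shows this. There $L=\tfrac12(vs+\dd v\hcup{1}s)$ and $L'=\tfrac12 vs$ descend from the same $\omega$ but give different commutators $[U(s),U(t)]$. So no diagonal conjugation plus global rephasing can relate them. The leftover factor $e^{-2\pi\ii\int_X\rho(s,\dd c)}$ depends only on $s$ and the configuration $\dd c$, so it acts as a scalar on every configuration state. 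You must invoke operator independence (equivalently, that such factors lie in the continuous part of $R(m)$) to conclude it leaves the statistics unchanged.
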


\begin{remark}\label{remark: sphere is important}
	$X$ being a combinatorial manifold is a sufficient condition that the locality axiom works correctly. It is crucial that $|X|\simeq S^d$ be a sphere rather than a generic closed manifold. For a counterexample, suppose there is a bulk $M$ satisfying $X=\partial M$, $H^{q-1}(M,G)=0$, and $H^{q-1}(X,G)\ne 0$. Taking $\widetilde{a}=0$ in Eq.~\eqref{eq:configuration-state-beta}, we have
	\begin{equation}
		\confstate{a}_0=\sum_{\widetilde{v}\in Z^{q-1}(M,G)}e^{2\pi i \int_M\beta(\widetilde{v})}|v\rangle.
	\end{equation}
	Because $\widetilde{v}\in B^{q-1}(M,G)$, the boundary restriction $v\in B^{q-1}(X,G)$ is also in the coboundary sector. However, if we apply $U(s)$ for a nontrivial $s\in Z^{q-1}(X,G)$, the resulting cocycle $s+v$ is also nontrivial. Thus $U(s)\confstate{a}_0$ is not proportional to $\confstate{a}_0$, a contradiction.
	
	The mechanism of this phenomenon is similar to the topological ground-state degeneracy of the toric code. One should not view this counterexample as an instance where statistics imply higher-form symmetry breaking, since this counterexample also works when $\omega=0$. It only shows the failure of this particular realization.
\end{remark}

\noindent\textit{Sketch of proof.}
The configuration axiom follows by extending $s$ to
$\widetilde s$ on $M$ and using Eq.~\eqref{eq:L-descent}: the operator $U(s)$
changes the summation variable from $\widetilde{v}$ to $\widetilde{v}+\widetilde s$ and hence sends
$\confstate{a}$ to $\confstate{a+\dd s}$, up to the phase dictated by $\Theta$.  Locality
follows because $U(s)$ is a translation of $c$ followed by a phase determined by
the local cochain operation $L(s,c)$.  A nested commutator is a finite
difference of this local phase, supported only where all the dual supports of
the involved simplex labels meet, and the condition $L(0,v)=0$ is used.  If that common intersection is empty, the
nested commutator is the identity. It can be shown that when $[\omega]$ is fixed, different choices of data only lead to local or unimportant changes, so the statistics remain the same.

To prove that different cohomology classes $[\omega]\in H^{d+2}(K(G,q),\RR/\ZZ)$
correspond to different statistics in $T^*(m^q(X,G))$, it is enough to prove
that $[\omega]\ne 0$ implies the statistics is nontrivial. To do so, we
construct a statistical process $P$ of $m^q(X,G)$ and prove the
corresponding statistical phase $\Phi_\omega(P)\in U(1)$ is nontrivial. When
$X\simeq \partial \Delta^{d+1}$, the existence of such a statistical process
$P_0$ has already been proved in Ref.~\cite{Xue2026StatisticsAbelian}. The critical
argument is that $P_0$ can be transferred to any generic combinatorial
$d$-sphere $X$, producing a statistical process $P_\sigma$ of $m^q(X,G)$, local at a specific $d$-simplex $\sigma$. It
can be shown that
\begin{equation}
	\Phi_\omega(P_\sigma)=\Phi_\omega(P_0)\ne1,
\end{equation}
so the realization given by $\omega$ indeed has nontrivial statistics.

The existence of $P_\sigma$ is also physically meaningful: it says that any statistics in the classification $H^{d+2}(K(G,q),\RZ)$ can be detected by a local statistical process near $\sigma$.

\begin{theorem}
	(\textbf{Local detectability of statistics}) Let $X$ be a combinatorial $d$-sphere and $\sigma$ is one of its $d$-simplex. Any element in $H_{d+2}(K(G,q),\ZZ)$ is the image of some statistical process
	\begin{equation}
		P_\sigma=s_n^{\pm}\cdots s_1^{\pm}
	\end{equation}
	under the map Eq.~\eqref{eq: map statistical process to homology}, such that $\alpha_i\subset \sigma$ holds for any $s_i=(g_i,\alpha_i)\in G_0\times X_{q-1}$.
\end{theorem}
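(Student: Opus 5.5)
We will prove the local detectability theorem by transferring the statistical processes on the minimal sphere $\partial\Delta^{d+1}$ to a neighborhood of $\sigma$, and then using the surjection \eqref{eq: map statistical process to homology} of Theorem~\ref{thm:Realization}. Let $x\in H_{d+2}(K(G,q),\ZZ)$. By Theorem~\ref{thm:Realization} (whose proof is the one that introduces $P_0$) together with the result of Ref.~\cite{Xue2026StatisticsAbelian} for $X_0=\partial\Delta^{d+1}$, there is a statistical process $P_0$ of $m^q(X_0,G)$ whose image in $H_{d+2}(K(G,q),\ZZ)$ is $x$. Equivalently, the pairing $[\omega]\mapsto\Phi_\omega(P_0)$ equals $\langle[\omega],x\rangle$ for every $[\omega]\in H^{d+2}(K(G,q),\RZ)$. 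Since $H^{d+2}(K(G,q),\RZ)=\hom(H_{d+2}(K(G,q),\ZZ),\RZ)$ (the Ext term vanishes for the divisible group $\RZ$), it suffices to build $P_\sigma$ with $\Phi_\omega(P_\sigma)=\Phi_\omega(P_0)$ for all standard realizations $\omega$.

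\textbf{Transfer map.} Write $\Delta^{d+1}$ as the cone $\sigma * v_\infty$ on the $d$-simplex $\sigma=\langle v_0\cdots v_d\rangle$. Then $X_0=\partial\Delta^{d+1}=\sigma\cup(\partial\sigma * v_\infty)$. There is a simplicial map $\pi:X\to X_0$ that is the identity on $\sigma$ and sends every vertex of $X$ outside $\sigma$ to $v_\infty$. This map is simplicial because any simplex of $X$ maps to a face of some $\tau * v_\infty$, or to $\sigma$ itself. Pullback $\pi^*:C^{q-1}(X_0,G)\to C^{q-1}(X,G)$ commutes with $\dd$.

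For each elementary label $s=(g,\alpha)$ of $X_0$, $\pi^*s$ is a sum of elementary labels of $X$. If $\alpha\subset\sigma$, then $\pi^*s$ is simply $(g,\alpha)$ viewed in $X$. The labels containing $v_\infty$, however, pull back to nonlocal sums. To avoid this, I would first show that $P_0$ can be chosen using only labels $\alpha\subset\sigma$. The alternative is to use the initial-state and operator independence theorems to replace each $\pi^*$-label by a product of hoppings inside $\sigma$ with the same coboundary restricted near $\sigma$.

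The cleaner route is to modify $P_0$ directly. The support condition in $X_0$ says that labels lying in a common $d$-simplex may fail to commute. Labels inside $\sigma$ all lie in $\sigma$, which is already the maximally noncommuting case. So I would redefine the word $P_\sigma$ by keeping only the factors of $P_0$ with $\alpha\subset\sigma$ in their original order. Because each $s$ appears paired with $s^{-1}$ in $P_0$, the same pairing holds in $P_\sigma$, so $P_\sigma$ is a legitimate process word.

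\textbf{Phase comparison.} Next I compute $\Phi_\omega$ for the standard realization on both spheres. By Eq.~\eqref{eq:boundary-excitation-operator}, $U(P)$ acts on $|v\rangle$ by the phase $-2\pi\ii\sum_j\int L(\pm s_j,v_j)$, where $v_j$ are the partial sums. Since $L$ is a local cochain operation, its value on a $d$-simplex $\tau$ depends only on the restrictions of its arguments to $\tau$. Choose the initial configuration $v=0$, which is allowed by initial-state independence. Then the only simplices where all the arguments are nonzero lie in the closed star of the supports.

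Using the locality axiom and the nested-commutator structure, the total phase of $P_0$ localizes to $d$-simplices containing the relevant labels. The labels involving $v_\infty$ then cancel in pairs, because their contributions are finite differences that vanish when their supports do not share a simplex with the remaining factors. This is exactly the computation in the sketch of Theorem~\ref{thm:Realization} that yields $\Phi_\omega(P_\sigma)=\Phi_\omega(P_0)$. Also, the phase on $\sigma$ is computed by the same formula in $X$ and in $X_0$, since $L$ is local and the data on $\sigma$ agree.

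\textbf{Main obstacle.} The hard step is justifying that truncating $P_0$ to the labels inside $\sigma$, or pulling it back via $\pi^*$, does not change the phase. The difficulty is that labels through $v_\infty$ may genuinely contribute on simplices $\tau * v_\infty$ adjacent to $\sigma$. I would first try a direct cancellation argument. Each such contribution is an integral over a $d$-simplex other than $\sigma$, and these contributions should pair up and cancel because $P_0$ returns every partial configuration to its start. If this fails, I would instead invoke the first half of Theorem~\ref{thm:Realization}: its statement that the statistics depend only on $[\omega]$ lets me choose a cocycle representative supported so that only $\sigma$ contributes. A natural choice is one normalized to vanish on degenerate simplices, combined with $\pi$, which collapses everything outside $\sigma$ to a degenerate configuration.
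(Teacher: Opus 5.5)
There is a genuine gap. Your overall architecture matches the paper's: the collapse map $\pi$ (exactly the paper's $r$), comparing phases in the standard realization, and using $H^{d+2}(K(G,q),\RZ)=\hom(H_{d+2}(K(G,q),\ZZ),\RZ)$. But the step that carries the proof is missing, and the substitute you offer is false.

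The paper's argument rests on a localization result on the minimal sphere. Every element of $T(m^q(\partial\Delta^{d+1},G))\simeq H_{d+2}(K(G,q),\ZZ)$ is represented by a process $P_0$ \emph{all} of whose labels satisfy $\alpha\subset\sigma_0$. This is the cochain version of Theorem VI.12 of Ref.~\cite{Xue2026StatisticsAbelian}. You name this as something you ``would first show'' but never supply it. With it in hand, nothing needs to cancel:
\begin{itemize}
\item Since $\alpha$ avoids $v_\infty$, $\pi^*(g,\alpha)$ is the same elementary label on $\sigma$, because normalized cochains vanish on degenerate images.
\item The partial sums satisfy $v_i^\sigma=\pi^*v_i$, and naturality of $L$ gives $\ell_i^\sigma=\pi^*\ell_i$.
\item $\pi$ has degree one: $\sigma$ and its $d+1$ neighbours map bijectively onto the $d+2$ top simplices of $\partial\Delta^{d+1}$, and every other $d$-simplex is collapsed. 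Hence $\int_X\pi^*\ell_i=\int_{\partial\Delta^{d+1}}\ell_i$.
\end{itemize}
Therefore $\Phi_\omega(P_\sigma;X)=\Phi_\omega(P_0;\partial\Delta^{d+1})$ for every $\omega$.

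The truncation you propose instead does not preserve the phase. Your claim that contributions of labels through $v_\infty$ ``pair up and cancel'' is false. Take $d=q=2$, $G=\ZZ_2$, $\omega=\frac12 x_2^2$ (the fermion), $X_0=\partial\Delta^3$, $\sigma_0=[012]$ and $v_\infty=3$. Consider the T-junction centred on the triangle $[013]$, with arms along the edges $01$, $03$, $13$ to the other three triangles. It is a statistical process with phase $-1$, by the computation of Section~\ref{subsec: complex fermions}, since each pair of arms overlaps only on the centre. Deleting the factors labelled $03$ and $13$ leaves $s_{01}^{\pm1}s_{01}^{\mp1}$, whose phase is $+1$. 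The statistical phase is carried precisely by the labels through $v_\infty$.

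Your fallbacks do not repair this.
\begin{itemize}
\item By Theorem~\ref{thm:Realization}, the phase of a fixed process depends only on $[\omega]$. So choosing a different cocycle representative cannot make a truncated, i.e.\ different, process reproduce the original phase.
\item Pulling back labels that contain $v_\infty$ gives cochains supported on simplices $\beta*w$ with $w\notin\sigma$. These are neither elementary labels nor inside $\sigma$.
\item Keeping $s$ and $s^{-1}$ paired does not by itself make a word a statistical process. The word must also evaluate trivially on the continuous subgroup, which you have not checked for the truncated word.
\end{itemize}
The localization on $\partial\Delta^{d+1}$ is therefore not a technicality to bypass but the key input, and you must either prove it or invoke it.
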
 

\subsection{Statistics versus symmetry anomaly}\label{anosta}

In our realization of statistics, the central construction is the formula of hopping operator
\begin{equation*}
	U(s)|v\rangle
	=e^{-2\pi\ii\int_X L(s,v)}|v+s\rangle.
\end{equation*}

$v$ and $s$ in this formula are both in $C^{q-1}(X,G)$ but have different positions. We perform a strange trick: we exchange their positions and then change the notation from $s$ to $\gamma$ for $\gamma\in Z^{q-1}(X,G)$, defining a very similar family of operators

\begin{equation}
	\mathcal{S}(\gamma)|v\rangle
	=e^{-2\pi\ii\int_X L(v,\gamma)}|v+\gamma\rangle.
\end{equation}
This seems to be physical meaningless, but what we actually do is to construct global symmetry transformations. In addition to Eq.~\eqref{eq:L-descent} and \eqref{eq:L-descent1}, we impose another normalization condition so that $\mathcal{S}(\gamma)=1$:
\begin{equation}
	L(v,0)=0,\quad\forall v.
\end{equation}
This can always be done by replacing $L(v,s)$ by $L(v,s)-L(v,0)$. Then, we have the following theorem, proved in Appendix~\ref{appendix: proof of anomalous symmetry}.

\begin{theorem}\label{thm:Sgamma}
	(\textbf{Hopping operators commute with anomalous symmetry}) Let $\gamma,\gamma'\in Z^{q-1}(X,G)$. Then
	\begin{itemize}
		\item $\mathcal S(\gamma)\mathcal S(\gamma')\propto
		\mathcal S(\gamma+\gamma')$.
		\item $\mathcal S(\gamma)$ commutes with
		$\mathcal S(\gamma')$ if either $\gamma$ or $\gamma'$ is a coboundary.
		\item $U(s)$ commutes with $\mathcal S(\gamma)$ for every
		$s\in C^{q-1}(X,G)$.
		\item If $X$ is a combinatorial sphere, then there exists
		$\phi(\gamma)\in\RZ$ such that
		\begin{equation}
			\mathcal S(\gamma)\confstate{a}
			=e^{2\pi\ii\phi(\gamma)}\confstate{a},
			\qquad \forall a\in B^q(X,G).
		\end{equation}
		Hence all configuration states lie in the same $\mathcal S(\gamma)$
		eigenspace.
	\end{itemize}
\end{theorem}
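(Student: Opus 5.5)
The plan rests on one descent identity obtained by putting a cocycle in the \emph{second} slot of $L$. For $\gamma\in Z^{q-1}$ we have $\dd\gamma=0$ and $\Theta(v,0)=\beta(v)$, so Eq.~\eqref{eq:L-descent} gives
\begin{equation*}
\dd L(v,\gamma)=\beta(v)+\beta(\gamma)-\beta(v+\gamma)=:B(v,\gamma).
\end{equation*}
Every phase we need to control will be written as $\int_X D$ for a local cochain expression $D$. We then check, purely algebraically from Eqs.~\eqref{eq:Theta-descent}--\eqref{eq:L-descent1} and the identity above, that $\dd D$ vanishes or is independent of the variable of interest. Finally we turn $\int_X$ into $\int_M\dd(\cdots)$ by extending all arguments into the disk $M$.

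The extension step needs every cocycle argument $\gamma$ to extend to a bulk \emph{cocycle} $\widetilde\gamma\in Z^{q-1}(M,G)$. Since $X$ is a combinatorial sphere, this holds in the cases I can see directly. For $0<q-1<d$, $\gamma$ is exact, $\gamma=\dd\lambda$, so $\widetilde\gamma=\dd\widetilde\lambda$ works. For $q-1=0$, $\gamma$ is constant and extends as a constant. The top-degree case $q-1=d$ is the step I expect to be the main obstacle. There I would try either to argue that it is degenerate (no nontrivial configurations), or to extend $\gamma$ non-closedly and track the extra $\dd\widetilde\gamma$ terms. Granting the extension, the general principle is: if $D(\widetilde s,\widetilde v,\widetilde\gamma)$ is a local expression with $\dd D=0$ identically whenever $\dd\widetilde\gamma=0$, then $\int_X D=\int_M\dd D=0$.

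\textbf{First and second items.} Compute $\mathcal S(\gamma')\mathcal S(\gamma)|v\rangle$. Its phase relative to $\mathcal S(\gamma+\gamma')|v\rangle$ is $-2\pi\ii\int_X E(v)$, where
\begin{equation*}
E(v)=L(v,\gamma)+L(v+\gamma,\gamma')-L(v,\gamma+\gamma').
\end{equation*}
Expanding with $B$ gives $\dd E(v)=\beta(\gamma)+\beta(\gamma')-\beta(\gamma+\gamma')$, which is independent of $v$. Hence $E(v)-E(0)$ is closed. By the extension principle (extend $v$ arbitrarily and $\gamma,\gamma'$ as cocycles), $\int_X E(v)=\int_X E(0)$, so the operators are proportional. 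The commutator phase is $\int_X\bigl(E_{\gamma,\gamma'}-E_{\gamma',\gamma}\bigr)$. By the previous step it is a constant $c(\gamma,\gamma')$. To show it vanishes when $\gamma'=\dd\lambda$, I would instead extend $\lambda$ to $\widetilde\lambda$ on $M$. Then the difference becomes $\int_M$ of a coboundary of a local expression in $\widetilde\gamma$ and $\dd\widetilde\lambda$, which is zero. In the exceptional degree only $\gamma'$ needs to extend, and it does because it is exact.

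\textbf{Third item.} Comparing $U(s)\mathcal S(\gamma)|v\rangle$ with $\mathcal S(\gamma)U(s)|v\rangle$ gives the phase difference $\int_X D$, where
\begin{equation*}
D=L(v,\gamma)+L(s,v+\gamma)-L(s,v)-L(v+s,\gamma).
\end{equation*}
Using Eq.~\eqref{eq:L-descent} for the middle terms and $B$ for the outer ones, all $\Theta$ and $\beta$ terms cancel, so $\dd D=0$ identically. Extending $s$ and $v$ arbitrarily and $\gamma$ as a cocycle, we get $\int_X D=\int_M\dd\widetilde D=0$.

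\textbf{Fourth item.} I would compute directly from Eq.~\eqref{eq:configuration-state-beta}. Fix the cocycle extension $\widetilde\gamma$. For each summand $\widetilde v$ with $\dd\widetilde v=\widetilde a$, integrate $\dd L(\widetilde v,\widetilde\gamma)=B(\widetilde v,\widetilde\gamma)$ over $M$:
\begin{equation*}
\int_X L(v,\gamma)=\int_M\bigl[\beta(\widetilde v)+\beta(\widetilde\gamma)-\beta(\widetilde v+\widetilde\gamma)\bigr].
\end{equation*}
Therefore $\mathcal S(\gamma)$ sends $e^{2\pi\ii\int_M\beta(\widetilde v)}|v\rangle$ to $e^{2\pi\ii\int_M\beta(\widetilde v+\widetilde\gamma)}e^{-2\pi\ii\int_M\beta(\widetilde\gamma)}|v+\gamma\rangle$. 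Since $\widetilde v\mapsto\widetilde v+\widetilde\gamma$ is a bijection of the summation set (because $\dd\widetilde\gamma=0$), we obtain $\mathcal S(\gamma)\confstate{a}=e^{2\pi\ii\phi(\gamma)}\confstate{a}$ with
\begin{equation*}
\phi(\gamma)=-\int_M\beta(\widetilde\gamma),
\end{equation*}
which is manifestly independent of $a$. This is consistent with the third item, since the $U(s)$ act transitively on configuration states. Independence of the choice of $\widetilde\gamma$ follows from the same computation, because the left-hand side does not depend on it.
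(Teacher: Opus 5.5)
Your descent identities ($\dd E$ independent of $v$, $\dd D=0$, closedness of $L(\gamma,\dd\lambda)-L(\dd\lambda,\gamma)$) are the paper's. So is your fourth item: closed extension $\widetilde\gamma$, shift $\widetilde v\mapsto\widetilde v+\widetilde\gamma$, and $\phi(\gamma)=-\int_M\beta(\widetilde\gamma)$.

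The gap is in how you make $\int_X$ of these cochains vanish for the first three items. You use Stokes on the filling disk $M$. That needs every cocycle argument to have a \emph{closed} extension over $M$, which you get only by assuming $X$ is a sphere. The theorem attaches the sphere hypothesis only to the fourth item; the first three are asserted for general closed $X$. That general case is where non-exact $\gamma$ exist. On a sphere with $q>1$ every $\gamma$ is exact, so the second item would just say all $\mathcal S(\gamma)$ commute. For non-spherical $X$ your step fails: $X$ need not bound anything. Even when $X=\partial M$, a cocycle whose class is not in the image of $H^{q-1}(M,G)\to H^{q-1}(X,G)$ has no closed extension over $M$.

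The missing idea is the paper's elimination lemma, which uses the cylinder $Y=X\times I$ instead of a filling. Extend each arbitrary cochain $c_i$ to a cochain on $Y$ that vanishes on $X\times 0$ and restricts to $c_i$ on $X\times 1$. Extend each cocycle by pullback $\pi^*\gamma_j$, which is automatically closed. If $\dd\mathcal O(c_i;\gamma_j)=\mathcal Q(\gamma_j)$ with no dependence on the $c_i$ or the $\dd c_i$, then
\begin{equation*}
\int_X\mathcal O(c_i;\gamma_j)-\int_X\mathcal O(0;\gamma_j)=\int_Y\mathcal Q(\pi^*\gamma_j)=0,
\end{equation*}
because $\mathcal Q(\pi^*\gamma_j)$ is pulled back from the $d$-dimensional $X$.

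Combined with the normalizations $L(0,v)=L(v,0)=0$, this gives, on any closed combinatorial manifold:
\begin{itemize}
\item $\int_X E(v)=\int_X E(0)=\int_X L(\gamma,\gamma')$;
\item $\int_X D(s,v;\gamma)=\int_X D(0,0;\gamma)=0$;
\item for $\gamma'=\dd b$, eliminating $b$: $\int_X[L(\gamma,\dd b)-L(\dd b,\gamma)]=\int_X[L(\gamma,0)-L(0,\gamma)]=0$.
\end{itemize}
Only the fourth item genuinely needs the disk.

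Your worry about the top degree $q-1=d$ is moot, since the construction assumes $1\le q\le d$.
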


\begin{remark}
	
	These $\mathcal{S}(\gamma)$ are the the anomalous symmetry associated to $[\omega]\in H^{d+2}(K(G,q),\RZ)$, and hopping operators $U(s)$ are local symmetric operators. One may feel that they are almost the same; however, their properties seem to be completely different; for example, $\{\mathcal{S}(s)\}$ are transformation patch operators \cite{CW220303596} but they cannot be treated as hopping operators, since they may not satisfy the configuration axiom.
	
	One concrete example can show the different nature between two families of operators. Consider $\omega(f_1)=\frac{1}{2}f_1^3$ that represents the nontrivial element of $H^3(B\ZZ_2,\RZ)\simeq \ZZ_2$. By solving these equations, we get
	\begin{equation}
		L(s,v)=\frac{1}{2}v(\dd v s+s \dd v+s\dd s),
	\end{equation}
	where $v$ and $s$ are both $0$-cochains. When the space $X$ is a circle, we use $\mathbf{1}$ to denote the constant field: $\mathbf{1}:11\cdots 1\in Z^1(X,\ZZ_2)$. Then we have
	\begin{equation}
		\mathcal{S}(\mathbf{1})|v\rangle=(-1)^{\int_X v\dd v}|v+\mathbf{1}\rangle.
	\end{equation}
	This is exactly the anomalous $\ZZ_2$ symmetry in one dimension. In contrast, we have
	\begin{equation}
		U(\mathbf{1})|v\rangle=|v+\mathbf{1}\rangle,
	\end{equation}
	which is \textit{not} the corresponding anomalous symmetry. In general, the operator independence theorem \cite{Xue2026StatisticsAbelian} allows any choice of hopping operators as long as they satisfy the configuration axiom, so $U(s)$ may not be like symmetry transformation inside $\supp(s)$.
	 Thus although $U(s)$ and $\mathcal{S}(\gamma)$ have very similar formula, their roles are completely different: $\mathcal{S}(\gamma)$ are symmetry transformations, while $U(s)$ are hopping operators in LOsA encoding the conservation law.
	
\end{remark}

Previously, we realize statistics at the boundary of a higher-form gauge theory twisted by cocycle $\omega\in Z^{d+2}(K(G,q),\RZ)$. In parallel, we realize this anomalous symmetry at the boundary symmetry transformation of a symmetric protected topological (SPT) phase. 

Let $M^{d+1}$ be a closed spatial manifold. In gauge theory, the WZW state in the exact sector is given by
\begin{equation}
	\sum_{\widetilde{v}\in C^{q-1}(M,G)}e^{2\pi\ii \int_M\beta(\widetilde{v})}|\dd\widetilde{v}\rangle.
\end{equation}
Here $\widetilde{v}$ represents a gauge transformation from the trivial gauge potential $0$, and $d\widetilde{v}$ is the gauge potential after the gauge transformation. The procedure to construct the SPT phase corresponding to a gauge theory, called ungauging, is simply to replace the gauge potential $d\widetilde{v}$ by the gauge transformation $\widetilde{v}$:

\begin{equation}
	|\psi_{\mathrm{SPT}}\rangle=\sum_{\widetilde{v}\in C^{q-1}(M,G)}e^{2\pi\ii \int_M\beta(\widetilde{v})}|\widetilde{v}\rangle.
\end{equation}
 We define a global symmetry by
\begin{equation}
	\mathcal{S}_{\mathrm{bulk}}|\widetilde{v}\rangle=|\widetilde{v}+\widetilde{\gamma}\rangle.
\end{equation}
For closed $M$, $|\psi_{\mathrm{SPT}}\rangle$ is indeed symmetric:
\begin{equation}
	\begin{aligned}
		&\mathcal{S}_{\mathrm{bulk}}(\widetilde{\gamma})|\psi_{\mathrm{SPT}}\rangle\\
		&=\sum_{\widetilde{v}\in C^{q-1}(M,G)}e^{2\pi\ii \int_M\beta(\widetilde{v})}|\widetilde{v}+\widetilde{\gamma}\rangle\\
		&=\sum_{\widetilde{v}\in C^{q-1}(M,G)}e^{2\pi\ii \int_M\beta(\widetilde{v}+\widetilde{\gamma})-\beta(\widetilde{\gamma})}|\widetilde{v}+\widetilde{\gamma}\rangle\\
		&=e^{-2\pi \ii \int_M\beta(\widetilde{\gamma})}|\psi_{\mathrm{SPT}}\rangle.
	\end{aligned}
\end{equation}

However, if we want to make $|\psi_{\mathrm{SPT}}\rangle$ symmetric when $\partial M\ne \emptyset$, we should to replace the boundary-bulk symmetry by
\begin{equation}
	\mathcal{S}_{\mathrm{boundary-bulk}}(\widetilde{\gamma})|\widetilde{v}\rangle=e^{-2\pi\ii\int_{\partial M}L(v,\gamma)}|\widetilde{v}+\widetilde{\gamma}\rangle.
\end{equation}

Then it can be verified that
\begin{equation}
	\mathcal{S}_{\mathrm{boundary-bulk}}(\widetilde{\gamma})|\psi_{\mathrm{SPT}}\rangle=e^{-2\pi \ii \int_M\beta(\widetilde{\gamma})}|\psi_{\mathrm{SPT}}\rangle.
\end{equation}
That is why we say $\mathcal{S}(\gamma)$ is an anomalous symmetry.
In other words, a symmetry anomaly manifests itself as non-trivial
statistics of symmetry defects \cite{LG1209,BZ14104540,W181202517}. We make Table~\ref{table: comparison of two perspectives} to compare the two perspectives, though we are not very sure about it.

\begin{table*}[htbp]
	\centering
	\renewcommand{\arraystretch}{1.35}
	\begin{tabular}{c|c|c}
		\hline
		\textbf{mathematics level}
		&
		\textbf{higher group \(\mathcal{G}\)}
		&
		\textbf{excitation complex \(m\)}
		\\
		\hline
		\textbf{kinematics level}
		&
		 symmetry defects of $\mathcal{G}$ global symmetry
		&
		hopping operator algebra (realization of \(m\))
		\\
		\hline
		\textbf{bulk counterpart}
		&
		symmetry of SPT phase
		&
		\shortstack{gauge transformation of\\Dijkgraaf-Witten theory}
		\\
		\hline
		\textbf{invariant objects}
		&
		\shortstack{algebra of\\transformation patch operator}
		&
		statistics of hopping operator
		\\
		\hline
		\textbf{allowed perturbation }
		&
		\shortstack{local symmetric operator\\ on the boundary of the patch}
		 
		&
		\shortstack{hopping operators\\respecting the configuration axiom}
		
		\\
		\hline
		\textbf{obstruction to}
		&
		onsiteness
		&
		condensation
		\\
		\hline
		\textbf{detection}
		&
		anomaly indicator
		&
		statistical process
		\\
		\hline
		\textbf{classification}
		&
		\multicolumn{2}{|c}{
			 \(H^{d+2}(K(G,q),\mathbb{R}/\mathbb{Z})\) for higher-form case
		}
		\\
		\hline
	\end{tabular}
	\caption{Two perspective of symmetry anomaly}
	\label{table: comparison of two perspectives}
\end{table*}

\subsection{Constructing solutions from prism integration}\label{subsec: prism integral}

Because the map $\Theta(s,f)\in C^{d+1}(\cdot,\RZ)$ is natural, it is sufficient to construct the value of $\Theta(s,f)\in C^{d+1}(\Delta^{d+1},\RZ)\simeq \RZ$ for any $s\in C^{q-1}(\Delta^{d+1},\ZZ_2)$ and $f\in Z^{q}(\Delta^{d+1},\ZZ_2)$. Now, we use the pull back to define $\widetilde{f}\in Z^{q}(\Delta^{d+1}\times I,\ZZ_2)$ and $\widetilde{s}\in C^{q-1}(\Delta^{d+1}\times I,\ZZ_2)$. We also define $\eta\in C^0(I,\ZZ)$ such that $\eta(0)=0$ and $\eta(1)=1$. Then we define
\begin{equation}
	h=\widetilde{f}+\dd (\eta\smile \widetilde{s})\in Z^{d+2}(\Delta^{d+1}\times I,\ZZ_2).
\end{equation}

At the lower and the upper surfaces of the prism, the restrictions of $h$ are $h_0=f$ and $h_1=f+\dd s$. Mathematically, $h$ is a simplicial homotopy between $f$ and $f+\dd s$. Next, we define $\Theta$ by
\begin{equation}
	\int_{\Delta^{d+1}}\Theta(s,f)=(-1)^{d+1}\int_{\Delta^{d+1}\times I}\omega(h).
\end{equation}
We then have the Leibniz's rule
\begin{equation}
	\begin{aligned}
		\int_{\Delta^{d+2}}d\Theta(s,f)&=(-1)^{d+1}\int_{\partial\Delta^{d+2}\times I}\omega(h)\\
		&=\int_{\Delta^{d+2}\times \partial I}\omega(h)\\
		&=\int_{\Delta^{d+2}}\omega(h_1)-\omega(h_0).
	\end{aligned}
\end{equation}
This is exactly what we want. Also, it satisfies $\Theta(0,f)=0$ automatically. 

To further construct $L(s,v)$, we will do the prism integration on $\Delta^d\times \Delta^2$. We define $\eta_1,\eta_2\in C^0(\Delta^2,\ZZ)$ such that
\begin{equation}
	\eta_1(i)=\left\{\begin{aligned}
		&0,\quad i<1;\\
		&1,\quad i\ge 1.
	\end{aligned}
	\right.
\end{equation}
and
\begin{equation}
	\eta_2(i)=\left\{\begin{aligned}
		&0,\quad i<2;\\
		&1,\quad i\ge 2.
	\end{aligned}
	\right.
\end{equation}
We define
\begin{equation}
	h=\dd(\eta_1\smile v+\eta_2\smile s)\in Z^{q}(\Delta^d\times \Delta^2,G).
\end{equation}

Intuitively, $h$ is a $2$-homotopy between the composition of gauge transformations $0\mapsto dv, dv\mapsto dv+ds$ and the whole gauge transformation $0\mapsto \dd(v+s)$. By the Leibniz's rule, we have
\begin{equation}
	\begin{aligned}
		&\int_{\partial\Delta^{d+1}\times \Delta^2}\omega(h)\\&=(-1)^d\int_{\Delta^{d+1}\times \partial\Delta^2}\omega(h)\\
		&=(-1)^d\int_{\Delta^{d+1}\times \Delta^1}\omega(h_{12})+\omega(h_{01})-\omega(h_{02})\\
		&=-\int_{\Delta^{d+1}} \Theta(s,dv)+\Theta(v,0)-\Theta(v+s,0).
	\end{aligned}
\end{equation}
Note that by the construction of $h$, three restrictions $h_{12}$, $h_{01}$, and $h_{02}$ are exactly $\dd(v+\eta s)$, $d(\eta v)$, and $d(\eta v+\eta s)$, respectively.

Still, we take emergent fermion as an example, starting from $\omega(f_2)=\frac{1}{2}f^2\in Z^4(K(\ZZ_2,2),\RZ)$. Then we have
\begin{equation}
	\langle\Theta(s,f),(0,1,2,3)\rangle=\frac{1}{2}\sum_{i=0}^3\langle h^2,(0_0,\cdots,i_0,i_1,\cdots 3_1)\rangle,
\end{equation}
where $h=f+d(\eta s)$. The four terms are listed below.
\begin{align*}
	& i=0:\;&&\langle h,(0_0,0_1,1_1)\rangle \langle h,(1_1,2_1,3_1)\rangle\\
	& i=1:\;&&\langle h,(0_0,1_0,1_1)\rangle \langle h,(1_1,2_1,3_1)\rangle\\
	& i=2:\;&&\langle h,(0_0,1_0,2_0)\rangle \langle h,(2_0,2_1,3_1)\rangle\\
	& i=3:\;&&\langle h,(0_0,1_0,2_0)\rangle \langle h,(2_0,3_0,3_1)\rangle\\
\end{align*}

Among them, we have
\begin{align}
	&\langle f+d(\eta s),(0_0,0_1,1_1)\rangle=s_{01};\\
	&\langle f+d(\eta s),(0_0,1_0,1_1)\rangle=0;\\
	&\langle f+d(\eta s),(0_0,0_1,1_1)\rangle=0;\\
	&\langle f+d(\eta s),(0_0,1_0,2_0)\rangle=f_{012};\\
	&\langle f+d(\eta s),(1_1,2_1,3_1)\rangle=(f+ds)_{123};\\
	&\langle f+d(\eta s),(2_0,2_1,3_1)\rangle=s_{23};\\
	&\langle f+d(\eta s),(2_0,3_0,3_1)\rangle=0.\\
\end{align}
Thus
\begin{equation}
	\begin{aligned}
			\langle\Theta(s,f),(0,1,2,3)\rangle&=\frac{1}{2}(s_{01}(f+ds)_{123}+f_{012}s_{23})\\&=\frac{1}{2}\langle sf+sds+fs,(0,1,2,3)\rangle.
	\end{aligned}
\end{equation}

To compute $L(s,v)$, let $h=d(\eta_1v+\eta_2s)$, we have
\begin{equation}
	\langle L(s,v),(0,1,2)\rangle=\frac{1}{2}\int_{\Delta^2\times \Delta^2}h^2.
\end{equation}

In the integral there are six terms:
\begin{align*}
	& \langle h, (0_0,0_1,0_2)\rangle \langle h, (0_2,1_2,2_2)\rangle=0;\\
	& \langle h, (0_0,1_0,2_0)\rangle \langle h, (2_0,2_1,2_2)\rangle=0;\\
	& \langle h, (0_0,1_0,1_1)\rangle \langle h, (1_1,2_1,2_2)\rangle=0;\\
	& \langle h, (0_0,1_0,1_1)\rangle \langle h, (1_1,1_2,2_2)\rangle=0;\\
	& \langle h, (0_0,0_1,1_1)\rangle \langle h, (1_1,2_1,2_2)\rangle=v_{01}v_{12};\\
	& \langle h, (0_0,0_1,1_1)\rangle \langle h, (1_1,1_2,2_2)\rangle=v_{01}(v_{12}+s_{12}).\\
\end{align*}
The sum is 
\begin{equation}
	\langle L(s,v),(0,1,2)\rangle=\frac{1}{2}\langle v\smile s,(0,1,2)\rangle.
\end{equation}

Thus $\Theta(s,f)=sf+sds+fs$ and $L(s,v)=vs$. These are exactly the solution given in Remark~\ref{remark: alternative solution}.

 Now with Theorem~\ref{thm:Realization} and this prism integration, in principle, one can realize any statistics for Abelian fusion rules, classified by
\begin{equation}
	H^{d+2}\left(\prod_{q=1}^dK(A_i,q),\RZ\right).
\end{equation}

%
%
%

\section{Statistics of invertible excitations from higher gauge theory}

We have introduced three classes of excitation conservation laws: Abelian,
invertible, and non-invertible.  In the last section, we showed that, for
excitations in $d$-dimensional space with Abelian conservation described by a
higher-form symmetry $K(G,q)$, their statistics, defined by the algebra of
hopping operators, are described by a WZW term characterized by a cohomology class
$\omega \in H^{d+2}(K(G,q);\RZ)$.  In this section, we use a WZW term to describe
statistics and generalize this description to invertible excitations with
possible mixed dimensionalities and mixed conservation laws.  The associated
conservation law is
described by a higher group. In fact, the invertible excitations correspond to
the symmetry defects of the higher group and inherit their pointed fusion rules.

\subsection{Review of higher groups and higher gauge theory} \label{hgauge}

Let $K$ be a minimal Kan complex representing a higher classifying space, with homotopy groups $\pi_1(K)=G$, $\pi_2(K)=A_2$, \dots, $\pi_n(K)=A_n$, and all higher homotopy groups trivial. The space $K$ is characterized by (see Appendix \ref{triK}):
\begin{enumerate}
    \item Canonical cochains $x_j$: A $G$-valued 1-cochain $x_1$, an
$A_2$-valued 2-cochain $x_2$, etc., which uniquely label the simplices of $K$.

    \item Automorphisms $\al_j: G \to \text{Aut}(A_j)$.

    \item Postnikov cocycles $k_{j+1}(x_1, \dots, x_{j-1})$: $A_j$-valued
$(j+1)$-cocycle on $K$, as a function of $x_1, \dots, x_{j-1}$.

\end{enumerate}
These canonical cochains satisfy the structural equations:
\begin{align}
\label{k_cond}
\dd x_1 = \id \in G, \qquad \dd_{\al_{j}} x_j =  k_{j+1}(x_1, \dots, x_{j-1}).
\end{align}

To define a higher gauge theory, a gauge field configuration on an extended
$(d+2)$-dimensional spacetime $N^{d+2}$ is a simplicial map $\phi:
N^{d+2}\rightarrow K$. This defines spacetime fields via the pullback $f_j =
\phi^* x_j$. The $G$-valued 1-cochain $f_1$ acts as a flat gauge connection,
while higher cochains $f_j$ act as higher-form gauge fields satisfying modified
flatness conditions:
\begin{align}
\label{acond}
\dd f_1 = \id \in G, \qquad \dd_{\al_{j}} f_j =  k_{j+1}(f_1, \dots, f_{j-1}).
\end{align}

The action amplitude of a twisted higher gauge theory is governed by a $(d+2)$-cocycle $\bar{\omega} \in Z^{d+2}(K, \R/\Z)$:
\[
\ee^{-S(\phi)} = \ee^{2\pi\ii \int_{N^{d+2}} \phi^*\bar\omega(x_1, x_2, \dots)} = \ee^{2\pi\ii \int_{N^{d+2}} \omega(f_1, f_2, \dots)}
\]
The partition function is
\begin{align}
\label{hgNd2}
Z = 
\hskip -3mm
\sum_{\phi: M^{d+1}\to K} 
\hskip -3mm
\ee^{-S(\phi)} 
= 
\hskip -2mm
\sum_{f_i = \phi^* x_i} 
\hskip -2mm
\ee^{2\pi\ii \int_{N^{d+2}} \om(f_1, f_2, \dots)}
.
\end{align}

\subsection{Holographic description of statistics by WZW term in one higher
dimension}

Motivated by the result of the previous section, we propose the following general
description of statistics, defined by the algebra of hopping operators, for
invertible excitations:
\begin{conjecture}\label{cnj:conservation}
The conservation law of invertible excitations in $d+1$-dimensional spacetime
$M^{d+1}$ is encoded by a higher group $\cG$.  Specifically, the excitations may be
viewed as symmetry defects of $\cG$, with their pointed fusion rules
inherited from those of the symmetry defects.  
\end{conjecture}

By Poincar\'e duality, the 1-cocycle $f_1$ describes codimension-1 defects in
$M^{d+1}$ (symmetry defects of a 0-form $G$-symmetry), $f_2$ describes
codimension-2 defects (symmetry defects of a 1-form $A_2$-symmetry), and so on.
The condition $\dd_{\al_{2}} f_2 = k_3(f_1)$ implies that 1-form symmetry
defects can terminate on the intersections of 0-form symmetry defects. Thus,
excitations of different dimensionalities exhibit mixed conservation laws,
encoding the generalized (algebraic) higher symmetry dual to $\cG$.

\begin{conjecture}\label{cnj:WZWSta}
The statistics of invertible excitations with conservation described by a higher
group $\cG$ is described by a WZW term in one-higher dimension given by a
cocycle $\om$ on the classifying space $B\cG$ of the higher group:
\begin{align}
& \ee^{2\pi\ii \int_{N^{d+2}} \om(f_1, f_2, \dots)}
\nonumber\\
& \omega \in H^{d+2}(B\cG;\RZ).
\end{align}
\end{conjecture}

As a physical implication of the above conjecture, the dynamics of these
invertible excitations are described on a boundary $M^{d+1}$ of the $\om$-twisted
higher gauge theory on $N^{d+2}$. The corresponding
path integral is
\begin{align}
\label{ZMN}
 Z = \sum_{f_i = \phi^* x_i} \ee^{-\int_{M^{d+1}} \cL(f_1, f_2, \dots)} \ee^{2\pi\ii \int_{N^{d+2}} \om(f_1, f_2, \dots)}.
\end{align}
The sum runs over fields satisfying the constraints \eqref{acond}.

In the absence of the $(d+2)$-dimensional WZW term $\om$, these boundary
excitations possess trivial (``bosonic'') statistics. Introducing
the WZW twist endows them with statistics. Although
codimension-1 excitations cannot braid, their presence strongly affects the
conservation laws and the allowed topological twists $\bar\om$, governing the
complex interplay between charges and defects.

\subsection{Generalized Spin-Statistics relations and spacetime constraints}

Because statistics are a local property, they must be well-defined on the
simplest closed spacetime, the sphere $M^{d+1} = S^{d+1}$ (see discussion in Section
\ref{maintheorem} and Appendix \ref{app:wzw-realization-details}). Given the canonical disk-like extension $N^{d+2} =
D^{d+2}$, any two extensions differ by a closed manifold $S^{d+2}$. Since the
WZW cocycle $\om(f_1,\cdots,f_d)$ on $S^{d+2}$ is exact, the phase
ambiguity $\ee^{2\pi\ii \int_{S^{d+2}} \om}$ is strictly trivial.  Hence, the
statistics are always well-defined locally.

On spacetime $M^{d+1}$ with a more general homotopy type, the WZW action
$\ee^{2\pi\ii \int_{N^{d+2}} \om}$ may depend on the choice of extension
$N^{d+2}$ if there is no unique canonical one. If an ambiguity exists, the
partition function is ill-defined on $M^{d+1}$. We then say that the homotopy type
of $M^{d+1}$ is \emph{not admissible} for the statistics specified by $\om$. 

This restriction imposes geometric constraints on spacetime,
generalizing the familiar requirement that fermions demand a spin structure.
From this perspective, generalized spin structures are not arbitrary
decorations but mandatory geometric compatibilities demanded by the intrinsic
topological statistics of the system's extended excitations.

\section{Finite fully-pointed fusion higher categories and finite higher
groups}

\label{sec:pointed-fusion-higher-cats}

For invertible excitations with mixed dimensionality and finite types, we
expect their mixed conservation laws and statistics to be classified by finite
fully-pointed fusion higher categories. In this section, we explain the
relation between these categories and finite higher groups: \frmbox{a
fully-pointed fusion $d$-category is classified, up to equivalence, by a pair
$(\cG,[\omega])$, where $\cG$ is a finite $d$-group and $[\omega]\in
H^{d+2}(B\cG;\R/\Z)$.} This mathematical result supports our Conjectures
\ref{cnj:conservation} and \ref{cnj:WZWSta}: the $d$-group $\cG$ encodes the
conservation laws, and the cohomology class $[\omega]$ encodes the statistics
for excitations in $d$-dimensional space.

Here, the phrase ``fully-pointed'' is crucial.  Pointed means that every
simple object is invertible. Fully pointed means that all simple objects and
simple morphisms at all categorical levels are invertible. 

\subsection{Warm-up: ordinary pointed fusion categories}

To build intuition, we first consider the familiar case of a fully-pointed
fusion $1$-category $\mathcal C$ over $\C$. Physically, such a category
describes a system of Abelian anyons (or topological defect worldlines) in a
2+1D topological order. 

The ``simple objects'' of $\mathcal C$ correspond to the anyon types. The
condition that the category is ``fully-pointed'' means that every anyon $g$
has an inverse (its antiparticle) such that their fusion yields the vacuum. In
other words, every anyon has quantum dimension $D=1$. Because of this, their
fusion rules are entirely deterministic and form a finite group $G$, where the
fusion of two anyons $g_1$ and $g_2$ gives a unique anyon $g_1g_2$. 

Every such pointed fusion category is equivalent to $\mathrm{Vec}_G^\omega$,
the category of $G$-graded vector spaces. Physically, the vector spaces
represent the Hilbert spaces of degenerate states when anyons are fused, and
the $G$-grading just keeps track of the total anyon charge. 

However, fusion is a physical process, and quantum mechanics allows for phase
ambiguities when processes are composed. When we fuse three anyons $g_1, g_2,$
and $g_3$, we can pair them up in two different orders: $(g_1 \otimes g_2)
\otimes g_3$ or $g_1 \otimes (g_2 \otimes g_3)$. The associativity isomorphism
relates the quantum states resulting from these two fusion trees. For Abelian
anyons, this isomorphism is simply a $U(1)$ phase, widely known in physics as
the \textbf{$F$-symbol}. Mathematically, this phase is captured by a function
$\omega(g_1, g_2, g_3) \in U(1)$, known as the associator.

For this physical theory to be consistent, macroscopic observables cannot
depend on the order in which we fuse anyons. Fusing three anyons $g_1, g_2,
g_3$ involves a unitary mapping between two different fusion trees,
represented by the $F$-symbol (or associator). Because the anyons are Abelian,
the fusion spaces are 1-dimensional, and this mapping is simply a $U(1)$
phase: \[ |(g_1 g_2) g_3 \rangle = \omega(g_1, g_2, g_3) | g_1 (g_2 g_3)
\rangle.  \]

If we fuse four anyons $g_1, g_2, g_3, g_4$, we can relate the state $|((g_1
g_2)g_3)g_4\rangle$ to $|g_1(g_2(g_3 g_4))\rangle$ through sequences of
$F$-moves. Requiring that the final quantum state be single-valued regardless
of the sequence chosen gives rise to the famous \textbf{pentagon identity}.
Evaluating the two possible paths yields the relation:
\begin{widetext}
\begin{align*}
\text{Path 1: } & |((g_1 g_2)g_3)g_4\rangle \xrightarrow{F} |(g_1(g_2 g_3))g_4\rangle \xrightarrow{F} |g_1((g_2 g_3)g_4)\rangle \xrightarrow{F} |g_1(g_2(g_3 g_4))\rangle \\
&\implies \text{Phase: } \omega(g_1, g_2, g_3) \, \omega(g_1, g_2 g_3, g_4) \, \omega(g_2, g_3, g_4) \\
\text{Path 2: } & |((g_1 g_2)g_3)g_4\rangle \xrightarrow{F} |(g_1 g_2)(g_3 g_4)\rangle \xrightarrow{F} |g_1(g_2(g_3 g_4))\rangle \\
&\implies \text{Phase: } \omega(g_1 g_2, g_3, g_4) \, \omega(g_1, g_2, g_3 g_4).
\end{align*}
\end{widetext}
Equating the phases from both paths gives the pentagon equation:
\begin{align}
&\ \ \ \
\omega(g_1, g_2, g_3) \, \omega(g_1, g_2 g_3, g_4) \, \omega(g_2, g_3, g_4) 
\nonumber\\
&= \omega(g_1 g_2, g_3, g_4) \, \omega(g_1, g_2, g_3 g_4).
\end{align}
Rearranging this equation gives:
\begin{align}
&\ \ \ \ 
(\dd\omega)(g_1, g_2, g_3, g_4) 
\nonumber\\
&\equiv \frac{\omega(g_2, g_3, g_4) \, \omega(g_1, g_2 g_3, g_4) \, \omega(g_1, g_2, g_3)}{\omega(g_1 g_2, g_3, g_4) \, \omega(g_1, g_2, g_3 g_4)} = 1.
\end{align}
Mathematically, this identity exactly matches the 3-cocycle condition for the
associator, meaning $\omega$ is a 3-cocycle, $\omega\in Z^3(G,U(1))$. (In additive
$\R/\Z$ notation, this is written as $\dd\omega = 0$).

Finally, the specific $U(1)$ phases in the $F$-symbol depend on our choice of
local basis vectors for the fusion spaces. We can perform a gauge transformation
by redefining the basis of the 1-dimensional fusion space at each vertex $(g_a, g_b)$:
\[
|g_a, g_b \rangle \longrightarrow \mu(g_a, g_b) |g_a, g_b \rangle,
\]
where $\mu(g_a, g_b) \in U(1)$ is an arbitrary 2-cochain. We must track how this
local basis change affects the macroscopic $F$-symbol $\omega(g_1, g_2, g_3)$. 
The state $|(g_1 g_2) g_3 \rangle$ contains two vertices, $(g_1, g_2)$ and $(g_1 g_2, g_3)$,
so its basis redefinition picks up a factor of $\mu(g_1, g_2) \mu(g_1 g_2, g_3)$. 
Similarly, the state $|g_1 (g_2 g_3) \rangle$ picks up a factor of $\mu(g_2, g_3) \mu(g_1, g_2 g_3)$.
Therefore, under this gauge transformation, the new associator $\omega'$ becomes:
\[
\omega'(g_1, g_2, g_3) = \omega(g_1, g_2, g_3) \frac{\mu(g_2, g_3) \mu(g_1, g_2 g_3)}{\mu(g_1 g_2, g_3) \mu(g_1, g_2)}.
\]
The fractional term on the right is exactly the standard definition of the
3-coboundary $\dd\mu$ of the 2-cochain $\mu$. Thus, the gauge transformation
modifies the associator by $\omega' = \omega \cdot (\dd\mu)$. 

Because physics must be gauge-invariant, the physical system does not depend on
the specific cocycle $\omega$, but rather on its equivalence class up to coboundaries.

Therefore, the physically distinct invertible anyon models are classified
exactly by pairs $(G,[\omega])$.  The group $G$ describes the pointed fusion
rules (the conservation law), and the cohomology class $[\omega]\in H^3(BG,U(1))$
describes the statistics. Thus, we have the well-known classification
\cite{EtingofNikshychOstrik2005,EGNO2015}: \frmbox{ pointed fusion
$1$-categories $\longleftrightarrow$ $\{(G,[\omega]) \mid [\omega]\in
H^3(BG,\RZ)\}$.  }

\subsection{Finite higher groups and the cocycle twist}

For a general fully-pointed fusion $n$-category, the invertible simple objects
and higher morphisms naturally form a finite $n$-group $\cG$.
Homotopy-theoretically, $\cG$ is described by its classifying space $B\cG$, a
finite homotopy $n$-type. Its nonzero homotopy groups are finite and given by
$\pi_1(B\cG)=G$ and $\pi_i(B\cG)=A_i$ for $2\le i\le n$.  In particular,
$\pi_1(B\cG)=G$ labels the simple objects, $\pi_2(B\cG)=A_2$ labels the simple
1-morphisms, \etc, of the fusion $n$-category.

These homotopy levels (\ie the categorical levels corresponding to excitations
with different dimensionalities) are assembled via Postnikov classes
$k_{i+1}\in H^{i+1}(B\cG_{i-1};A_i)$.  For example, a finite $2$-group is
specified by \[ G=\pi_1(B\cG), \qquad A_2=\pi_2(B\cG), \] an action \[ \al_2:
G\to \Aut(A_2), \] and a Postnikov class \[ k_3\in H^3(BG;A_2^{\al_2}).  \]
Thus \[ B\cG \simeq K_{\al_2,k_3}(G,1;A_2,2).  \]

Given a finite $n$-group $\cG$, one forms its linearization, denoted
$n\mathrm{Vec}_{\cG}$. This is the higher-categorical analogue of group-graded
vector spaces. Explicitly, the categorical levels of $n\mathrm{Vec}_{\cG}$ are
constructed from the homotopy groups of $B\cG$ as follows:
\begin{itemize}
    \item \textbf{Objects:} The simple objects are in one-to-one correspondence with the elements $g \in \pi_1(B\cG)$. A general object is a formal finite direct sum of these simple objects.
    \item \textbf{$k$-morphisms ($1 \le k \le n-1$):} There are no nonzero $k$-morphisms between distinct simple $(k-1)$-morphisms. The simple $k$-morphisms from a simple $(k-1)$-morphism to itself are labeled by the elements $a_{k+1} \in \pi_{k+1}(B\cG)$. General $k$-morphisms are formal direct sums of these.
    \item \textbf{$n$-morphisms (top level):} This is where the category is ``linearized'' over $\C$. Between distinct simple $(n-1)$-morphisms, the space of $n$-morphisms is the zero vector space. From a simple $(n-1)$-morphism to itself, the $n$-morphisms form a $1$-dimensional complex vector space $\C$. General $n$-morphisms are linear maps (matrices of complex numbers).
\end{itemize}
The fusion (tensor product) of these objects and morphisms follows the group operations in $\pi_k(B\cG)$, with the lower-level groups acting on the higher-level groups precisely as dictated by the Postnikov classes of $B\cG$. 

However, the $n$-group alone only specifies the fusion rules and the strict skeletal structure. To fully specify the fusion $n$-category, the highest-level coherence isomorphism (the generalized associator mapping between the $1$-dimensional vector spaces at the $n$-morphism level) can be twisted by a cocycle $\omega\in Z^{n+2}(B\cG;U(1))$. 

Up to coboundaries (which correspond to changing the basis of the $n$-morphisms), this twist is specified by a cohomology class $[\omega]\in H^{n+2}(B\cG;U(1))$, yielding the twisted category $n\mathrm{Vec}_{\cG}^{\omega}$.

This gives the general classification
\cite{BaezLauda2004,DouglasReutter2018,KongTianZhou2020}: 
\begin{align}
&\ \ \ \
\text{fully-pointed fusion
$n$-categories} 
\nonumber\\
&\longleftrightarrow \{ (\cG,[\omega]) \mid [\omega]\in
H^{n+2}(B\cG;\RZ) \}.  
\end{align}
(For recent classification results for non-pointed
fusion $2$-categories, see \cite{DecoppetEtAl2024Fusion2Categories}.)

We note one important distinction between the mathematics and the physics. In mathematics, one further quotients these pairs by group equivalences $f \in \Aut(\cG)$, where $(\cG,[\omega]) \sim (\cG,f^*[\omega])$. However, in physics, we do not quotient by this equivalence, because the elements of $\cG$ act as distinct physical labels for excitations.


\section{Examples} 

\subsection{Fermion statistics in general dimensions}

In this subsection, we consider fermion statistics in $d$-dimensional space 
($d\geq 3$) with $\Z_2$ conservation, corresponding to $\omega(x_d)=\frac{1}{2}\Sq^2x_d$, where $x_d\in Z^d(K(\Z_2,d),\Z_2)$ is the canonical generator. $\omega$ generates
\[
H^{d+2}\bigl(K(\Z_2,d);\R/\Z\bigr)\cong \Z_2
\]
for $d\ge 3$.

After specifying the statistics by cohomology class $[\omega]$, one can construct the corresponding hopping operators using Theorem~\ref{thm:Realization}, and the relevant function $L$ can be constructed by prism integral in Section~\ref{subsec: prism integral}. But for fermions, it is easier to solve the descendant functions in Section~\ref{subsec: decendant functions} directly.

First, we solve the equation 
\begin{equation}
	d\Theta(s_{d-1},f_d)=\omega(f_d+\dd s_{d-1})-\omega(f_d).
\end{equation}
Using Eq.~\eqref{Sq-cross}, the right hand side has
\begin{equation}
	\begin{aligned}
		&\Sq^2(f+\dd s)-\Sq^2(f)
		\\&=\dd\left(\gSq^2( s)+\dd s\hcup{d-2} f\right)
	\end{aligned}
\end{equation}
So a solution is 
\begin{equation}
	\Theta(s,f)=\frac{1}{2}\left(\gSq^2( s)+\dd s\hcup{d-2} f\right).
\end{equation}
It indeed satisfies $\Theta(0,f)=0$. Also, we have
\begin{equation}
	\beta(s)=\frac{1}{2}\gSq^2( s).
\end{equation}

Next, we have
\begin{equation}
	\begin{aligned}
		&\dd L(s,v)\\&=\Theta(s,\dd v)-\beta(s+v)+\beta(v)\\
		&= \frac{1}{2}\left(\gSq^2( s)+\dd s\hcup{d-2} \dd v+\gSq^2(v)-\gSq^2(s+v)\right)\\
		&=\frac{1}{2}\dd (\dd s\hcup{d-1}v+s\hcup{d-2}v).
	\end{aligned}
\end{equation}
So a solution is
\begin{equation}
	L(s,v)=\frac{1}{2}\dd s\hcup{d-1}v+s\hcup{d-2}v
\end{equation}
and the hopping operator is
\begin{equation}
	U(s)|v\rangle=(-1)^{\int_X \dd s\hcup{d-1}v+s\hcup{d-2}v}|v+s\rangle.
\end{equation}

Note that in $d=2$ dimensional space, $\ZZ_2$ conserved particle can also be
semion, which is described by the Pontryagin square
\begin{equation}
	\omega(x_2)
=\frac{1}{4}\left(\t x_2 ^2+\t x_2 \hcup{1}\dd \t x_2\right)
=\frac14 \gSq^2 \t x_2,
\end{equation}
where $\t x_2$ is a specific lift from
$\ZZ_2$ coefficient to $\ZZ$ coefficient, such as $0\mapsto 0$ and $1\mapsto
1$.  The above $\omega(x_2)$ generates $H^{4}\bigl(K(\Z_2,2);\R/\Z\bigr)\cong
\Z_4$.

\subsection{Statistics of Abelian anyons}

As a first example, we consider the statistics of point-like excitations in $2+1$-dimensional spacetime. The Poincar\'e dual of the particle worldlines is represented by a $2$-cocycle $f_2$ on spacetime $M^{2+1}$. Assuming the particles obey pointed fusion rules described by a finite Abelian group $A$, $f_2$ is an $A$-valued cocycle.

Equivalently, $f_2$ can be viewed as the pullback of the canonical $A$-valued $2$-cocycle $x_2$ from the Eilenberg-MacLane space $K(A,2)$:
\[
f_2=\phi^*x_2, \qquad \phi:M^{2+1}\to K(A,2).
\]

According to our construction, the possible statistics of particles with fusion group $A$ are classified by:
\[
H^4(K(A,2),\mathbb R/\mathbb Z).
\]
This cohomology group is naturally identified with the group of quadratic functions $q:A\to \mathbb R/\mathbb Z$. Here, ``quadratic'' means that $q(-a)=q(a)$, and the function
\[
b_q(a,b):=q(a+b)-q(a)-q(b)
\]
forms a symmetric bilinear pairing $b_q:A\times A\to \mathbb R/\mathbb Z$. Physically, $q(a)$ determines the topological spin of the particle labeled by $a$, while $b_q(a,b)$ gives the mutual braiding phase between particles $a$ and $b$. Thus,
\[
H^4(K(A,2),\mathbb R/\mathbb Z) \cong \mathrm{Quad}(A,\mathbb R/\mathbb Z).
\]

A representative class in $H^4(K(A,2),\mathbb R/\mathbb Z)$ is obtained by evaluating a universal quadratic expression on the canonical $A$-valued $2$-cocycle $x_2$, written schematically as $\om_4(x_2)=q(x_2)$. This indicates that the cocycle $\om_4$ is built from cup products and Pontryagin squares applied to $x_2$. The resulting spacetime action is obtained by pullback: $\phi^*\om_4(x_2)=\om_4(f_2)$.

For instance, consider $A=\mathbb Z_N$. The classification yields:
\[
H^4(K(\mathbb Z_N,2),\mathbb R/\mathbb Z) \cong
\begin{cases}
	\mathbb Z_N, & N\text{ odd},\\[2mm]
	\mathbb Z_{2N}, & N\text{ even}.
\end{cases}
\]
A convenient cocycle representative is $\om_4 = \frac{k}{2N}\,\mathfrak P(x_2)$, where $\mathfrak P(x_2)$ is the Pontryagin square of $x_2$. The parameter $k$ takes values in $\mathbb Z_{2N}$ for even $N$, effectively reducing to $\mathbb Z_N$ for odd $N$. 

If $x_2$ is lifted to an integer-valued $2$-cochain $\widetilde x_2$, the Pontryagin square at the cochain level is:
\[
\mathfrak P(x_2) = \widetilde x_2\smile \widetilde x_2 + \widetilde x_2 \hcup{1} \dd \widetilde x_2 = \gSq^2 x_2 \quad \mathrm{mod}\ 2N .
\]
Pulling this cocycle back to spacetime gives the WZW term:
\[
\om_4(f_2) = \frac{k}{2N} \left( \widetilde f_2\smile \widetilde f_2 - \widetilde f_2\hcup{1}\dd\widetilde f_2 \right),
\]
where $f_2$ is the $\mathbb Z_N$-valued particle-worldline cocycle and $\widetilde f_2$ is its integer lift.

For a general finite Abelian group $A=\bigoplus_I \mathbb Z_{N_I}$, we decompose the canonical cocycle as $x_2=\bigoplus_I x_2^{\Z_{N_I}}$. A general degree-$4$ cocycle is a sum of diagonal Pontryagin squares and off-diagonal cup products:
\[
\om_4 = \sum_I \frac{k_I}{2N_I}\,\mathfrak P(x_2^{\Z_{N_I}}) + \sum_{I<J} \frac{k_{IJ}}{N_{IJ}}\, x_2^{\Z_{N_I}}\smile x_2^{\Z_{N_J}},
\]
where $N_{IJ}=\gcd(N_I,N_J)$. The parameters are $k_I \in \mathbb Z_{2N_I}$ (or $\mathbb Z_{N_I}$ for odd $N_I$) and $k_{IJ}\in \mathbb Z_{N_{IJ}}$.

Notice that diagonal terms require the Pontryagin square because the coefficient is $1/(2N_I)$. The cup-$1$ correction ensures the expression remains closed modulo $2N_I$. Conversely, the mixed term $x_2^{\Z_{N_I}} \smile x_2^{\Z_{N_J}}$ is already a well-defined $\mathbb Z_{N_{IJ}}$-valued $4$-cocycle, so no cup-$1$ correction is needed.

In summary, statistics for point-like excitations with fusion group $A=\prod_I \Z_{N_I}$ in $2+1$-dimensional spacetime are classified by:
\[
H^4(K(A,2),\mathbb R/\mathbb Z) \cong \left(\bigoplus_I \mathbb Z_{\widetilde N_I}\right) \oplus \left(\bigoplus_{I<J}\mathbb Z_{N_{IJ}}\right),
\]
where $\widetilde N_I = 2N_I$ for even $N_I$ and $\widetilde N_I = N_I$ for odd $N_I$.

\subsection{Statistics of Abelian strings in $3+1$-dimensional spacetime}
\label{strsta}

Next, we consider the statistics of Abelian strings in $3+1$-dimensional spacetime. The Poincar\'e dual of the string worldsheets is a $2$-cocycle $f_2$ on spacetime $M^{3+1}$. Assuming pointed fusion rules described by a finite Abelian group $A$, the statistics of strings are classified by:
\[
H^5(K(A,2),\mathbb R/\mathbb Z).
\]

For a generic finite Abelian group $A=\bigoplus_I \mathbb Z_{N_I}$, we find:
\[
H^5(K(A,2),\mathbb R/\mathbb Z) \cong \left(\bigoplus_I \mathbb Z_{\gcd(N_I,2)}\right) \oplus \left(\bigoplus_{I<J}\mathbb Z_{N_{IJ}}\right).
\]
The first set of terms describes the self-statistics of strings and exists only for even $N_I$. The second set describes mutual statistics between strings carrying charges in different cyclic factors.

Using the canonical decomposition $x_2=\bigoplus_I x_2^{\Z_{N_I}}$, we
construct explicit cocycle representatives with the Bockstein operation
$\Bs_{N_I} x_2^{\Z_{N_I}} = \frac{1}{N_I}\dd{\t x_2^{\Z_{N_I}}} $ (where $\t
x_2$ is an integer lift of $x_2$). A general representative takes the form
\begin{align}
\om_5 &= \sum_I \frac{k_I}{N_I}  x_2^{\Z_{N_I}} \Bs_{N_I} x_2^{\Z_{N_I}} \nonumber\\
&\quad + \sum_{I<J} \frac{k_{IJ}}{N_{IJ}} x_2^{\Z_{N_I}} \Bs_{N_{IJ}} x_2^{\Z_{N_J}} ,
\end{align}
where $k_I\in \mathbb Z_{\gcd(N_I,2)}$ and $k_{IJ}\in \mathbb Z_{N_{IJ}}$.
The diagonal term $ \frac{k_I}{N_I}  x_2^{\Z_{N_I}} \Bs_{N_I} x_2^{\Z_{N_I}}$
can also be written as, with $d=3$,
\begin{align}
 \frac{k_I}{N_I^2} \gSq^3 x_{d-1}^{\Z_{N_I}},
\end{align}
or as
\begin{align}
 \frac{k_I}{2} \gSq^2 \Bs_n x_{d-1}^{\Z_{N_I}} .
\end{align}
The later two expressions are valid for any spatial dimension $d>1$
(see Appendix \ref{Hd2}).

Pulling back to spacetime ($f_{2,I}=\phi^*x_2^{\Z_{N_I}}$), the corresponding topological term is:
\[
\om_5 = \sum_I \frac{k_I}{N_I}\, f_{2,I} \Bs_{N_I}f_{2,I} + \sum_{I<J} \frac{k_{IJ}}{N_{IJ}}\, f_{2,I} \Bs_{N_{IJ}}f_{2,J}.
\]
These cocycles physically encode the fermionic loop statistics and Bockstein braiding statistics of Abelian strings in $3+1$-dimensional spacetime.

\subsection{Spacetime condition for nontrivial $\mathbb Z_2$ string statistics in 3+1D}
\label{z2strw3}

Consider a string with $\Z_2$ conservation in $3+1$-dimensional spacetime. The nontrivial statistics are described by the $5$-cocycle:
\[
\om_5(f_2) = \frac12\,f_2 \Sq^1 f_2 = \frac12\,f_2 \Bs_2 f_2 .
\]
The WZW amplitude is defined by choosing an extension $N^5$ (where $\partial N^5=M^4$). This amplitude is independent of the extension if the phase on any closed comparison $5$-manifold $W^5$ is trivial:
\[
\int_{W^5}f_2 \Sq^1f_2=0 \mod 2.
\]

On a closed oriented $5$-manifold $W^5$, we have 
\[
\int_{W^5}f_2 \Sq^1f_2 = \int_{W^5}f_2  \w_3(TW) \mod 2.
\]
First the left side is linear in $f_2$. Since
on a closed oriented $5$-manifold the Wu-class $v_1=\w_1=0$, so
$\int_{W^5}\Sq^1(\text{degree }4)=\int v_1(\cdot)=0$. Hence for
$\deg(f\,g)=4$,
\begin{equation}
0=\int_{W^5}\Sq^1(f_2\, g_2)
=\int_{W^5}\Sq^1 f_2\, g_2+\int_{W^5} f_2\,\Sq^1 g_2 ,
\end{equation}
so the pairing $B(f,g)=\int_{W^5} f\,\Sq^1 g$ is symmetric, and the cross terms
in $B(f+g,f+g)$ cancel mod $2$. Thus
$f_2\mapsto\int_{W^5} f_2\,\Sq^1 f_2$ is a \emph{linear} functional on
$H^2(W^5;\Z_2)$.

Due to the linear condition,
by Poincar\'e duality there is a unique $c_3\in H^3(W^5;\Z_2)$ with
\begin{equation}
\int_{W^5} f_2\,\Sq^1 f_2=\int_{W^5} f_2\,c_3
\qquad\text{for all } f_2\in H^2(W^5;\Z_2).
\end{equation}
Naturality forces $c_3$ to be a degree-$3$ polynomial in the $\w_i(TW)$; with
$\w_1(TW)=0$ the only option is $c_3=\epsilon\,\w_3(TW)$, $\epsilon\in\{0,1\}$.

Take ${W^5}=SU(3)/SO(3)$, the generator of $\Omega^{SO}_5=\Z_2$, where
$H^2=\langle \w_2\rangle$, $H^3=\langle \w_3\rangle$, $\w_3=\Sq^1 \w_2$, and
$\int_{W^5} \w_2 \w_3=1$. With $f_2=\w_2$,
\begin{align}
\int_{W^5} f_2\,\Sq^1 f_2&=\int_{W^5} \w_2\,\Sq^1 \w_2=\int_W \w_2 \w_3=1.
\end{align}
Therefore $\eps\neq 0$.  So $\epsilon=1$ \ie
\begin{align}
\label{fSqfw3}
\int_{W^5} f_2\,\Sq^1 f_2=\int_{W^5} f_2 \w_3(TW)
\end{align} 
We see that the ambiguity vanishes for all background
fields $f_2$ if and only if $\w_3(TW)=0$.  This is the direct string analogue
of the spin condition for fermionic particles. To support nontrivial
statistics for $\mathbb Z_2$ strings without ambiguity, the spacetime must
admit a $\w_3$-structure ($\w_3=0$). While a full spin structure ($\w_1=\w_2=0
\implies \w_3=\Sq^1\w_2=0$) is sufficient, the condition $\w_3=0$ is strictly
weaker and represents the precise, minimal tangential condition detected by
string statistics.

Using $\Sq^1 \w_2=\w_3$, we also have
\begin{equation}
\int_{W^5} f_2\w_3=\int_{W^5} \w_2\,\Sq^1 f_2=\int_{W^5} \Sq^2\Sq^1 f_2 ,
\end{equation}
the last step being $\int_{W^5}\Sq^2 y_3=\int_{W^5} v_2\, y_3=\int_{W^5} \w_2\, y_3$
(Wu formula, with $v_2=w_2$).
We see that
\begin{align}
 \int_{W^5} f_2\,\Sq^1 f_2 = \int_{W^5} \Sq^2\Sq^1 f_2 .
\end{align}

\subsection{Statistics of mixed $\Z_2$-particles and $\Z_2$-strings in
$d+1$-dimensional spacetime}

\label{partstr}

We now explore a system containing both point-like particles and strings in
$d$-dimensional space ($d\ge 2$), where both excitations obey $\Z_2$ fusion
rules. 

Let $f_{d-1}$ and $f_d$ denote the $\Z_2$-valued Poincar\'e duals of the
string worldsheets and particle worldlines, respectively in spacetime. Their
conservation laws are governed by a higher-group classifying space $K$ with
$\pi_{d-1}(K)=\Z_2$ and $\pi_d(K)=\Z_2$. This space is specified by a
Postnikov class $k_{d+1}\in H^{d+1}(K(\Z_2,d-1),\Z_2)$. For $d\ge 2$, this
cohomology group is generated by $\Sq^2 x_{d-1}$. Thus, we have two possible
conservation laws labeled by $\rho \in \{0,1\}$, where $k_{d+1} = \rho \Sq^2
x_{d-1}$. This gives rise to two higher-group classifying spaces 
\begin{align}
K_\rho
= K_{\rho \Sq^2
x_{d-1}}(\Z_2, d-1; \Z_2,d)
.
\end{align}
The spacetime conservation laws take the form:
\[
\dd f_{d-1}\se{2}0, \qquad \dd f_d\se{2}\rho\,\Sq^2 f_{d-1} .
\]

\subsubsection{The untwisted conservation law ($\rho=0$)}

When $\rho=0$, particles and strings are independently conserved ($\dd f_{d-1}\se{2}0$ and $\dd f_d\se{2}0$). The statistics are classified by $H^{d+2}(K_0,\RZ)$. 

We find two universal self-statistics terms for $d>3$: string self-statistics
term $\frac12\Sq^2\Sq^1 f_{d-1}$ and particle self-statistics term
$\frac12\Sq^2 f_d$, which generate $H^{d+2}(K_0,\RZ) \cong (\Z_2)^2$.

For $d=3$, a simple particle-string mutual-statistics term
($\frac12 f_2 f_3$) is also present. Thus, $H^{5}(K_0,\RZ) \cong (\Z_2)^3$.

For $d=2$, string self-statistics term $\frac12\Sq^2\Sq^1 f_{1}$ becomes a
$\RZ$-valued coboundary, and there no non-trivial string self-statistics.  On
the other hand, $\Z_2$ conserved particle can also be semion, which is
described by the Pontryagin square $\omega(f_2) =\frac14 \gSq^2 f_2$,
which generate $H^{4}(K_0,\RZ) \cong \Z_4$.

\subsubsection{The twisted conservation law ($\rho=1$)}

When $\rho=1$, the particle current is no longer independently conserved: $\dd
f_d \se{2} \Sq^2 f_{d-1}$. Physically, specific self-intersections of the
string worldsheets act as sources for the particle current.  The statistics of
the intertwined $\Z_2$-conserved particle-string system are classified by 
\begin{align}
H^{d+2}(K_1,\RZ) .
\end{align}

One of the generator of $H^{d+2}(K_1,\RZ)$
is represented by the secondary-operation cocycle:
\begin{align}
\label{omd2}
\om_{d+2}(f_{d-1},f_d) = \frac12\,\gSq^2 f_d + \frac18\,\gSq^3 f_{d-1} 
+ \frac12\,\Del_{d+2}(f_{d-1}),
\end{align}
where 
\begin{align}
\label{dDel}
\dd\Del_{d+2}(f_{d-1}) \se{2} \Sq^2\Sq^2f_{d-1} +
\Sq^3\Sq^1f_{d-1}.
\end{align}
To understand this result, let us first check if $\om_{d+2}(f_{d-1},f_d)$
is a cocycle or not.  From \eqn{Sqd1}, we find that
\begin{align}
& \dd [
\frac12 \gSq^2 f_d + \frac{1}{8} \gSq^3 f_{d-1} ] 
 \se{1}
\frac12 \gSq^2 \dd f_d +\frac{1}{8} \gSq^3 \dd f_{d-1}
\nonumber\\
&\ \ \ \
 = 
\frac12 \gSq^2 \gSq^2 f_{d-1} 
+\frac12 \gSq^3 \Bs_2 f_{d-1} 
\nonumber\\
& \ \ \ \
\se{1}
\frac12 (\gSq^2 \gSq^2 f_{d-1}+\gSq^3 \gSq^1 f_{d-1})
.
\end{align}
In last two equations, we have used
$\Bs_2 f_{d-1} = \frac12 \dd f_{d-1}$ and
\eqref{Sq1Bs2}.
Since $f_{d-1}$ is a $\Z_2$-valued cocycle, the above can be rewritten as
\begin{align}
&
 \dd [
\frac12 \gSq^2 f_d + \frac{1}{8} \gSq^3 f_{d-1} ] 
\nonumber \\
& \se{1}
\frac12 (\Sq^2 \Sq^2 f_{d-1}+\Sq^3 \Sq^1 f_{d-1})
.
\end{align}
$\Del_{d+2}(f_{d-1})$ is used to cancel the above via $\dd
\Del_{d+2}(f_{d-1})$.  Due to the Adem relation \eqref{Sq2Sq2Sq3Sq1} of the
Steendrod squares, \eqref{dDel} always has solutions.
For example, when $d=2$ 
\begin{align}
\Del_4(f_1) &\se{2} 0,
\end{align}
since for a  $\Z_2$-valued cocycle $f_1$, $\Sq^2 f_{1} \se{2} 0$ and $\Sq^1
f_{1} \se{2}0$.  Therefore,  $\om_{d+2}(f_{d-1},f_d)$ is indeed a cocycle.

For $d=2$, $\frac{1}{8} \gSq^3 f_{1} = 0$.  The above calculation and the fact
that $\Del_4(f_1)=0$ implies that $\frac12 \gSq^2 f_2$ is a $\RZ$-valued
cocycle, despite $f_2$ is not a cocycle.  Our calculation indicates that the
condition $\dd f_2 = \gSq^2 f_1$ is enough to make $\frac12 \gSq^2 f_2$ to be
a cocycle.  In this case $\om_{4}(f_{1},f_2) = \frac12\,\gSq^2 f_2$ is of
order 2, and endows the particle with the Fermi statistics.  Thus $\Z_2
\subset H^{4}(K_1,\RZ)$.  We know that the graded group
\begin{align}
\text{gr} \ H^{4}(K_1;\RZ)
& \subset
 H^{4}(K(\Z_2,1)\times K(\Z_2,2);\RZ)
\nonumber\\
&=
 H^{4}(K(\Z_2,2);\RZ) = \Z_4
,
\end{align}
since $H^{4}(K(\Z_2,1);\RZ) = 0$.  The generator  $\frac14\,\gSq^2 f_2$ of
$H^{4}(K(\Z_2,2);\RZ) = \Z_4$ endows the particle $f_2$ with a semion
statistics.  But  $\frac14\,\gSq^2 f_2$ is a cocycle when only $f_2$ is a
cocycle.  When  $\dd f_2 = \gSq^2 f_1$, $\frac14\,\gSq^2 f_2$ is not a
cocycle.  Only $\frac12\,\gSq^2 f_2$ is a  cocycle (see \eqref{Sqd1}).
Therefore,
\begin{align}
H^{4}(K_1;\RZ) = \Z_2 
\end{align}
which is generated by $\frac12\,\gSq^2 f_2$ with $\dd f_2 = \gSq^2 f_1$.

For $d= 3$, we note that twice this generator $\om_{5}(f_{2},f_3)$ yields the
nontrivial string self-statistics term (see Sections \ref{strsta} and
\ref{Hd2}): 
\begin{align}
2\om_{5}(f_2,f_3) \se{1,\dd} \frac14\,\gSq^3 f_2
= \frac14 f_2 \dd f_2
= \frac12 f_2 \Bs_2 f_2
,
\end{align}
describing a fermionic string and a bosonic particle with a mixed conservation
law: $\dd f_{d-1} \se{2} 0$ and $\dd f_d \se{2} \Sq^2 f_{d-1}$.  
Thus $\om_{5}(f_2,f_3)$ is of order 4.

Using  Serre spectral sequence, in Appendix \ref{SSK1}, we show that
$\operatorname{gr}H^5(K_1;\RZ)\cong \Z_2\oplus\Z_2$.  Thus
\begin{align}
H^{5}(K_1,\RZ) 
=
\Z_4 
.
\end{align}

For the $k=1$ or $k=3$ cases of this $\Z_4$ statistics group, the system
bears a striking resemblance to a fermion and its induced $p$-wave
topological superconducting (pTS) string. In those configurations, the particle
carries Fermi statistics and $\Z_2$ fusion. Such fermions can condense along
a string to form a pTS string \cite{K0131}, which possesses $\Z_2$
conservation and braids trivially with the fermions. Furthermore, pTS strings
and $\Z_2$ fermions exhibit the exact same mixed conservation law $\dd f_d
\se{2} \Sq^2 f_{d-1}$ \cite{KT170108264,WangGu2018}. 

For $d > 3$, the generator $\om_{d+2}(f_{d-1},f_d)$ is also of order 4.
Thus
$\Z_4 \subset
H^{d+2}(K_1,\RZ) $. We believe that
\begin{align}
H^{d+2}(K_1,\RZ) 
=\Z_4 
,
\end{align}
since if the particle $f_d$ is a fermion, it must couple to string $f_{d-1}$
in the way described by \eqref{omd2}.  But 
$H^{d+2}(K_1,\RZ)$ could be larger, since
string could have its own
independent statistics described by $ \om_{d+2}(f_{d-1},f_d) $ that depend
only on $f_{d-1}$.  Such a $ \om_{d+2}(f_{d-1}) $ is in
\begin{align}
H^{d+2}(K(\Z_2,d-1),\RZ) = \Z_2 \text{ generated by }  \frac14 \gSq^3 f_{d-1} .
\end{align}
But the order-2 $\frac14 \gSq^3 f_{d-1}$ is already included in $2
\om_{d+2}(f_{d-1},f_d)$.  Also, for $d>3$ there is no mutual statistics
between the particle and the string.

\appendix

\section{Fermionic statistics of Majorana fermion system}
\label{sec: majorana fermions}

In our definition~\ref{defRealization} of a realization, configuration states $\{|a\rangle\}$ need not be orthogonal, and they need not span the whole
Hilbert space $\mathcal H$. This is slightly different from the original definition in Ref.~\cite{Xue2026StatisticsAbelian}, where configuration states are assumed to be either orthogonal or collinear. This assumption is unnecessary for defining statistics, and there are indeed physical systems without it. 

We start from the discussion of elementary fermions in Section~\ref{subsec: complex fermions}. This realization used only the Majorana operators $\gamma_\sigma$
inside the hopping operators.  This suggests a smaller realization in which
each $2$-simplex carries a single Majorana variable.  Suppose that
$N=|X_2|$ is even.  Let $\cH_{\rm M}$ be an irreducible module of the Clifford
algebra generated by
\begin{equation}
	\gamma_\sigma,
	\qquad
	\{\gamma_\sigma,\gamma_{\sigma'}\}=2\delta_{\sigma\sigma'},
\end{equation}
so that
\begin{equation}
	\dim\cH_{\rm M}=2^{N/2}.
\end{equation}
This is much smaller than the complex-fermion Hilbert space, for which
$\dim\cH_{\rm F}=2^N$.  Since the number of configurations in
$B^2(X,\ZZ_2)$ on $S^2$ is $2^{N-1}$, the corresponding configuration states in
$\cH_{\rm M}$ cannot be linearly independent.

Choose a normalized reference state $|0\rangle_{\rm M}\in\cH_{\rm M}$ and define
\begin{equation}
	\label{eq:configuration states majorana}
	|a\rangle_{\rm M}
	=
	\prod_{\sigma\in X_2}^{<}\gamma_\sigma^{a_\sigma}|0\rangle_{\rm M},
	\qquad
	a\in B^2(X,\ZZ_2).
\end{equation}
Together with the same operators $M(s)$ in Eq.~\eqref{eqMajoranaOperator},
these states satisfy the configuration axiom and the locality axiom by the
Clifford algebra.  Hence they also define a realization of $m^2(X,\ZZ_2)$ with
fermionic statistics.  This construction works for any choice of
$|0\rangle_{\rm M}$.  For a generic choice, different configuration states may
be neither orthogonal nor collinear.

\section{Statistics for non-invertible excitations}
\label{AbNonAbElementary}

\subsection{Holographic description from state-sum theories}

The holographic framework of using higher gauge theory to describe 
statistics in one lower dimension naturally extends to non-invertible excitations.
This is achieved by replacing lattice higher gauge theory with a more general
state-sum theory. Specifically, a $(d+2)$-dimensional state-sum theory provides
the bulk description for the boundary excitations' conservation laws and
statistics. In the state-sum description of the path integral, the action amplitude
is dictated by a tensor $T$ on top-dimensional simplices, whose indices
correspond to labels from lower-dimensional faces (representing the
Poincar\'e-dual currents of excitations of varying dimensions).

Mixed conservation laws and non-pointed fusion rules (\eg, $a \otimes b = c_1
\oplus c_2 \oplus \dots$) are seamlessly incorporated into the admissible
labelings of the tensor $T$, \ie labelings that make the tensor non-zero.  The
complex values of the tensor $T$ encode the generalized statistics.
Retriangulation invariance ensures the theory is topological, which
simultaneously enforces consistency on the fusion and braiding data of the
boundary excitations.

For non-invertible excitations, braiding acts via matrices on multi-dimensional
fusion spaces, realizing non-Abelian anyons. The underlying algebraic
structure shifts from a simple higher group to a general fusion
$d$-category.  \frmbox{For excitations with
mixed dimensionalities in $d+1$-dimensional spacetime, their conservation laws and
statistics are characterized and classified by a state-sum theory in one higher
dimension, \ie by a fusion $d$-category.} A higher group merely represents
the specialized, pointed (invertible) case of this broader categorical
framework.

\subsection{Hopping algebra description: a potential generalization}

The axiomatic framework of statistics established in Ref.~\cite{Xue2026StatisticsAbelian} applies only to Abelian cases, \ie, the configurations form an Abelian group $A$ and hopping operators $S$ act additively. The corresponding statistics are classified by cohomology classes $H^{d+2}(B\cG,\RR/\ZZ)$, where $\cG=\prod_{q=1}^d K(\pi_q,q-1)$ is a higher Abelian group. 

It is natural to conjecture that in a non-Abelian (but still invertible) generalization, the classification is still $H^{d+2}(B\cG,\RR/\ZZ)$, while $\cG$ can be a generic higher group. Then excitations should be viewed as boundary descendants of a twisted $\cG$-gauge theory: $A$ corresponds to gauge potentials, and $S$ corresponds to elementary gauge transformations. In that situation, the configuration axiom might be replaced by the following formula:

\begin{equation}
	U(s)|a\rangle=e^{i\theta(s,a)}|s\cdot a\rangle.
\end{equation}
Here we replace the addition map $(s,a)\mapsto a+\partial s$ by a generic action $S\times A\to A$. 

We define a process $P$ as an element of the free group $\operatorname{F}(S)$:

\begin{equation}
	P=s_n^{\epsilon_n}\cdots s_2^{\epsilon_2}s_1^{\epsilon_1},
	\qquad \epsilon_i=\pm 1.
\end{equation}

For any point $x$ in the spatial manifold $\cX$, we define $x\notin \supp(P)$ if deleting all $s\in S$ that $x\notin \supp(s)$ in $P$ makes it trivial in $\operatorname{F}(S)$. We say $P$ has empty support if $x\notin \supp(P),\forall x\in \cX$.

\begin{definition}
	An \textit{(invertible) excitation complex} $m$ consists of the data $(A,S,\cdot,\supp)$:
	\begin{enumerate}
		\item a finite set $A$ called configurations;
		\item a finite set $S$ and a map $\cdot:S\times A\to A$ such that for every $s\in S$, $a\mapsto s\cdot a$ is a bijection;
		\item a topological space $\cX$ and a subspace $\supp(s)\subset \cX$ for each $s\in S$, such that
		\begin{equation}
			\supp(P)=\emptyset\implies P\cdot a=a,\quad\forall a\in A.
		\end{equation}
	\end{enumerate}
\end{definition}

\begin{definition}
	A realization of the (invertible) excitation complex $(A,S,\cdot,\supp)$ consists of a Hilbert space $\mathcal H$, a collection of normalized \textit{configuration states} $\{|a\rangle\mid a\in A\}$ in $\cH$, and a collection of \textit{hopping operators} $\{U(s)\mid s\in S\}$, satisfying the following two axioms.
	\begin{itemize}
		\item \textbf{Configuration axiom:} for any $s\in S$ and $a\in A$,
		\begin{equation}
			U(s)|a\rangle=e^{i\theta(s,a)}|s\cdot a\rangle
		\end{equation}
		for some $\theta(s,a)\in\RR/2\pi\ZZ$.
		\item \textbf{Locality axiom:} For any $P$ with empty support,
		\begin{equation}
			U(P)|a\rangle=|a\rangle,\quad \forall a\in A.
		\end{equation}
	\end{itemize}
\end{definition}

Based on these two axioms, one can compute the solution space of $\{\theta(s,a)\}$ and define statistics $T^*$ as its discrete part. However, enumerating all processes with empty support is not easy, and more numerical and theoretical arguments are needed to justify the robustness of $T^*$. We leave these questions to future work.

\section{Details of the WZW boundary realization}
\label{app:wzw-realization-details}

In this appendix we give the details of the proof of
Theorem~\ref{thm:Realization}. We first collect relevant notation.

Let $G$ be a finite Abelian group and let $1\le q\le d$.  Let
\begin{equation}
	\omega\in Z^{d+2}\bigl(K(G,q),\mathbb R/\mathbb Z\bigr)
\end{equation}
be represented by a normalized local cochain operation.  We choose normalized
local descendants
\begin{equation}
	\Theta(s,f)\in C^{d+1}(\cdot,\mathbb R/\mathbb Z),
	\quad
	\beta(v)\in C^{d+1}(\cdot,\mathbb R/\mathbb Z),
\end{equation}
\begin{equation}
	L(s,v)\in C^d(\cdot,\mathbb R/\mathbb Z),
\end{equation}
where $f\in Z^q(\cdot,G)$ and $s,v\in C^{q-1}(\cdot,G)$, satisfying
\begin{align}
	\dd\Theta(s,f)
	&=\omega(f+\dd s)-\omega(f),
	\label{eq:app-Theta}\\
	\Theta(0,f)&=0,\nonumber\\
	\beta(v)&:=\Theta(v,0),
	\label{eq:app-beta}\\
	\dd L(s,v)
	&=\Theta(s,\dd v)+\beta(v)-\beta(s+v),
	\label{eq:app-L}\\
	L(0,v)&=0.\nonumber
\end{align}

Let $X$ be a PL triangulation of $S^d$, and let $M$ be a triangulated
$(d+1)$-disk with
\begin{equation}
	\partial M=X .
\end{equation}
The boundary Hilbert space is
\begin{equation}
	\mathcal H_X=\mathbb C\bigl[C^{q-1}(X,G)\bigr],
	\qquad
	\{|c\rangle:c\in C^{q-1}(X,G)\}
\end{equation}
with the standard orthonormal basis.  For a boundary configuration
$a\in B^q(X,G)$, choose a bulk coboundary extension
\begin{equation}
	\widetilde a\in B^q(M,G),
	\qquad
	\widetilde a|_X=a .
\end{equation}
The configuration state is represented by
\begin{equation}
		\begin{aligned}
			\confstate{a}_{\widetilde a}
			={}&
			\sum_{\substack{\widetilde{v}\in C^{q-1}(M,G)\\ \dd\widetilde{v}=\widetilde a}}
			e^{2\pi i\int_M\beta(\widetilde{v})}
			|v\rangle .
	\end{aligned}
	\label{eq:app-config-state}
\end{equation}
Here $v$ denotes the restriction of $\widetilde{v}$ to $X$.

For an operator label $s\in C^{q-1}(X,G)$, define
\begin{equation}
		U(s)|c\rangle
		=e^{-2\pi i\int_X L(s,c)}|c+s\rangle .
	\label{eq:app-operator}
\end{equation}
For an elementary label $(g,\alpha)\in G_0\times X^{q-1}$, $s$ is the
simplex-supported cochain with value $g$ on $\alpha$.
Theorem~\ref{thm:Realization} states that these data define a realization of
$m^q(X,G)$ and that the resulting statistics depends only on the cohomology
class $[\omega]$.

\subsection{Independence of bulk extensions}
\label{app:extension-independence}

We first show that the state in Eq.~\eqref{eq:app-config-state} is independent of $\widetilde{a}\in B^q(M,G)$. Consider an alternative choice
\begin{equation}
	\widetilde{a}'=\widetilde{a}+\dd\widetilde{\mu}.
\end{equation}
Because the configuration $a\in B^q(X,G)$ should be the same, the restriction of $\widetilde{\mu}$ to $X$ satisfies
\begin{equation}
	\mu\in Z^{q-1}(X,G).
\end{equation}
We compare $\confstate{a}_{\widetilde{a}}$ with 
\begin{equation}
		\begin{aligned}
		\confstate{a}_{\widetilde a+\dd\widetilde{\mu}}
		={}&
		\sum_{\substack{\widetilde{v'}\in C^{q-1}(M,G)\\ \dd\widetilde{v'}=\widetilde a+\dd\widetilde{\mu}}}
		e^{2\pi i\int_M\beta(\widetilde{v'})}
		|v'\rangle .
	\end{aligned}
\end{equation}

We fix $v=v'\in C^{q-1}(X,G)$ and then compare the two components. Let $\widetilde{\iota}=\widetilde{v'}-\widetilde{v}\in C^{q-1}(M,G)$. It should satisfy
\begin{equation}
	\left\{
		\begin{aligned}
			&\dd\widetilde{\iota}=\dd\widetilde{\mu}\\
			&\iota=0.
		\end{aligned}
		\right.
\end{equation}
These equations always admit solutions. Indeed, because $H^{q-1}(X,G)=0$, there exists $\mu=\dd\nu$ for some $\nu\in C^{q-2}(X,G)$. We extend it to $\widetilde{\nu}\in C^{q-2}(M,G)$ and take $\widetilde{\iota}=\widetilde{\mu}-\dd\widetilde{\nu}$. Note that this procedure may fail if $H^{q-1}(X,G)\ne0$; see Remark~\ref{remark: sphere is important}.

It remains to compare the amplitude
\begin{equation}
	\int_M \beta(\widetilde{v'})-\int_M\beta(\widetilde{v}).
\end{equation}
Because of the boundary condition $v'=v$, we may glue the two copies of $M$ to a sphere $\widehat{M}\simeq S^{d+1}$. Then we only need to evaluate
\begin{equation}
	\int_{\widehat M}\beta(\widehat v),
	\qquad
	\dd\widehat v=\widehat a,
	\label{eq:app-extension-ratio}
\end{equation}
where $\widehat v$ is obtained by gluing $\widetilde{v}$ and $\widetilde{v'}$.
It remains to see that this number depends only on $\widehat a$, not on the
choice of $\widehat v$.

Let $\widehat v_0$ and $\widehat v_1$ be two cochains on $\widehat M$ satisfying
$\dd\widehat v_0=\dd\widehat v_1=\widehat a$. Because $H^{q-1}(\widehat{M},G)=0$, we may assume $\widehat v_1=\widehat{v}_0+\dd\eta$. Consider the prism $\widehat M\times I$. Let $\pi:\widehat M\times I\to \widehat M$ be the projection and $i_*\eta$ be the embedding of $\eta$ to $\widehat{M}\times 1$, extended by zero. Then
\begin{align}
	\label{eq:app-extension-cylinder}
	&\int_{\widehat M}\beta(\widehat v_1)
	-\int_{\widehat M}\beta(\widehat v_0)
	\\
	&\hspace{1.0cm}=
	\int_{\widehat M\times I}\dd\beta(\pi^*\widehat{v}_0+\dd\iota_*\eta)
	=
	\int_{\widehat M\times I}\omega(\pi^*\widehat a)
	=0 .
\nonumber 
\end{align}
The last integral vanishes because $\omega(\pi^*\widehat a)$ is pulled back
from $\widehat M$ and has no component along the interval direction.  Therefore
Eq.~\eqref{eq:app-extension-ratio} is independent of the boundary basis vector
$c$. In conclusion, we have
\begin{equation}
	\confstate{a}_{\widetilde a_2}
	=
	\exp\left(2\pi i\int_{\widehat M}\beta(\widehat v)\right)
	\confstate{a}_{\widetilde a_1}
	\label{eq:app-extension-independence}
\end{equation}
for any $\widehat v$ with $\dd\widehat v=\widehat a$.  Thus changing the bulk
extension changes only the phase convention of the configuration state.

A similar argument also applies to two different bulk triangulations $M_1$ and $M_2$. For $a\in B^q(X,G)$, we take $a=\dd s$ and use $\widetilde{s_i}$ to denote the extension to $C^{q-1}(M_i,G)$. Then one can prove Eq.~\eqref{eq:app-extension-independence}
by the same method.

\subsection{The configuration axiom}
\label{app:configuration-axiom-proof}

Let $s\in C^{q-1}(X,G)$ and choose an extension
\begin{equation}
	\widetilde s\in C^{q-1}(M,G),
	\qquad
	\widetilde s|_X=s .
\end{equation}
Applying Eq.~\eqref{eq:app-operator} to Eq.~\eqref{eq:app-config-state} gives
\begin{equation}
	\begin{aligned}
		U(s)\confstate{a}_{\widetilde a}
		={}&
		\sum_{\dd\widetilde v=\widetilde a}
		e^{2\pi i\int_M\beta(\widetilde{v})}
		e^{-2\pi i\int_X L(s,v)}
		|v+s\rangle .
	\end{aligned}
	\label{eq:app-config-axiom-1}
\end{equation}
By Stokes' theorem and Eq.~\eqref{eq:app-L},
\begin{align}
	\int_X L(s,v)
	&=\int_M \dd L(\widetilde s,\widetilde{v})
	\nonumber\\
	&=\int_M\Theta(\widetilde s,\dd\widetilde{v})
	+\int_M\beta(\widetilde{v})
	-\int_M\beta(\widetilde{v}+\widetilde s).
	\label{eq:app-stokes-L}
\end{align}
Since $\dd\widetilde{v}=\widetilde a$, the summand in
Eq.~\eqref{eq:app-config-axiom-1} becomes
\begin{equation}
	e^{-2\pi i\int_M\Theta(\widetilde s,\widetilde a)}
	e^{2\pi i\int_M\beta(\widetilde{v}+\widetilde s)}
	|v+s\rangle .
\end{equation}
Changing variables from $\widetilde{v}$ to $\widetilde{v}+\widetilde s$, we obtain
\begin{equation}
		\begin{aligned}
			U(s)\confstate{a}_{\widetilde a}
			={}&
			e^{-2\pi i\int_M\Theta(\widetilde s,\widetilde a)}
			\confstate{a+\dd s}_{\widetilde a+\dd\widetilde s} .
	\end{aligned}
	\label{eq:app-config-axiom-result}
\end{equation}
This is precisely the configuration axiom. The formula is independent of the
particular extension $\widetilde s$; changing $\widetilde s$ changes the displayed
coefficient $e^{-2\pi i\int_M\Theta(\widetilde s,\widetilde a)}$ and the final state $\confstate{a+\dd s}_{\widetilde a+\dd\widetilde s}$ by compensating phases.

\subsection{The locality axiom}
\label{app:locality-axiom-proof}

Write
\begin{equation}
	U(s)=M_sT_s,
	\qquad
	T_s|c\rangle=|c+s\rangle,
\end{equation}
where $M_s$ is the diagonal phase in Eq.~\eqref{eq:app-operator}.  The
translations $T_s$ commute.  Hence all commutators come from finite differences
of the local phase $L(s,c)$.

For example, with the convention
$[A,B]=A^{-1}B^{-1}AB$, one obtains
\begin{equation}
	[U(s_2),U(s_1)]|c\rangle
	=
	e^{-2\pi i\int_X\Gamma_2(s_1,s_2;c)}|c\rangle,
\end{equation}
where
\begin{equation}
	\begin{aligned}
		\Gamma_2(s_1,s_2;c)
		={}&L(s_1,c)+L(s_2,c+s_1)\\
		&-L(s_2,c)-L(s_1,c+s_2).
	\end{aligned}
	\label{eq:app-Gamma2}
\end{equation}
Thus $\Gamma_2$ is a second finite difference of the local cochain operation
$L$.  It is supported only where the dual supports of $s_1$ and $s_2$ meet.
Repeating the same argument, the phase of a $k$-fold nested commutator is the
integral of a local $k$-fold finite difference of $L$, supported inside
\begin{equation}
	\supp(s_1)\cap\cdots\cap\supp(s_k).
\end{equation}
If this intersection is empty, the finite difference vanishes, and the nested
commutator is the identity.  This proves the locality axiom.

\subsection{Changing cocycle representatives and descendants}
\label{app:cohomology-representative-independence}

The construction uses a cocycle representative $\omega$ together with a
normalized full descent datum $(\Theta,\beta,L)$.  We now show that the
statistics depend only on the cohomology class
\begin{equation}
	[\omega]\in H^{d+2}\bigl(K(G,q),\mathbb R/\mathbb Z\bigr).
\end{equation}
Let $(\omega',\Theta',\beta',L')$ be another normalized descent datum with
\begin{equation}
	\omega'=\omega+\dd\mu,
	\qquad
	\mu\in C^{d+1}\bigl(K(G,q),\mathbb R/\mathbb Z\bigr),
	\label{eq:app-omega-prime}
\end{equation}
where $\mu$ is normalized.  Then
\begin{equation}
	\dd\Bigl[
	\Theta'(s,f)-\Theta(s,f)
	+\mu(f)-\mu(f+\dd s)
	\Bigr]=0 .
\end{equation}
The cocycle inside brackets is zero when $s=0$. This further implies that it is not only a cocycle but also a coboundary. Hence there is a natural local $d$-cochain operation
$\rho(s,f)$, with $\rho(0,f)=0$, such that
\begin{equation}
		\Theta'(s,f)=
		\Theta(s,f)+\mu(f+\dd s)-\mu(f)+\dd\rho(s,f).
	\label{eq:app-Theta-prime}
\end{equation}
Taking $f=0$ gives
\begin{equation}
	\beta'(v)=
	\beta(v)+\mu(\dd v)+\dd\rho(v,0).
	\label{eq:app-beta-prime}
\end{equation}
Consequently, $L'$ may be chosen in the form
\begin{equation}
		\begin{aligned}
			L'(s,v)=L(s,v)
			&+\rho(s,\dd v)+\rho(v,0)\\
			&-\rho(s+v,0)+\dd\chi(s,v).
	\end{aligned}
	\label{eq:app-L-prime}
\end{equation}
The total differential term $\dd\chi$ is irrelevant in this paper. Indeed, applying $\dd$ to the
right-hand side gives
\begin{equation}
	\dd L'(s,v)=\Theta'(s,\dd v)+\beta'(v)-\beta'(s+v).
\end{equation}

Define the diagonal unitary
\begin{equation}
	W_\rho|c\rangle
	=e^{2\pi i\int_X\rho(c,0)}|c\rangle .
	\label{eq:app-W-rho}
\end{equation}
For fixed $\widetilde a$, Eq.~\eqref{eq:app-beta-prime} implies
\begin{equation}
		{\confstate{a}_{\widetilde a}}'
		=
		e^{2\pi i\int_M\mu(\widetilde a)}
		W_\rho\confstate{a}_{\widetilde a} .
	\label{eq:app-state-counterterm}
\end{equation}
Thus changing $\omega$ by a coboundary conjugates the family of states by a
local diagonal unitary and rephases each configuration state.

Similarly, using Eq.~\eqref{eq:app-L-prime} on a closed $X$, the new excitation
operator satisfies
\begin{equation}
		U'(s)=W_\rho U(s)W_\rho^{-1}M_\rho(s),
	\label{eq:app-operator-counterterm}
\end{equation}
where
\begin{equation}
	M_\rho(s)|c\rangle
	=e^{-2\pi i\int_X\rho(s,\dd c)}|c\rangle .
	\label{eq:app-local-M-rho}
\end{equation}
Here the order in Eq.~\eqref{eq:app-operator-counterterm} is the usual operator
order: $M_\rho(s)$ acts first on the input basis vector $|c\rangle$.

We now compare the corresponding statistical phases.  Let
\begin{equation}
	P=s_k^{\epsilon_k}\cdots s_1^{\epsilon_1},
	\qquad
	\epsilon_i=\pm1,
\end{equation}
be a statistical process.  Its evaluated phase is of the form
\begin{equation}
	\langle a_0|
	U(s_k)^{\epsilon_k}\cdots U(s_1)^{\epsilon_1}
	|a_0\rangle\in U(1).
	\label{eq:app-statistical-phase}
\end{equation}
Equation~\eqref{eq:app-state-counterterm} only changes the phase convention of
the configuration states, while $W_\rho$ is an automorphism of the whole
boundary Hilbert space.  These changes do not affect the statistical phase in Eq.~\eqref{eq:app-statistical-phase}.
The remaining factor $M_\rho(s)$ is a product of local diagonal phase factors,
supported near the elementary operator $s$ and depending only on the local
configuration $\dd c$.  Such local modifications are precisely the continuous redundancies (Theorem VI.9, Ref.~\cite{Xue2026StatisticsAbelian})
quotiented out in the definition of statistical processes and of
$T^*(m^q(X,G))$.  Therefore, cohomologous cocycles and their normalized descent
data define the same statistics.

We have thus proved that the WZW construction gives a well-defined map
\begin{equation}
	H^{d+2}\bigl(K(G,q),\mathbb R/\mathbb Z\bigr)
	\longrightarrow T^*\bigl(m^q(X,G)\bigr).
	\label{eq:app-well-defined-map}
\end{equation}
To prove injectivity, we compare the above realization with the simplest
spatial sphere $X=\partial\Delta^{d+1}$, where Ref.~\cite{Xue2026StatisticsAbelian}
proves
\begin{equation}
	H^{d+2}\bigl(K(G,q),\mathbb R/\mathbb Z\bigr)
	\simeq
	T^*\bigl(m^q(\partial\Delta^{d+1},G)\bigr).
\end{equation}

\subsection{The special case $X=\partial\Delta^{d+1}$}
\label{app:simplex-boundary-realization}

Now take
\begin{equation}
	X=\partial\Delta^{d+1},
	\qquad
	M=\Delta^{d+1}.
\end{equation}
This case is special because $C^{n}(\partial\Delta^{d+1},G)\simeq C^{n}(\Delta^{d+1},G)$. Therefore, when one extends a boundary cochain such as $s\in C^{q-1}(\partial\Delta^{d+1},G)$ to the bulk, no additional choice is required.

The general configuration-state formula reduces to
\begin{equation}
		\begin{aligned}
			\confstate{a}
			={}&
			\sum_{\substack{v\in C^{q-1}(\Delta^{d+1},G)\\ \dd v=a}}
			e^{2\pi i\int_{\Delta^{d+1}}\beta(v)}
			|v\rangle .
	\end{aligned}
	\label{eq:app-simplex-state}
\end{equation}
The hopping operator is
\begin{equation}
		U(s)|c\rangle
		=
		e^{-2\pi i\int_{\partial\Delta^{d+1}}L(s,c)}
		|c+s\rangle .
	\label{eq:app-simplex-operator}
\end{equation}
These formulas are intrinsic to the boundary of the simplex.  The only use of
$\Delta^{d+1}$ is that the local $(d+1)$-cochain $\beta(v)$ is evaluated on the
single top simplex; all required cochain values already live on the boundary.

Using Stokes' theorem and Eq.~\eqref{eq:app-L}, we obtain
\begin{equation}
		U(s)\confstate{a}
		=
		e^{-2\pi i\int_{\Delta^{d+1}}\Theta(s,a)}
	\confstate{a+\dd s} .
	\label{eq:app-simplex-config-axiom}
\end{equation}
Thus, on $\partial\Delta^{d+1}$, the present boundary realization agrees with
the realization used in Ref.~\cite{Xue2026StatisticsAbelian}, up to the overall
sign convention for $L$.  In particular, the statistical processes constructed
there evaluate to the same WZW phases for the present realization.  This identifies
our construction on the simplest sphere with the known realization whose
statistics are classified by
$H^{d+2}(K(G,q),\mathbb R/\mathbb Z)$.

\subsection{Transfer of statistical processes}
\label{app:transfer-statistical-processes}

In the previous subsections we have constructed the map in Eq.~\eqref{eq:app-well-defined-map}. Now we prove that this map is injective, \ie, different cohomology classes give different statistics. Because the map is evidently an Abelian-group homomorphism, it is enough to prove that for any nontrivial cohomology class $[\omega] \in H^{d+2}(K(G,q),\RR/\ZZ)$, the corresponding statistics are nontrivial. The
argument is a transfer argument from the simplest sphere
\begin{equation}
	\partial\Delta^{d+1}
\end{equation}
to an arbitrary PL triangulated sphere $X$.

We first recall how a statistical process is evaluated in the present
realization.  Roughly speaking, a statistical process is a word
\begin{equation}
	P=s_N^{\epsilon_N}\cdots s_2^{\epsilon_2}s_1^{\epsilon_1},
	\qquad
	\epsilon_i=\pm1,
	\label{eq:app-process-word}
\end{equation}
whose evaluated phase Eq.~\eqref{eq:app-statistical-phase} is independent of the choice $\{U(s)\}$. Because of the initial-state-independence theorem, this phase is equal to
\begin{equation}\label{eq: statistical phase to be evaluated}
	\langle 0|
	U(s_k)^{\epsilon_k}\cdots U(s_1)^{\epsilon_1}
	|0\rangle\in U(1),
\end{equation}
where $|0\rangle$ is not a configuration state but the basis vector labeled by $0\in C^{q-1}(X,G)$. We define a sequence of cochains
\begin{equation}
	v_0=0,
	\qquad
	v_i=\sum_{j<i}\epsilon_j s_j .
	\label{eq:app-process-vi}
\end{equation}
Expanding the product in Eq.~\eqref{eq: statistical phase to be evaluated}, one obtains
\begin{equation}
	\Phi_\omega(P;X)
	=
	\exp\left(
	-2\pi i\sum_{i=1}^N\int_X \ell_i
	\right),
	\label{eq:app-process-phase}
\end{equation}
where
\begin{equation}
	\ell_i=
	\begin{cases}
		L(s_i,v_i),& \epsilon_i=+1,\\[2mm]
		-\,L(s_i,v_i-s_i),& \epsilon_i=-1.
	\end{cases}
	\label{eq:app-ell-i}
\end{equation}
Here $v_i$ is always the cochain before the $i$-th operator acts, except that
for an inverse operator we rewrite $U(s_i)^{-1}$ as the inverse move from
$v_i$ to $v_i-s_i$.

For the simplest triangulated sphere $\partial\Delta^{d+1}$,
Ref.~\cite{Xue2026StatisticsAbelian} proves the existence of statistical processes that distinguish all classes in
$H^{d+2}(K(G,q),\mathbb R/\mathbb Z)$.  Equivalently, for every nonzero
$[\omega]$ there is a process $P_0$ such that
\begin{equation}
	\Phi_\omega(P_0;\partial\Delta^{d+1})\ne 1.
\end{equation}
We write $\Delta^{d+1}$ as $[0,\cdots,d+1]$ and let $\sigma_0=[0,\cdots,d]$ be a specific $d$-simplex of $\partial\Delta^{d+1}$. Moreover, $P_0$ can be constructed in a particular form according to the following theorem, the cochain version of Theorem VI.12, Ref.~\cite{Xue2026StatisticsAbelian}.
 \begin{theorem}
 	Let $\sigma_0$ be a specific $d$-simplex of $\partial\Delta^{d+1}$. Then
 	any element in $T(m^q(\partial\Delta^{d+1},G))\simeq H_{d+2}(K(G,q),\ZZ)$ can be represented by a statistical process $P_0$ such that for each $(g,\alpha)\in S$ appearing in $P_0$, the relation $\alpha\subset \sigma_0$ holds.
 \end{theorem}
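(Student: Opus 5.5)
The plan is to translate the chain version (Theorem VI.12 of Ref.~\cite{Xue2026StatisticsAbelian}) into the cochain language by Poincar\'e duality on $\partial\Delta^{d+1}$, and then to localize using the explicit WZW realization. The proof has three steps.

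\textbf{Step 1 (reduction to homology classes).} On $X=\partial\Delta^{d+1}$ we already know the isomorphism
\[
T^*\bigl(m^q(\partial\Delta^{d+1},G)\bigr)\simeq H^{d+2}\bigl(K(G,q),\RR/\ZZ\bigr),
\]
and this holds for the cochain excitation complex as well. Dualizing gives
\[
T\bigl(m^q(\partial\Delta^{d+1},G)\bigr)\simeq H_{d+2}\bigl(K(G,q),\ZZ\bigr).
\]
Under this identification, a statistical process $P$ corresponds to the homology class paired against $[\omega]$ by $\Phi_\omega(P;\partial\Delta^{d+1})$, computed via Eq.~\eqref{eq:app-process-phase}. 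It therefore suffices to show that the subgroup of $T$ generated by processes all of whose labels $\alpha$ lie in $\sigma_0$ is all of $T$. Equivalently, no nonzero $[\omega]$ can be invisible to every such localized process.

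\textbf{Step 2 (localization).} I will show that any process $P$ is equivalent in $T$ to a localized one. Every $(q-1)$-simplex of $\partial\Delta^{d+1}$ either lies in $\sigma_0$ or contains the vertex $d+1$. Let $\alpha$ be a label containing $d+1$. Using the coboundary of a $(q-2)$-cochain supported near $\alpha$, I rewrite the elementary cochain on $\alpha$ as a sum of cochains on faces lying in $\sigma_0$, plus a coboundary. Concretely, since $\delta_{[\alpha\setminus\{d+1\}]}$ exists, $\dd$ of the $(q-2)$-simplex indicator gives this relation.

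The process word is then modified by inserting that rewriting. The configuration sequence $a_i=\dd v_i$ is unchanged because the inserted term is a coboundary. The phase changes only by factors of the form $e^{2\pi i\int\rho}$, which is the continuous redundancy of Theorem VI.9 of Ref.~\cite{Xue2026StatisticsAbelian}, or by products of commutators whose supports have empty common intersection. The latter are trivial by the locality axiom.

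\textbf{Main obstacle.} The main difficulty is ensuring that this rewriting preserves the class in $T$, and not merely the final configuration. The phases $L(s,v)$ are not linear in $s$, so replacing $U(s)$ by a product of $U$'s of the face cochains introduces cocycle-like corrections. My first attempt is to argue at the level of homology: the map $P\mapsto[P]\in H_{d+2}(K(G,q),\ZZ)$ factors through a simplicial cycle built from the sequence $v_i$. That cycle is obtained by coning the sequence of cochains on $\Delta^{d+1}$ into $K(G,q)$. Projecting the vertex $d+1$ onto $\sigma_0$, via a simplicial retraction of $\Delta^{d+1}$ onto a cone over $\sigma_0$, yields a homotopic cycle whose labels all lie in $\sigma_0$.

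\textbf{Step 3 (realizing the projected cycle).} Finally, I read the projected cycle back off as a word $P_0$, checking via the surjectivity already known on $\partial\Delta^{d+1}$ that every homology class is hit by such words.
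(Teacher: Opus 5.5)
There is a genuine gap: the rewriting in Step~2 does not preserve statistical processes, and Steps~2--3 do not replace it with anything that works.

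\textbf{Step 2 fails.} You replace $\delta_\alpha$ (with $d+1\in\alpha$, $\beta=\alpha\setminus\{d+1\}$) by $\sum_v\pm\delta_{\beta\cup\{v\}}$. This keeps the net configuration change. However, the new labels have different supports from $\alpha$, and the substituted word passes through new intermediate configurations. So your claim that the phase changes only by continuous-redundancy factors and by commutators with empty common support is false. In fact the substituted word need not be a statistical process at all.

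Take $d=q=2$, $G=\ZZ_2$, $X=\partial\Delta^3$, and let $e_i$ be the face opposite vertex $i$. A configuration is an even set of occupied faces. The label $[jk]$ moves a particle between $e_l$ and $e_m$, where $\{l,m\}=\{0,1,2,3\}\setminus\{j,k\}$; call this hop $t_{lm}$.
\begin{itemize}
\item Then $\sigma_0=[012]$ is $e_3$, and the labels inside $\sigma_0$ are exactly $t_{03},t_{13},t_{23}$.
\item Your rewriting sends $t_{12}=[03]$ to $t_{13}t_{32}$ and $t_{10}=[23]$ to $t_{13}t_{30}$.
\item Apply it to the T-junction centred at $e_1$, $P=t_{12}t_{13}^{-1}t_{10}t_{12}^{-1}t_{13}t_{10}^{-1}$. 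This represents a nonzero element of $T$, since it gives $-1$ for fermions.
\item After free reduction you get $t_{13}\,(t_{32}t_{30}t_{32}^{-1}t_{30}^{-1})\,t_{13}^{-1}$. Its value is that of a commutator of two hops that share $e_3$.
\end{itemize}
Now work in the trivial bosonic realization and deform $U(t_{30})\to U(t_{30})\,e^{i\lambda n_3}$, where $n_3$ is the occupation of $e_3$ before the hop.
\begin{itemize}
\item This respects the locality axiom. The extra phase sees only $e_3$, and every operator that can change $n_3$ has its label in $\sigma_0$, so every nested commutator it could spoil has nonempty common support.
\item The deformation is connected to $\lambda=0$.
\item Yet it multiplies the value of the substituted word by $e^{\pm i\lambda}$.
\end{itemize}
So pushing hops through $\sigma_0$ one at a time creates non-universal junction phases and destroys the class.

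\textbf{The fallback and Step 3 do not close the gap.} Your homology-level fallback does not repair this. The class of $P$ in $H_{d+2}(K(G,q),\ZZ)$ does not depend on which cycle represents it, so swapping in a homotopic ``projected'' cycle produces no word. The ``coning'' cycle built from the $v_i$ is also never actually constructed. ``Reading the projected cycle back off as a word'' in Step~3 is not a defined operation. Finally, the surjectivity already known on $\partial\Delta^{d+1}$ supplies words with uncontrolled labels, so invoking it assumes exactly the localization you are trying to prove.

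\textbf{The missing ingredient.} The paper gives no independent proof. It treats the statement as the cochain transcription of Theorem~VI.12 of Ref.~\cite{Xue2026StatisticsAbelian}, and on $\partial\Delta^{d+1}$ that transcription is exact. Your first sentence announces this translation but you never carry it out. Complementation $\alpha\mapsto\bar\alpha=\{0,\ldots,d+1\}\setminus\alpha$ does the following, with $p=d-q$:
\begin{itemize}
\item it sends $(q-1)$-simplices to $(p+1)$-simplices and $q$-simplices to $p$-simplices;
\item it turns $\dd$ into $\partial$ up to orientation signs, so $B^q\cong B_p$ with the same generating labels;
\item it turns ``$\alpha_1,\ldots,\alpha_k$ lie in no common $d$-simplex'', i.e.\ $\bigcup_i\alpha_i=\{0,\ldots,d+1\}$, into $\bigcap_i\bar\alpha_i=\emptyset$;
\item it turns $\alpha\subset\sigma_0$ into $d+1\in\bar\alpha$.
\end{itemize}
Hence $m^q(\partial\Delta^{d+1},G)\cong m_{d-q}(\partial\Delta^{d+1},G)$ as excitation complexes. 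The statement is then the chain-level localization at the vertex $d+1$, transported through this isomorphism. No WZW realization and no rewriting of hops is needed.
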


  Next, for any generic triangulated $d$-sphere $X$, we identify $\sigma_0$ with a specific $d$-simplex $\sigma\subset X$. This construction then transfers the statistical process $P_0$ of $m^q(\partial\Delta^{d+1},G)$ to a statistical process $P_\sigma$ of $m^q(X,G)$.
  
  We define a map 
  \begin{equation}
  	r:X\longrightarrow \partial\Delta^{d+1}
  	\label{eq:app-transfer-map}
  \end{equation}
  that maps $\sigma$ isomorphically to $\sigma_0$ and collapses all remaining vertices of $X$ to the vertex $(d+1)$ of $\Delta^{d+1}$. After assigning a suitable branching structure to $X$, we can make $r$ a map of simplicial sets.
  
  Because $r$ induces an isomorphism $\sigma\simeq \sigma_0$, the transferred process $P_\sigma$ is obtained by replacing each label
  $s_i$ in $P_0$ by
  \begin{equation}
  	s_i^\sigma:=r^*s_i.
  	\label{eq:app-transferred-label}
  \end{equation}
  Similarly, in Eq.~\eqref{eq:app-process-phase}, we have $v_i^\sigma:=r^*v_i$. Because $L$ is a natural transformation, we have
  \begin{equation}
  	\ell_i^\sigma=r_\sigma^*\ell_i.
  	\label{eq:app-ell-transfer}
  \end{equation}
 Consequently,
  \begin{equation}
  	\begin{aligned}
  		&\sum_i \int_X \ell_i^\sigma
  		=\sum_i \int_X r^*\ell_i\\
  		&=\sum_i \int_{r_*X} \ell_i=\sum_i \int_{\partial\Delta^{d+1}}\ell_i
  	\end{aligned}
  \end{equation}
  and
\begin{equation}
	\Phi_\omega(P_\sigma;X)
	=
	\Phi_\omega(P_0;\partial\Delta^{d+1})\ne 1.
	\label{eq:app-transfer-phase-equality}
\end{equation}

  This proves both the injectivity of the WZW-boundary realization and the nontriviality of the transferred statistical process $P_\sigma$.

\section{Proof of Theorem \ref{thm:Sgamma}}\label{appendix: proof of anomalous symmetry}

\begin{lemma}\label{lemma: elimination}
	Let $X$ be a closed $d$-dimensional combinatorial manifold. Let
	\begin{equation}
		\mathcal O(c_1,\ldots,c_m;\gamma_1,\ldots,\gamma_n)
		\in C^d(X,\RZ)
	\end{equation}
	be a natural local cochain operation, where the $c_i$ are arbitrary cochains
	and the $\gamma_j$ are cocycles. We abbreviate this expression as
	$\mathcal O(c_i;\gamma_j)$. Suppose that, as a universal cochain identity,
	\begin{equation}\label{eq:elim-assumption-prb}
		d\mathcal O(c_i;\gamma_j)=\mathcal Q(\gamma_j),
	\end{equation}
	with the right-hand side independent of all $c_i$ and $dc_i$. Then
	\begin{equation}\label{eq:elim-conclusion-prb}
		\int_X\mathcal O(c_i;\gamma_j)=\int_X\mathcal O(0;\gamma_j).
	\end{equation}
\end{lemma}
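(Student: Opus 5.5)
We would prove the lemma by a homotopy (prism) argument in the $c_i$ variables, in the style of the extension-independence argument of Appendix~\ref{app:extension-independence}. The idea is to interpolate between $c_i$ and $0$ on the cylinder $X\times I$ while keeping the cocycle arguments $\gamma_j$ fixed, and then apply Stokes' theorem.

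\textbf{Step 1: the interpolating cochains.} Let $\pi:X\times I\to X$ be the projection, with a fixed branching structure so that $X\times I$ is a simplicial set and natural cochain operations are defined on it. Take $\eta\in C^0(I,\ZZ)$ with $\eta(0)=0$ and $\eta(1)=1$, as in Section~\ref{subsec: prism integral}. On the prism set
\begin{equation}
	C_i=\eta\smile\pi^*c_i,\qquad \Gamma_j=\pi^*\gamma_j .
\end{equation}
The $\Gamma_j$ are cocycles, since $\pi^*$ is a chain map. The $C_i$ are arbitrary cochains, which the lemma permits because no condition is imposed on the $c_i$. At the two ends, $C_i|_{X\times 0}=0$ and $C_i|_{X\times1}=c_i$, while $\Gamma_j$ restricts to $\gamma_j$ at both ends. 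By naturality, $\mathcal O(C_i;\Gamma_j)$ restricts at each end to $\mathcal O$ evaluated on the restricted arguments.

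\textbf{Step 2: Stokes' theorem on the cylinder.} Since $X$ is closed, $\partial(X\times I)=X\times1-X\times0$. Using the universal identity \eqref{eq:elim-assumption-prb},
\begin{equation}
	\int_X\mathcal O(c_i;\gamma_j)-\int_X\mathcal O(0;\gamma_j)
	=\int_{X\times I}\dd\mathcal O(C_i;\Gamma_j)
	=\int_{X\times I}\mathcal Q(\pi^*\gamma_j).
\end{equation}
The first equality requires fixing orientation conventions for the fundamental chain of the prism. The second uses that $\mathcal Q$ does not depend on the $C_i$.

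\textbf{Step 3: the pulled-back term vanishes.} By naturality, $\mathcal Q(\pi^*\gamma_j)=\pi^*\mathcal Q(\gamma_j)$. This is a $(d+1)$-cochain pulled back from the $d$-dimensional $X$. Every $(d+1)$-simplex of the prism projects under $\pi$ to a degenerate simplex of $X$. Its value is therefore that of a normalized cochain on a degenerate simplex, which is zero. This is the same reasoning as in Eq.~\eqref{eq:app-extension-cylinder}, and it gives \eqref{eq:elim-conclusion-prb}.

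\textbf{Main obstacle.} The delicate point is Step 3, together with the naturality used in Steps 1 and 3. We need all operations, including higher cup products, to be natural with respect to simplicial-set maps such as $\pi$, and $\mathcal Q$ to be normalized, meaning it vanishes on degenerate simplices. For the cup and $\hcup{k}$ products built from normalized cochains this holds. Otherwise we would first replace $\mathcal Q$ by its normalized version, which changes it only by the constant term at degenerate arguments; that term is absorbed because $\mathcal Q(\gamma_j)$ is independent of the $c_i$. A secondary check is that $C_i=\eta\smile\pi^*c_i$ is a genuine cochain on the product triangulation, which the Eilenberg–Zilber prism structure already used in Section~\ref{subsec: prism integral} provides.
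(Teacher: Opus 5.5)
Your proposal is correct and follows essentially the same route as the paper's proof. Both extend the $c_i$ over $X\times I$ with boundary values $0$ and $c_i$, pull back the $\gamma_j$, and apply Stokes' theorem. Both then observe that $\mathcal Q(\pi^*\gamma_j)=\pi^*\mathcal Q(\gamma_j)$ integrates to zero because every $(d+1)$-simplex of the prism projects to a degenerate simplex of $X$. Your explicit choice $\eta\smile\pi^*c_i$ and your remark about normalized cochains simply make precise what the paper phrases as ``no component in the interval direction.''
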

\begin{proof}
	Indeed, take $Y=X\times I$ and extend $c_i$ to cochains $\widetilde c_i$ with
	boundary values $0$ and $c_i$. Extend each closed variable constantly,
	$\widetilde\gamma_j=\pi^*\gamma_j$, where $\pi:Y\to X$. Stokes' theorem gives
	\begin{align}
		&\int_X \mathcal O(c_i;\gamma_j)
		-\int_X \mathcal O(0;\gamma_j)  \notag\\
		&=\int_Y d\mathcal O(\widetilde c_i;\widetilde\gamma_j)
		=\int_Y \mathcal Q(\pi^*\gamma_j)=0 .
	\end{align}
	The last equality follows because $\mathcal Q(\pi^*\gamma_j)$ is pulled back
	from $X$ and has no component in the interval direction. The hypothesis in
	Eq.~\eqref{eq:elim-assumption-prb} is crucial: dependence on $dc_i$ is not
	allowed, even though $dc_i$ is itself closed.
	
\end{proof}

Now we prove Theorem~\ref{thm:Sgamma}. For $\gamma,\gamma'\in Z^{q-1}(X,G)$, it claims that
 \begin{itemize} \item $\mathcal{S}(\gamma)\mathcal{S}(\gamma')\propto \mathcal{S}(\gamma+\gamma')$ \item $\mathcal{S}(\gamma)$ commute with $\mathcal{S}(\gamma')$ if either $\gamma$ or $\gamma'$ is a coboundary. \item For any $s\in C^{q-1}(X,G)$, the hopping operator $U(s)$ commute with $\mathcal{S}(\gamma)$. \item If $X$ is a combinatorial sphere, then there exists $\phi(\gamma)\in\RZ$ such that \begin{equation} \mathcal{S}(\gamma)\confstate{a}=e^{2\pi \ii\phi(\gamma)}\confstate{a},\quad \forall a\in B^q(X,G). \end{equation}
  \end{itemize}

\begin{proof}
	We use
	\begin{equation}\label{eq:L-closed-prb}
		dL(v,\gamma)=\beta(v)+\beta(\gamma)-\beta(v+\gamma),
		\qquad d\gamma=0,
	\end{equation}
	and the normalization $L(0,v)=L(v,0)=0$.
	
	First,
	\begin{align}
		\mathcal S(\gamma)\mathcal S(\gamma')|v\rangle
		&=\exp\!\left[-2\pi\ii\int_X
		B_{\gamma,\gamma'}(v)\right]
		\mathcal S(\gamma+\gamma')|v\rangle,
		\label{eq:S-product-start-prb}
	\end{align}
	where
	\begin{equation}
		B_{\gamma,\gamma'}(v)
		=L(v,\gamma')+L(v+\gamma',\gamma)-L(v,\gamma+\gamma').
	\end{equation}
	By Eq.~\eqref{eq:L-closed-prb},
	\begin{equation}
		dB_{\gamma,\gamma'}(v)
		=\beta(\gamma')+\beta(\gamma)-\beta(\gamma+\gamma'),
	\end{equation}
	which is independent of the arbitrary cochain $v$. Lemma \ref{lemma: elimination} gives
	\begin{equation}
		\int_X B_{\gamma,\gamma'}(v)
		=\int_X B_{\gamma,\gamma'}(0)
		=\int_X L(\gamma',\gamma).
	\end{equation}
	Therefore
	\begin{equation}\label{eq:S-product-prb}
		\mathcal S(\gamma)\mathcal S(\gamma')
		=e^{-2\pi\ii\int_X L(\gamma',\gamma)}
		\mathcal S(\gamma+\gamma').
	\end{equation}
	This proves the projective product law. It also shows that the commutator
	phase is
	\begin{equation}\label{eq:comm-phase-prb}
		\exp\!\left[-2\pi\ii\int_X
		\{L(\gamma',\gamma)-L(\gamma,\gamma')\}\right].
	\end{equation}
	Assume now that $\gamma=db$; for $q=1$ this means $\gamma=0$. For $q>1$, define
	\begin{equation}
		C(b;\gamma')=L(\gamma',db)-L(db,\gamma').
	\end{equation}
	Equation~\eqref{eq:L-closed-prb} gives $dC(b;\gamma')=0$. Applying Lemma \ref{lemma: elimination} to the cochain variable $b$,
	\begin{equation}
		\int_X C(b;\gamma')=\int_X C(0;\gamma')
		=\int_X\{L(\gamma',0)-L(0,\gamma')\}=0 .
	\end{equation}
	Thus Eq.~\eqref{eq:comm-phase-prb} is unity. The case $\gamma'=db$ is identical.
	
	Next compare $\mathcal S(\gamma)U(s)|v\rangle$ and
	$U(s)\mathcal S(\gamma)|v\rangle$. Their relative exponent is the integral of
	\begin{equation}
		A(s,v;\gamma)=L(s,v)+L(v+s,\gamma)
		-L(v,\gamma)-L(s,v+\gamma).
	\end{equation}
	Using Eq.~\eqref{eq:L-closed-prb} and the defining equation for $L$, one obtains
	$dA(s,v;\gamma)=0$. Lemma \ref{lemma: elimination}, applied to $s$ and $v$, gives
	\begin{equation}
		\int_X A(s,v;\gamma)=\int_X A(0,0;\gamma)=0.
	\end{equation}
	Hence $U(s)\mathcal S(\gamma)=\mathcal S(\gamma)U(s)$.
	
	It remains to prove the statement about configuration states. Let
	$M$ be a combinatorial $(d+1)$-disk with $\partial M=X$, and choose
	$\widetilde a\in B^q(M,G)$ extending $a$. The configuration state is
	\begin{equation}
		\confstate{a}_{\widetilde a}
		=\sum_{d\widetilde v=\widetilde a}
		e^{2\pi\ii\int_M\beta(\widetilde v)}|v\rangle .
	\end{equation}
	Since $X\simeq S^d$, every $\gamma\in Z^{q-1}(X,G)$ admits a closed extension
	$\widetilde\gamma\in Z^{q-1}(M,G)$: for $q=1$ extend the constant $G$-valued
	$0$-cocycle $\gamma$ constantly, and for $q>1$ use $H^{q-1}(S^d,G)=0$ to write
	$\gamma=d\eta$ and take $\widetilde\gamma=d\widetilde\eta$. Then
	\begin{align}
		\mathcal S(\gamma)\confstate{a}_{\widetilde a}
		&=\sum_{d\widetilde v=\widetilde a}
		e^{2\pi\ii\int_M\beta(\widetilde v)-2\pi\ii\int_X L(v,\gamma)}
		|v+\gamma\rangle                                    \notag\\
		&=e^{-2\pi\ii\int_M\beta(\widetilde\gamma)}
		\sum_{d\widetilde v=\widetilde a}
		e^{2\pi\ii\int_M\beta(\widetilde v+\widetilde\gamma)}
		|v+\gamma\rangle .
	\end{align}
	In the second line we used Stokes' theorem and
	$dL(\widetilde v,\widetilde\gamma)=\beta(\widetilde v)+\beta(\widetilde\gamma)
	-\beta(\widetilde v+\widetilde\gamma)$. Since $d\widetilde\gamma=0$, the change
	of variables $\widetilde v' = \widetilde v+\widetilde\gamma$ preserves
	$d\widetilde v'=\widetilde a$. Hence
	\begin{equation}
		\mathcal S(\gamma)\confstate{a}_{\widetilde a}
		=e^{-2\pi\ii\int_M\beta(\widetilde\gamma)}
		\confstate{a}_{\widetilde a}.
	\end{equation}
	Thus one may take $\phi(\gamma)=-\int_M\beta(\widetilde\gamma)$, which is
	independent of $a$.
\end{proof}

\section{Spacetime complexes, cochains, and cocycles}
\label{cochain}

\subsection{Cochains and cocycles}

In this paper, we model the \(D\)-dimensional spacetime \(\cM^{D}\) using a
triangulated spacetime complex, denoted by \(M^{D}\), with $D=d+1$. The complex is constructed
from simplices: vertices, edges, triangles, tetrahedra, and so on. We label
vertices with indices \(i,j,k,\ldots\), edges by pairs \((i,j)\), and triangles
by triples \((i,j,k)\). 

To unambiguously define local lattice actions on these simplices, we endow the
complex with a \emph{branching structure} \cite{C0527,CGL1172,CGL1204}. A
branching structure assigns a direction to every edge such that no triangle
contains an oriented closed loop (see Fig.~\ref{mir}). This choice naturally
induces a local ordering of the vertices within each simplex: the first vertex
has no incoming edges, the second has exactly one incoming edge, and so forth.
Furthermore, this vertex ordering assigns a canonical orientation to each
simplex. 

\begin{figure}[t]
\begin{center}
\includegraphics[scale=0.5]{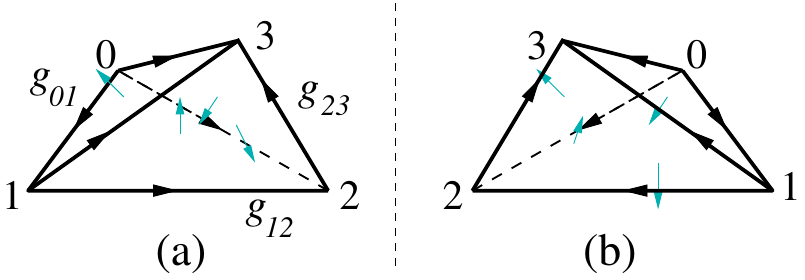}
\end{center}
\caption{
(Color online) Two branched simplices with opposite orientations.
(a) A branched simplex with a positive orientation and
(b) a branched simplex with a negative orientation.
}
\label{mir}
\end{figure}

Let \(\M\) be an Abelian group (equivalently, a \(\Z\)-module). An \(n\)-cochain \(f_n \in C^n(M^{D};\M)\) assigns an element of \(\M\) to each oriented \(n\)-simplex, denoted by \(f_{n;i_0i_1\cdots i_n}\). Physically, such a cochain represents a bosonic lattice field on the \(n\)-simplices. 

Because an oriented simplex changes sign when its orientation is reversed -- \ie, \((i_1,i_0,\ldots,i_n)=-(i_0,i_1,\ldots,i_n)\) -- an additive cochain is completely antisymmetric under permutations of its vertices:
\[
f_{n;i_1i_0\cdots i_n}=-f_{n;i_0i_1\cdots i_n}.
\]
Equivalently, an \(n\)-cochain is a linear map from the space of \(n\)-chains to \(\M\), with the pairing evaluated as:
\[
\langle f_n,(i_0,i_1,\ldots,i_n)\rangle = f_{n;i_0i_1\cdots i_n}.
\]
(If the coefficient group is a direct sum, \(\M_1\oplus \M_2\), the cochain is simply treated as a corresponding pair of cochains.)

Integration over the spacetime manifold is defined by evaluating a \(D\)-cochain \(f_D\) on the fundamental \(D\)-chain \(M^{D}\):
\begin{align}
\int_{M^{D}} f_D \equiv \langle f_D,M^{D}\rangle 
= \sum_{(i_0,i_1,\ldots,i_D)} s_{i_0i_1\cdots i_D} (f_D)_{i_0i_1\cdots i_D},
\end{align}
where \(s_{i_0i_1\cdots i_D}=\pm1\) indicates whether the simplex's canonical orientation aligns with the global orientation of \(M^{D}\).

The coboundary operator \(\dd:C^n(M^{D};\M)\to C^{n+1}(M^{D};\M)\) is defined by the standard alternating sum:
\begin{align}
\label{eq:differential}
&\ \ \ \
\left\langle \dd f_n,(i_0i_1\cdots i_{n+1})\right\rangle 
\nonumber\\
&= \sum_{m=0}^{n+1} (-1)^m \left\langle f_n,(i_0i_1\cdots \widehat{i_m}\cdots i_{n+1}) \right\rangle,
\end{align}
where \(\widehat{i_m}\) denotes the omission of the \(m\)-th vertex. This operator is nilpotent, satisfying \(\dd^2=0\). 

We denote the group of \(n\)-cocycles (cochains satisfying \(\dd f_n=0\)) by \(Z^n(M^{D};\M)\). A cochain \(f_n\) is a coboundary if it can be written as \(f_n=\dd g_{n-1}\) for some \((n-1)\)-cochain \(g_{n-1}\); the group of \(n\)-coboundaries is denoted by \(B^n(M^{D};\M)\). Since \(\dd^2=0\), every coboundary is trivially a cocycle (\(B^n \subset Z^n\)). The \(n\)-th cohomology group is defined as the quotient:
\[
H^n(M^{D};\M) = Z^n(M^{D};\M)/B^n(M^{D};\M).
\]

For a $\Z_N$-valued cocycle $x_n$, we may choose an integer lift, still
denoted by $x_n$. Because $x_n$ is closed modulo $N$, its integer coboundary
satisfies $\dd x_n \se{N} 0$. This allows us to define the integer-valued
$(n+1)$-cocycle known as the Bockstein homomorphism:
\begin{align}
\label{Bsdef}
\Bs_N x_n \equiv \frac{1}{N}\dd (x_n-n\toZ{x_n/n})=\frac{1}{N}\dd \t x_n
 ,
\end{align}
where $\t x_n = x_n-n\toZ{x_n/n}$ is a fixed integer lift of $\Z_n$-valued
$x_n$ The cohomology class of \(\Bs_N x_n\) is independent of the chosen
integer lift, as altering the lift simply shifts \(\Bs_N x_n\) by an integer
coboundary.

While the additive cochain formalism applies to Abelian groups, we will occasionally require non-Abelian \(1\)-cocycles. For a general group \(G\), a \(G\)-valued \(1\)-cochain \(g\) satisfies the flatness condition (or \(1\)-cocycle condition) on a triangle if:
\[
g_{ik}= g_{ij} g_{jk}.
\]
We do not require higher non-Abelian cohomology in this paper.

\subsection{Cup products and higher cup products}

For \(f_m\in C^m(M;\M_1)\) and \(h_n\in C^n(M;\M_2)\),
together with a bilinear pairing \(\M_1\times \M_2\to \M_3\), the
Alexander-Whitney cup product is
\begin{align}
&\ \ \ \
\langle f_m\smile h_n,(0,1,\ldots,m+n)\rangle
\nonumber\\
&=
\langle f_m,(0,1,\ldots,m)\rangle
\langle h_n,(m,m+1,\ldots,m+n)\rangle .
\end{align}
We often write \(f_mh_n\) for \(f_m\smile h_n\).

Let us check the coboundary sign explicitly. On the simplex
\((0,1,\ldots,m+n+1)\), one has
\begin{align}
&\dd(f_mh_n)_{(0,\ldots,m+n+1)}  \\
&=
\sum_{r=0}^{m}(-)^r
f_{m;(0,\ldots,\widehat r,\ldots,m+1)}\,
h_{n;(m+1,\ldots,m+n+1)}
\nonumber\\
&\ \ \ \ +
\sum_{r=m+1}^{m+n+1}(-1)^r
f_{m;(0,\ldots,m)}\,
h_{n;(m,\ldots,\widehat r,\ldots,m+n+1)}.
\nonumber 
\end{align}
On the other hand,
\begin{align}
&\ \ \ \ \
\big((\dd f_m)h_n\big)_{(0,\ldots,m+n+1}
\nonumber\\
&=
\sum_{r=0}^{m+1}(-1)^r
f_{m;(0,\ldots,\widehat r,\ldots,m+1)}
h_{n;(m+1,\ldots,m+n+1)},
\nonumber \\
&\ \ \ \
(-1)^m \big(f_m\dd h_n\big)_{(0,\ldots,m+n+1}
\nonumber\\
&=
\sum_{s=0}^{n+1}(-1)^{m+s}
f_{m;(0,\ldots,m)}
h_{n;(m,\ldots,\widehat{m+s},\ldots,m+n+1)}.
\end{align}
The two middle terms
\[
(-1)^{m+1}f_m(0,\ldots,m)h_n(m+1,\ldots,m+n+1)
\]
and
\[
(-1)^m f_m(0,\ldots,m)h_n(m+1,\ldots,m+n+1)
\]
cancel. The remaining terms are exactly those in \(\dd(f_mh_n)\).
Hence
\begin{align}
\label{cup-Leibniz-expanded}
\dd(f_mh_n)
=
(\dd f_m)h_n+(-1)^m f_m\dd h_n .
\end{align}

We next use Steenrod's higher cup products. For
\(f_m\in C^m(M;\M_1)\) and \(h_n\in C^n(M;\M_2)\), the operation
\(f_m\hcup{k}h_n\) is an \((m+n-k)\)-cochain. We set
\(f_m\hcup{k}h_n=0\) for \(k<0\) or \(k>\min(m,n)\).

A convenient way to write the simplicial formula is the following. Set
\(q=m+n-k\). For a strictly increasing sequence
\[
0\le i_0<i_1<\cdots<i_k\le q ,
\]
form two alternating vertex lists
\[
F(i_\bullet)=
(0,\ldots,i_0,\ i_1,\ldots,i_2,\ i_3,\ldots,i_4,\ldots),
\]
\[
H(i_\bullet)=
(i_0,\ldots,i_1,\ i_2,\ldots,i_3,\ i_4,\ldots,i_5,\ldots).
\]
The final interval \((i_k,\ldots,q)\) is placed in \(F\) if \(k\) is odd
and in \(H\) if \(k\) is even. Only those \(i_\bullet\) for which
\(F(i_\bullet)\) is an \(m\)-simplex and \(H(i_\bullet)\) is an
\(n\)-simplex are retained. Then \cite{S4790} (Section 2)
\begin{align}
\label{hcupdef-expanded}
&\ \ \ \
\langle f_m\hcup{k}h_n,(0,1,\ldots,q)\rangle
\nonumber\\
& = \sum_{i_\bullet}
(-1)^{\epsilon(i_\bullet)}
\langle f_m,F(i_\bullet)\rangle
\langle h_n,H(i_\bullet)\rangle ,
\end{align}
where \((-1)^{\epsilon(i_\bullet)}\) is the orientation sign of the
corresponding Steenrod shuffle. 
It is the sign of permutation
to connect two ordered array's 
\begin{align}
& (0,\ldots,i_0,\ i_1,\ldots,i_2,\ i_3,\ldots,i_4,\ldots)
\nonumber\\
& (i_0,\ldots,i_1,\ i_2,\ldots,i_3,\ i_4,\ldots,i_5,\ldots)
\nonumber\\
\to &
(0,\ldots,i_0,\ i_0,\ldots,i_1,\ i_1,\ldots,i_2,\ldots)
\end{align}
This sign convention is chosen so that
the following coboundary identity holds \cite{S4790} (Theorem 5.1):
\begin{align}
\label{cupkrel}
&
\dd(f_m\hcup{k}h_n)
=
\dd f_m\hcup{k}h_n
+
(-1)^m f_m\hcup{k}\dd h_n
\\
&\quad+
(-1)^{m+n-k}f_m\hcup{k-1}h_n
+
(-1)^{mn+m+n}h_n\hcup{k-1}f_m .
\nonumber 
\end{align}

As checks:
for \(k=0\), the terms with \(\hcup{-1}\) vanish, and
\eqref{cupkrel} reduces to the ordinary Leibniz rule
\eqref{cup-Leibniz-expanded}. For \(k=1\), if \(f_m\) and \(h_n\) are
cocycles, then
\[
\dd(f_m\hcup{1}h_n)
=
(-1)^{m+n-1}f_mh_n
+
(-1)^{mn+m+n}h_nf_m .
\]
Equivalently,
\[
f_mh_n-(-1)^{mn}h_nf_m
=
(-1)^{m+n-1}\dd(f_m\hcup{1}h_n),
\]
the cochain-level homotopy witnessing graded commutativity of the cup product
on cohomology.

Taking \(f_m=h_n=c_l\) in \eqref{cupkrel} gives
\begin{align}
\label{cupkrel-self-expanded}
&\dd(c_l\hcup{k}c_l)
=
\dd c_l\hcup{k}c_l
+
(-1)^l c_l\hcup{k}\dd c_l
\\
&\quad+
\left[(-1)^{2l-k}+(-1)^{l^2+2l}\right]
c_l\hcup{k-1}c_l
\nonumber\\
&=
\dd c_l\hcup{k}c_l
+
(-1)^l c_l\hcup{k}\dd c_l
+
\left[(-1)^k+(-1)^l\right]
c_l\hcup{k-1}c_l .
\nonumber 
\end{align}

We will also need the following derived identity. First,
\begin{align}
\dd(c_l\hcup{k-1}c_l)
&=
\dd c_l\hcup{k-1}c_l
+
(-1)^l c_l\hcup{k-1}\dd c_l
\nonumber\\
&\quad+
\left[(-1)^{k-1}+(-1)^l\right]
c_l\hcup{k-2}c_l .
\label{derive-1}
\end{align}
Second, applying \eqref{cupkrel} to \(c_l\hcup{k}\dd c_l\), with
\(\deg(\dd c_l)=l+1\), gives
\begin{align}
&\dd(c_l\hcup{k}\dd c_l)
=
\dd c_l\hcup{k}\dd c_l
+
(-1)^{2l+1-k}c_l\hcup{k-1}\dd c_l
\nonumber\\
&\quad+
(-1)^{l(l+1)+2l+1}\dd c_l\hcup{k-1}c_l
\nonumber\\
&=
\dd c_l\hcup{k}\dd c_l
+
(-1)^{k+1}c_l\hcup{k-1}\dd c_l
-
\dd c_l\hcup{k-1}c_l .
\label{derive-2}
\end{align}
Adding \eqref{derive-1} and \eqref{derive-2}, the
\(\dd c_l\hcup{k-1}c_l\) terms cancel. Thus
\begin{align}
\label{cupkrel-derived-expanded}
&\dd\bigl(c_l\hcup{k-1}c_l+c_l\hcup{k}\dd c_l\bigr)
\\
&\quad=
\dd c_l\hcup{k}\dd c_l
+
\left[(-1)^l-(-1)^k\right]
\bigl(c_l\hcup{k-2}c_l+c_l\hcup{k-1}\dd c_l\bigr)
\nonumber\\
&\quad=
\dd c_l\hcup{k}\dd c_l
-
\left[(-1)^k-(-1)^l\right]
\bigl(c_l\hcup{k-2}c_l+c_l\hcup{k-1}\dd c_l\bigr).
\nonumber 
\end{align}

\subsection{Steenrod squares}

For a \(\Z_2\)-valued \(n\)-cocycle \(z_n\), the Steenrod square is
represented on cochains by
\[
\Sq^k(z_n)=z_n\hcup{n-k}z_n .
\]
This is a \(\Z_2\)-valued cocycle, and its cohomology class depends only
on the cohomology class of \(z_n\).

The Steenrod squares also obey the Adem relations. For \(0<a<2b\),
\begin{align}
\Sq^a\Sq^b
=
\sum_{j=0}^{\lfloor a/2\rfloor}
\binom{b-j-1}{a-2j}
\Sq^{a+b-j}\Sq^j ,
\end{align}
where the binomial coefficients are understood modulo \(2\).

For low total degrees, the relations are as follows.

\paragraph{Total degree \(a+b=2\).}
The only admissible pair with \(0<a<2b\) is
\[
(a,b)=(1,1).
\]
Thus
\begin{align}
\Sq^1\Sq^1
&=
\binom{0}{1}\Sq^2
=
0.
\end{align}
Hence
\begin{align}
\Sq^1\Sq^1=0.
\end{align}

\paragraph{Total degree \(a+b=3\).}
The pair
\[
(a,b)=(1,2)
\]
satisfies \(0<a<2b\). Therefore
\begin{align}
\Sq^1\Sq^2
&=
\binom{1}{1}\Sq^3
=
\Sq^3.
\end{align}
Thus
\begin{align}
\Sq^1\Sq^2=\Sq^3.
\end{align}
The other product of total degree \(3\),
\[
\Sq^2\Sq^1,
\]
does not satisfy \(2<2\), so there is no Adem reduction for it.

\paragraph{Total degree \(a+b=4\).}
There are two Adem relations.

First, for
\[
(a,b)=(1,3),
\]
we get
\begin{align}
\Sq^1\Sq^3
&=
\binom{2}{1}\Sq^4
=
0
\qquad \mod 2.
\end{align}
Hence
\begin{align}
\Sq^1\Sq^3=0.
\end{align}

Second, for
\[
(a,b)=(2,2),
\]
we get
\begin{align}
\Sq^2\Sq^2
&=
\binom{1}{2}\Sq^4
+
\binom{0}{0}\Sq^3\Sq^1
\nonumber\\
&=
0+\Sq^3\Sq^1.
\end{align}
Hence
\begin{align}
\Sq^2\Sq^2=\Sq^3\Sq^1.
\end{align}
Equivalently, since we work modulo \(2\),
\begin{align}
\label{Sq2Sq2Sq3Sq1}
\Sq^2\Sq^2+\Sq^3\Sq^1=0.
\end{align}
The remaining product of total degree \(4\),
\[
\Sq^3\Sq^1,
\]
is already Adem-admissible and is not reduced further by the Adem relations.

\subsection{Generalized Steenrod squares}
For an arbitrary \(n\)-cochain \(c_n\), we define
\begin{align}
\label{Sqdef}
\gSq^k c_n
\equiv
c_n\hcup{n-k}c_n
+
c_n\hcup{n-k+1}\dd c_n .
\end{align}
The sign of the second term is chosen to be compatible with
\eqref{cupkrel}. If \(c_n\) is closed, this reduces to
\[
\gSq^k c_n=c_n\hcup{n-k}c_n .
\]
For a \(\Z_2\)-valued cocycle, \(\gSq^k\) therefore reduces to the usual
Steenrod square \(\Sq^k\).

We now compute its coboundary. In
\eqref{cupkrel-derived-expanded}, set \(l=n\) and
\[
r=n-k+1 .
\]
Then
\[
r-1=n-k,\qquad r-2=n-k-1 .
\]
Thus
\begin{align}
&\dd\gSq^k c_n
=
\dd\bigl(c_n\hcup{n-k}c_n+c_n\hcup{n-k+1}\dd c_n\bigr)
\nonumber\\
= &
\dd c_n\hcup{n-k+1}\dd c_n
\\
&
\ \
-
\left[(-1)^{n-k+1}-(-1)^n\right]
\bigl(
c_n\hcup{n-k-1}c_n
+
c_n\hcup{n-k}\dd c_n
\bigr).
\nonumber 
\end{align}
The first term is precisely \(\gSq^k(\dd c_n)\), because
\(\dd c_n\) has degree \(n+1\). The expression in parentheses is
\(\gSq^{k+1}c_n\). Finally,
\[
-\left[(-1)^{n-k+1}-(-1)^n\right]
=
(-1)^n\left[1+(-1)^k\right].
\]
Therefore
\begin{align}
\label{Sqd1}
\dd\gSq^k c_n
=
\gSq^k(\dd c_n)
+
(-1)^n\left[1+(-1)^k\right]\gSq^{k+1}c_n .
\end{align}
Equivalently,
\[
\dd\gSq^k c_n
=
\gSq^k(\dd c_n)
+
(-1)^n
\begin{cases}
0, & k\ \text{odd},\\
2\,\gSq^{k+1}c_n, & k\ \text{even}.
\end{cases}
\]
After reducing coefficients mod \(2\), this becomes
\begin{align}
\label{Sqd}
\dd\gSq^k c_n
\se{2}
\gSq^k(\dd c_n).
\end{align}
Hence if \(c_n\) is a \(\Z_2\)-valued cocycle, then
\(\gSq^k c_n\) is also a \(\Z_2\)-valued cocycle.

We also record the additivity property carefully. Let
\(c_n,c_n'\) be \(\Z_2\)-valued \(n\)-cochains and put
\(s=n-k\). Expanding the definition gives
\begin{align}
\label{Sq-cross}
&\gSq^k(c_n+c_n')-\gSq^k c_n-\gSq^k c_n'
\\
&\quad =
c_n\hcup{s}c_n'
+
c_n'\hcup{s}c_n
+
c_n\hcup{s+1}\dd c_n'
+
c_n'\hcup{s+1}\dd c_n\\
&\quad =\dd(c_n\hcup{s+1}c'_n+\dd c_n\hcup{s+2}c'_n)+\dd c_n\hcup{s+2}\dd c'_n.
\nonumber 
\end{align}
 Hence, if either $\dd c_n=0$ or $\dd c_n'=0$,
\begin{align}
\label{Sqplus}
\gSq^k(c_n+c_n')
\se{2,\dd}
\gSq^k c_n+\gSq^k c_n' .
\end{align}

As a consequence, if \(c_n\) is a \(\Z_2\)-valued cocycle and
\(c_n\mapsto c_n+\dd f_{n-1}\), then
\[
\gSq^k(c_n+\dd f_{n-1})
\se{2,\dd}
\gSq^k c_n .
\]
Indeed, the extra term \(\gSq^k(\dd f_{n-1})\) is exact mod \(2\), since
\eqref{Sqd} applied to \(f_{n-1}\) gives
\[
\gSq^k(\dd f_{n-1})\se{2}\dd\gSq^k f_{n-1}.
\]
Therefore \(\Sq^k\) descends to a well-defined operation on
\(\Z_2\)-cohomology classes.

Finally, the same sign convention gives the usual cochain representative
of the Pontryagin square. If \(B_n\) is an integer lift of a
\(\Z_N\)-valued cocycle, so that
\[
\dd B_n=Nb_{n+1}, \ \ \ \ b_{n+1} \text{ is $\Z$-valued},
\]
then the Pontryagin-square representative compatible with the above
\(\hcup{1}\) convention is
\begin{align}
\label{Pontryagin-square-plus}
\mathfrak P(B_n)
=
B_nB_n+B_n\hcup{1}\dd B_n
=
\gSq^n B_n .
\end{align}
If another convention writes instead
\[
B_nB_n-B_n\hcup{1}\dd B_n,
\]
then on \(\Z_N\)-cocycles the two representatives differ by
\[
2B_n\hcup{1}\dd B_n
=
2N\,B_n\hcup{1}b_{n+1}
\se{2N}0 .
\]
Thus the sign difference is invisible after reduction mod \(2N\) on
lifted \(\Z_N\)-cocycles. However, for the integral cochain identity
\eqref{Sqd1}, the plus sign in \eqref{Sqdef} is the convention
compatible with \eqref{cupkrel}.

We also note that, for a $\R$-valued cochain $m_d$ and using
\eqn{cupkrel} and \eqref{Bsdef},
\begin{align}
\label{Sq1Bs}
& \gSq^1(m_{d}) = m_{d}\hcup{d-1} m_{d} + m_{d}\hcup{d} \dd m_{d}
\nonumber\\
&=\frac12 (-)^{d} 
[\dd (m_{d}\hcup{d} m_{d}) 
-\dd m_{d} \hcup{d} m_{d}] 
+\frac12  m_{d} \hcup{d} \dd m_{d} 
\nonumber\\
&=
(-)^{d} \Bs_2 (m_{d}\hcup{d} m_{d}) -(-)^d \Bs_2 m_{d} \hcup{d} m_{d}
+  m_{d} \hcup{d} \Bs_2 m_{d}
\nonumber\\
&=
(-)^{d} \Bs_2  \gSq^0 m_{d} 
-2 (-)^d \Bs_2 m_{d} \hcup{d+1} \Bs_2 m_{d}
\nonumber\\
&=
(-)^{d} \Bs_2 \gSq^0 m_{d} 
-2 (-)^d \gSq^0 \Bs_2 m_{d} 
\end{align}
This way, we obtain a relation between Steenrod square and Bockstein
homomorphism, when $m_d$ is a $\Z_2$-valued cochain
\begin{align}
\label{Sq1Bs2}
  \gSq^1(m_{d}) \se{2} \Bs_2 m_{d} ,
\end{align}
where we have used $\gSq^0 m_{d} \se{2} m_d$ for $\Z_2$-valued cochain.

\section{A simplicial model for the target space \(\cK\)}
\label{triK}

In this appendix, we construct an explicit simplicial model for a connected target space \(\cK\) with prescribed homotopy groups
\[
\pi_1(\cK)=G,\qquad
\pi_2(\cK)=A_2,\qquad
\pi_3(\cK)=A_3,\qquad \cdots ,
\]
where \(G\) is a discrete group and the \(A_j\) (\(j\ge 2\)) are Abelian groups. More precisely, we construct a reduced simplicial set \(K_\bullet\), whose geometric realization \(|K_\bullet|\) possesses the desired homotopy type. We will often simply write \(K\) for this simplicial model.

Strictly speaking, \(K_\bullet\) is a simplicial set rather than an ordinary abstract simplicial complex, as it features a single vertex but permits multiple distinct loop edges labeled by elements of \(G\). Equivalently, it can be viewed as a generalized simplicial complex that allows multiple simplices to share the same boundary.

We denote the resulting simplicial model by
\begin{widetext}
\begin{equation}\label{eq:Bdef}
K_\bullet =
\begin{cases}
K_\cdot(), & G=\{1\},\ A_2=A_3=\cdots=0,\\[2mm]
K_\cdot(G), & A_2=A_3=\cdots=0,\\[2mm]
K_{\alpha_2,k_3;\alpha_3,k_4;\cdots;\alpha_n,k_{n+1}}
(G,A_2,\cdots,A_n),
& A_{n+1}=A_{n+2}=\cdots=0 .
\end{cases}
\end{equation}
\end{widetext}
Here, \(K_\cdot()\) is the trivial one-point simplicial set, while \(K_\cdot(G)\) is a simplicial model of the classifying space \(BG\). The general model is built via an inductive Postnikov construction based on the following data:

\begin{enumerate}
\item A sequence of groups \(G, A_2, \ldots, A_n\) defining the homotopy groups of the target:
\[
\pi_1(K_\bullet)=G,\qquad
\pi_j(K_\bullet)=A_j,\quad 2\le j\le n .
\]

\item For each \(j=2,\ldots,n\), a group homomorphism
\[
\alpha_j:G\to \Aut(A_j)
\]
specifying the action of the fundamental group \(G=\pi_1(K_\bullet)\) on the higher homotopy group \(A_j=\pi_j(K_\bullet)\). We denote the group \(A_j\) equipped with this \(G\)-action by \(A_j^{\alpha_j}\).

\item A sequence of Postnikov cocycles \(k_3, k_4, \ldots, k_{n+1}\). Inductively, suppose the \((j-1)\)-stage
\[
K^{(j-1)}
=
K_{\alpha_2,k_3;\cdots;\alpha_{j-1},k_j}
(G,A_2,\ldots,A_{j-1})
\]
has been constructed. The next Postnikov class is represented by a cocycle
\[
k_{j+1}\in
Z^{j+1}\bigl(K^{(j-1)},A_j^{\alpha_j}\bigr),
\]
or equivalently, by a cohomology class \([k_{j+1}] \in H^{j+1}\bigl(K^{(j-1)},A_j^{\alpha_j}\bigr)\). This cocycle determines the next fibration, \(K(A_j,j)\longrightarrow K^{(j)}\longrightarrow K^{(j-1)}\).
\end{enumerate}

With fixed identifications of the homotopy groups, these are precisely the
standard Postnikov data of a connected homotopy \(n\)-type. 

\begin{figure}[t!]
\centering
\begin{tikzpicture}[
    scale=0.5,
    edge/.style={
        thick, 
        postaction={decorate},
        decoration={markings, mark=at position 0.55 with {\arrow{Stealth[length=3mm, width=2mm]}}}
    }
]

    \coordinate (V0) at (0,0);
    \coordinate (V1) at (4,0);
    \coordinate (V2) at (2,3.464); 

    \fill[blue!10] (V0) -- (V1) -- (V2) -- cycle;

    \draw[edge] (V0) -- (V1) node[midway, below=3pt] {$x_1^{01}$};
    \draw[edge] (V1) -- (V2) node[midway, right=4pt] {$x_1^{12}$};
    \draw[edge] (V0) -- (V2) node[midway, left=4pt] {$x_1^{02}$};

    \filldraw (V0) circle (1.5pt) node[below left] {pt};
    \filldraw (V1) circle (1.5pt) node[below right] {pt};
    \filldraw (V2) circle (1.5pt) node[above] {pt};

    \node at (2, 1.155) {$x_2^{012}$};

\end{tikzpicture}
\caption{A triangle, or \(2\)-simplex, in \([K_\bullet]_2\). The labels satisfy
\(x_1^{pq}\in G\), with \(x_1^{12}x_1^{01}=x_1^{02}\), and
\(x_2^{012}\in A_2\). All vertices \(pt_0,pt_1,pt_2\) are identified with the
single vertex \(\mathrm{pt}\).}
\label{fig:tetr2}
\end{figure}

We now explicitly describe the simplices of \(K_\bullet\). The \(0\)-simplices consist of a single vertex, \([K_\bullet]_0=\{\mathrm{pt}\}\). The \(1\)-simplices correspond to \(G\), \([K_\bullet]_1=G\), where each group element labels a loop based at \(\mathrm{pt}\).

A \(2\)-simplex is decorated by group elements on its three edges and an \(A_2\)-label on its face (see Fig.~\ref{fig:tetr2}). We write
\[
s[012]=(x_1^{01},x_1^{02},x_1^{12};x_2^{012}),
\]
where \(x_1^{pq}\in G\) and \(x_2^{012}\in A_2\). The edge labels must satisfy the flatness condition:
\[
x_1^{12}(x_1^{02})^{-1}x_1^{01}=1
\quad \Longleftrightarrow \quad
x_1^{02}=x_1^{12}x_1^{01}.
\]
This ensures that the boundary of the triangle is null-homotopic in the \(BG\) direction.

More generally, a \(d\)-simplex in \(K_\bullet\) is a decorated simplex
\[
s[0\cdots d]
=
(x_1^{pq};x_2^{pqr};\cdots;x_q^{i_0\cdots i_q};\cdots),
\]
where \(x_1^{pq}\in G\) for \(0\le p<q\le d\) and, for \(2\le q\le \min(d,n)\),
\[
x_q^{i_0\cdots i_q}\in A_q,
\qquad
0\le i_0<\cdots<i_q\le d.
\]
Thus, \(x_1\) decorates edges, \(x_2\) decorates triangles, \(x_3\) decorates tetrahedra, and so on. For \(q>n\), we set \(A_q=0\). 

The face maps \(d_m\) are defined by deleting the \(m\)-th vertex: \(d_m s[0\cdots d]=s[0\cdots \widehat m\cdots d]\), and restricting the labels accordingly. Degeneracy maps repeat a vertex; labels on collapsed edges default to the identity in \(G\), while higher labels on degenerate faces vanish.

Using multiplicative notation for \(G\) and additive notation for the \(A_j\) groups, we define canonical cochains. The canonical \(G\)-valued \(1\)-cochain \(x_1\) projects onto the edge labels, \(x_1(s[01])=x_1^{01}\). Similarly, the canonical \(A_j\)-valued \(j\)-cochain \(x_j\) projects onto the \(j\)-face labels, \(x_j(s[i_0\cdots i_j])=x_j^{i_0\cdots i_j}\).

The base constraint is the flatness of \(x_1\): \(\dd x_1=1\). Imposing this yields the first Postnikov stage \(K_\cdot(G)\), a simplicial model of \(BG\) with \(\pi_1(K_\cdot(G))=G\) and vanishing higher homotopy groups. 

For higher labels (\(j\ge 2\)), we utilize the twisted coboundary operator \(\dd_{\alpha_j}\). On a \((j+1)\)-simplex, it evaluates as:
\begin{align}
(\dd_{\alpha_j}x_j)^{i_0\cdots i_{j+1}}
&=
\alpha_j(x_1^{i_0i_1})
x_j^{i_1\cdots i_{j+1}}
\nonumber\\
&\quad
+\sum_{m=1}^{j+1}
(-1)^m
x_j^{i_0\cdots \widehat{i_m}\cdots i_{j+1}} .
\end{align}
For instance, on a tetrahedron \(s[0123]\):
\begin{align}
\dd_{\alpha_2}x_2(s[0123])
&=
\alpha_2(x_1^{01})x_2^{123}
-x_2^{023}
+x_2^{013}
-x_2^{012}.
\end{align}

The next Postnikov stage is defined by the constraint \(\dd_{\alpha_2}x_2=k_3(x_1)\). Here, \(k_3\in Z^3(K_\cdot(G),A_2^{\alpha_2})\) represents the first Postnikov class. Because \(K_\cdot(G)\) models \(BG\), its cohomology maps to the group cohomology \(H^3(G,A_2^{\alpha_2})\). Therefore, \(k_3\) can be chosen as a normalized group cocycle depending only on the edge labels \(x_1\). Imposing this local constraint across all simplices yields the second stage \(K_{\alpha_2,k_3}(G,A_2)\).

This reasoning extends to all higher stages. Once \(K^{(j-1)}\) is constructed, the subsequent Postnikov cocycle \(k_{j+1}\) is chosen to depend only on the lower canonical cochains \(x_1,x_2,\ldots,x_{j-1}\). The \(j\)-th canonical cochain \(x_j\) must then satisfy \(\dd_{\alpha_j}x_j = k_{j+1}(x_1,\ldots,x_{j-1})\), defining the \(j\)-th Postnikov extension.

In summary, the \(d\)-simplices of the complete model \(K_\bullet\) form the set of all valid label assignments on the standard \(d\)-simplex \(\Delta^d\):
\begin{widetext}
\begin{align}
[K_\bullet]_d
=
\Bigg\{ &
(x_1, x_2, \ldots, x_{m}) \in C^1(\Delta^d, G) \times \prod_{q=2}^{m} C^q(\Delta^d, A_q)
\ \Bigg|
\nonumber\\
&\ \ \ \
\dd x_1=1,\ 
\dd_{\alpha_j}x_j
=
k_{j+1}(x_1,\ldots,x_{j-1}),
\ 
2\le j\le \min(n,d-1)
\Bigg\},
\end{align}
\end{widetext}
where \(m = \min(d,n)\). Here, \(C^q(\Delta^d, A_q)\) denotes the set of all possible \(A_q\)-valued assignments to the \(q\)-dimensional faces of the \(d\)-simplex. 

The consistency of these local equations is guaranteed by the cocycle condition \(\dd_{\alpha_j}k_{j+1}=0\) from the previous stage. Shifting \(k_{j+1}\) by a coboundary simply corresponds to redefining the cochain \(x_j\). By construction, the resulting simplicial set possesses the required homotopy groups:
\begin{align}
\pi_1(K_\bullet)&=G,\qquad
\pi_j(K_\bullet)=A_j,\quad 2\le j\le n,
\nonumber\\
\pi_m(K_\bullet)&=0,\quad m>n.
\end{align}
Thus, \(K_\bullet\) provides a rigorous simplicial model for the target connected homotopy \(n\)-type.

\section{The group $H^{d+2}\big(K(\Z_n,d-1),\RZ\big)$}
\label{Hd2}

We have
\begin{equation}
\boxed{
H^{d+2}\big(K(\Z_n,d-1),\RZ\big)
\cong
\begin{cases}
0, & d=2,\\[4pt]
\mathbb Z_{\gcd(n,2)}, & d\ge 3,
\end{cases}\;}
\end{equation}
\ie it is $\mathbb Z_2$ for every even $n$ and $0$ for odd $n$, in all
dimensions $d\ge 3$; it vanishes identically at $d=2$.

To understand the above result, we note that there is $\Z_n$-valued cocycle
$x_{d-1}$ (the canonical cocycle) on $K(\mathbb Z_n,d-1)$ satisfying
\begin{align}
 \dd x_{d-1} \se{n} 0 .
\end{align}
We work with the integral lift of  $x_{d-1}$, representing $x_{d-1}\in
C^{d-1}(K(\Z_n,d-1),\Z_n)$ by an $\Z$-valued cochain with values in
$\{0,\dots,n-1\}$.

We like to use $x_{d-1}$ to construct the generator of
$H^{d+2}(K(\Z_n,d-1);\RZ)$.  For example $\frac 1n x_{d-1}$ satisfies
\begin{align}
 \dd \frac 1n x_{d-1} \se{1}0,
\end{align}
Thus $\frac 1n x_{d-1}$ is a  $\RZ$-valued cocycle and is a generator of
$H^{d-1}(K(\Z_n,d-1);\RZ)$.

To construct a  $(d+2)$-cocycle, we apply $\gSq^3$ to $x_{d-1}$ and consider  
\begin{align}
\Om_{d+2}= \frac 1{n^2} \gSq^3 x_{d-1}.
\end{align}
To show $\frac {1}{n^2} \gSq^3 x_{d-1}$ is a non-trivial generator in
$H^{d-1}(K(\Z_n,d-1);\RZ)$, let 
\begin{align}
z_d \equiv
\tfrac1n\,\dd x_{d-1} = \Bs_n x_{d-1} \in C^{d}(K(\Z_n,d-1),\Z) .
\end{align}
Because $\R$ gives no positive-degree cohomology of a finite group space, the
Bockstein of $0\to\Z\to\R\to\RZ\to0$ is an isomorphism
\begin{align}
\label{RZtoZ}
\dd:H^{d+2}(K(\Z_n,d-1),\RZ)\;\xrightarrow{\cong}\;H^{d+3}(K(\Z_n,d-1),\Z).
\end{align}
Since (see \eqref{Sqdef})
\begin{equation}
\gSq^3(\dd x_{d-1})=(\dd x_{d-1})\hcup{d-3}(\dd x_{d-1})
= n^2\,z_d\hcup{d-3}z_d ,
\end{equation}
therefore
\begin{equation}
\dd \Om_{d+2} =
\tfrac{1}{n^2}\,\dd\gSq^3 x_{d-1}=z_d\hcup{d-3}z_d
\end{equation}
This is integer-valued, so $\Omega_{d+2}=\tfrac1{n^2}\gSq^3x_{d-1}$ is a
genuine $\RZ$-cocycle, and its integral Bockstein is $z_d\hcup{d-3}z_d$.  For
$d=2$, $\hcup{d-3}=\hcup{-1}=0$, so the integral Bockstein class is $0$, and
$\tfrac1{n^2}\gSq^3x_{d-1}$ is a $\RZ$-coboundary.

Apply \eqref{cupkrel} with $f_m=h_n=z_d$ (both degree $d$, closed) and $k=d-2$,
so that the ``$\hcup{k-1}$'' terms produce $\hcup{d-3}$:
\begin{align}
\dd\big(z_d\hcup{d-2}z_d\big)
&=\underbrace{(-1)^{d+2}}_{(-1)^d}\,z_d\hcup{d-3}z_d
+\underbrace{(-1)^{d^2+2d}}_{(-1)^d}\,z_d\hcup{d-3}z_d
\nonumber\\
&=2(-1)^d\,z_d\hcup{d-3}z_d .
\end{align}
Hence
\begin{equation}
2\,\big(z_d\hcup{d-3}z_d\big)=(-1)^d\,\dd\big(z_d\hcup{d-2}z_d\big)\quad\text{is exact.}
\end{equation}
So $[z_d\hcup{d-3}z_d]$ has order dividing $2$.

Since all reduced cohomology of $K(\Z_n,d-1)$ is $n$-torsion, hence contains
no $2$-torsion if $n$ is odd. In this case the class vanishes -- consistent
with $H^{d+2}(K(\Z_n,d-1),\RZ)=0$ for odd $n$.

It remains to show $[z_d\hcup{d-3}z_d]\neq 0$ for even $n$ and $d\ge3$.
First, we reduce  $n = $ even to $n=2$.  For even $n$ use the injection
$s:\Z_2\hookrightarrow\Z_n$, $1\mapsto m\equiv n/2$, inducing
$s:K(\Z_2,d-1)\to K(\Z_n,d-1)$. The fundamental cocycle pulls back to
$s^*x_{d-1}^{(n)}=m\,x_{d-1}^{(2)}$ (integer cochains). Since higher cup
products, $\dd$, and hence $\gSq$ are natural, and $\gSq$ is quadratic,
\begin{equation}
s^*\Big(\tfrac1{n^2}\gSq^3 x^{(n)}\Big)
=\tfrac{m^2}{n^2}\gSq^3 x^{(2)}
=\tfrac{1}{4}\gSq^3 x^{(2)}=\Omega^{(2)}_{d+2}.
\end{equation}
Thus $s^*\Omega^{(n)}=\Omega^{(2)}$: it suffices to prove non-triviality at $n=2$.

For $n=2$ ($m=1$) one has $\rho_2 z_d = \Sq^1 u$, where
$\rho_2$ is the mod 2 reduction $\Z\twoheadrightarrow  \Z_2$ and
$u=\rho_2 x_{d-1} \in H^{d-1}(K(\Z_2,d-1),\Z_2)$. 
\begin{equation}
z_d\hcup{d-3}z_d \se{2} \Sq^3(\rho_2 z_d)\;=\;\Sq^3 \Sq^1\,u .
\end{equation}
$\Sq^3\Sq^1$ is admissible with excess $e(3,1)=2$. By Serre's theorem,
$\Sq^I x_{d-1}$ is nonzero on the fundamental class $x_{d-1}$ of $K(\Z_2,d-1)$ whenever
$e(I)\le d-1$, \ie for $d\ge3$. Hence $\Sq^3\Sq^1u\neq0$, so the integral class
$z_d\hcup{d-3}z_d$ is nonzero for $d\ge3$.
So 
$\Om_{d+2}= \frac 1{n^2} \gSq^3 x_{d-1}$
 generates $H^{d+2}(K(\Z_n,d-1),\RZ)\cong \Z_{\gcd(n,2)}$
for $d\geq 3$. 

When $d=3$, $\Om_{5} = \frac 1{n^2} \gSq^3 x_{2}$ reduces to
\begin{align}
\label{Om5}
 \Om_{5} = \frac 1{n^2} \gSq^3 x_2 = \frac 1n x_2 \Bs_n x_2.
\end{align}

We can also show that 
\begin{align}
 \Om'_{d+2}= \frac 12 \gSq^2 \Bs_n x_{d-1}
\end{align}
generates $H^{d+2}(K(\Z_n,d-1),\RZ)$.
Applying \eqref{Sqd1}, we obtain
\begin{align}
 \dd \Om'_{d+2} &=  (-)^{d}\gSq^3 \Bs_n x_{d-1} 
\nonumber\\
&
= (-)^{d} \Bs_n x_{d-1} \hcup{d-3} \Bs_n x_{d-1}
= (-)^{d} \dd \Om_{d+2}.
\end{align}
So $\dd \Om'_{d+2}$ and $\dd \Om_{d+2}$ are the equivalent generators in
$H^{d+3}(K(\Z_n,d-1),\Z)$.  From Bockstein isomorphism \eqref{RZtoZ}, $
\Om'_{d+2}$ and $ \Om_{d+2}$ are the equivalent generators in
$H^{d+2}(K(\Z_n,d-1),\RZ)$.

\section{An introduction to the Serre spectral sequence}
\label{SSintro}

We briefly review the Serre spectral sequence in a form adapted to the
calculations used in this paper. Consider a fibration
\[
F\longrightarrow E\longrightarrow B .
\]
In our applications, \(E\) is a higher-group classifying space whose
cohomology we want to compute, \(B\) is a lower Postnikov stage, and
\(F\) is an Eilenberg-MacLane space describing the next higher gauge
field.

The key idea is that a cochain on \(E\) can be organized according to
how many degrees come from the base \(B\) and how many degrees come from
the fiber \(F\). Thus a total degree-\(n\) cochain may contain pieces of
bidegree
\[
(p,q),\qquad p+q=n,
\]
where \(p\) is the base degree and \(q\) is the fiber degree.

The Serre spectral sequence formalizes this by introducing a decreasing
filtration of the cochain complex:
\[
C^n(E;A)=\cF^0C^n
\supset
\cF^1C^n
\supset
\cF^2C^n
\supset
\cdots ,
\]
where, roughly,
\[
\cF^pC^n
=
\{\text{cochains of total degree \(n\) whose base degree is at least \(p\)}\}.
\]
Thus:
\[
\cF^0C^n
\]
contains all degree-\(n\) cochains,
\[
\cF^1C^n
\]
contains those with no purely-fiber component, and so on.

The ordinary coboundary preserves this filtration:
\[
\dd(\cF^pC^n)\subset \cF^pC^{n+1}.
\]
This simple fact is the starting point of the spectral sequence.

The spectral sequence consists of pages
\[
E_0,\ E_1,\ E_2,\ \ldots,\ E_r,\ \ldots,\ E_\infty .
\]
Each page is bigraded:
\[
E_r=\bigoplus_{p,q}E_r^{p,q}.
\]
The group \(E_r^{p,q}\) is built from total degree-\((p+q)\) cochains
whose leading nonzero part has base degree \(p\) and fiber degree \(q\),
but with equivalence relations that become stronger as \(r\) increases.

\subsection{The \(E_0\)-page}

The \(E_0\)-page is the associated graded object of the filtered cochain
complex:
\begin{align}
E_0^{p,q}
=
\frac{\cF^pC^{p+q}}{\cF^{p+1}C^{p+q}}.
\end{align}
Thus \(E_0^{p,q}\) keeps only the component of a total degree-\((p+q)\)
cochain with base degree exactly \(p\). Components with base degree
larger than \(p\) are invisible on \(E_0^{p,q}\).

For example, if a cochain has the form
\[
\Om=b_pf_q+b_{p+1}f_{q-1}+b_{p+2}f_{q-2}+\cdots ,
\]
where \(b_i\) has base degree \(i\) and \(f_j\) has fiber degree \(j\),
then on \(E_0^{p,q}\) only the leading term
\[
b_pf_q
\]
is seen.

\subsection{Cycles and boundaries on the \(r\)-th page}

For \(r\ge 0\), define
\begin{align}
Z_r^{p,q}
=
\left\{
\Om\in \cF^pC^{p+q}
\ \middle|\
\dd\Om\in \cF^{p+r}C^{p+q+1}
\right\}.
\end{align}
Thus \(\Om\in Z_r^{p,q}\) means that \(\dd\Om\) has no terms of base
degree
\[
p,\ p+1,\ \ldots,\ p+r-1.
\]
Equivalently, \(\Om\) is closed up to filtration degree \(p+r-1\).

The subgroup of terms that are already trivial on the \(r\)-th page is
\[
B_r^{p,q}
=
Z_{r-1}^{p+1,q-1}
+
\dd Z_{r-1}^{p-r+1,q+r-2}.
\]
The first term,
\[
Z_{r-1}^{p+1,q-1},
\]
consists of cochains whose leading base degree is already larger than
\(p\). Such cochains are invisible in bidegree \((p,q)\). The second
term,
\[
\dd Z_{r-1}^{p-r+1,q+r-2},
\]
consists of coboundaries coming from the previous stage.

The \(r\)-th page is then
\begin{align}
E_r^{p,q}
=
\frac{Z_r^{p,q}}{B_r^{p,q}}.
\end{align}
Thus a class
\[
[\Om]\in E_r^{p,q}
\]
is represented by a cochain \(\Om\in Z_r^{p,q}\), modulo the equivalence
relation generated by \(B_r^{p,q}\). It is important that
\[
[\Om]\in E_r^{p,q}
\]
is not yet an ordinary cohomology class of \(E\). It is a class on the
\(r\)-th page of the spectral sequence.

\subsection{The spectral-sequence differential}

Let
\[
\Om\in Z_r^{p,q}.
\]
By definition,
\[
\dd\Om\in \cF^{p+r}C^{p+q+1}.
\]
Furthermore,
\[
\dd^2\Om=0,
\]
so \(\dd\Om\) automatically represents an \(r\)-cycle in bidegree
\[
(p+r,q-r+1).
\]
Therefore \(\dd\Om\) defines a class in
\[
E_r^{p+r,q-r+1}.
\]
We define
\begin{align}
\dd_r[\Om]=[\dd\Om].
\end{align}
Thus
\begin{align}
\dd_r:E_r^{p,q}\longrightarrow E_r^{p+r,q-r+1}.
\end{align}
The differential \(\dd_r\) raises the base degree by \(r\), lowers the
fiber degree by \(r-1\), and raises the total degree by one:
\[
(p+r)+(q-r+1)=p+q+1.
\]

Although \(\dd_r\) is induced by the ordinary coboundary \(\dd\), it is not
the same object. The ordinary coboundary acts on cochains. The
differential \(\dd_r\) acts on equivalence classes on the \(r\)-th page.

The next page is obtained by taking cohomology with respect to \(\dd_r\):
\begin{align}
E_{r+1}^{p,q}
=
\frac{
\ker\bigl(\dd_r:E_r^{p,q}\to E_r^{p+r,q-r+1}\bigr)
}{
\operatorname{im}\bigl(\dd_r:E_r^{p-r,q+r-1}\to E_r^{p,q}\bigr)
}.
\end{align}
Thus passing from \(E_r\) to \(E_{r+1}\) does two things:
\begin{enumerate}
\item it keeps only the classes whose outgoing \(\dd_r\)-differential
vanishes;
\item it quotients out classes that are incoming \(\dd_r\)-differentials.
\end{enumerate}

In physics language:
\begin{widetext}
\begin{align}
\text{nonzero outgoing differential}
&\quad\Longleftrightarrow\quad
\text{the candidate topological term has an obstruction},
\nonumber\\[1mm]
\text{incoming differential}
&\quad\Longleftrightarrow\quad
\text{the candidate term is equivalent to a trivial term}.
\end{align}
\end{widetext}
This is analogous to ordinary cohomology: cocycles are killed by
\(\dd\), while coboundaries are hit by \(\dd\). A spectral sequence
repeats this idea page by page, with the successive differentials
\(\dd_r\).

\subsection{The first few pages}

We now spell out the first few pages in a way that is useful for
computations.

\paragraph{\(E_0\)-page.}

Since
\[
Z_0^{p,q}
=
\left\{
\Om\in \cF^pC^{p+q}
\ \middle|\
\dd\Om\in \cF^pC^{p+q+1}
\right\}
=
\cF^pC^{p+q},
\]
and
\[
B_0^{p,q}=\cF^{p+1}C^{p+q},
\]
we obtain
\begin{align}
E_0^{p,q}
=
\frac{\cF^pC^{p+q}}{\cF^{p+1}C^{p+q}}.
\end{align}
Therefore \(E_0^{p,q}\) consists of cochains of bidegree exactly
\((p,q)\).

\paragraph{\(E_1\)-page.}

The condition
\[
\Om\in Z_1^{p,q}
\]
means
\[
\dd\Om\in \cF^{p+1}C^{p+q+1}.
\]
Equivalently, the component of \(\dd\Om\) with base degree \(p\) must
vanish. This is precisely the fiber coboundary condition.

Thus a representative of \(E_1^{p,q}\) may be written schematically as
\[
b_p[f_q],
\]
where \(b_p\) is a degree-\(p\) cochain on the base and
\[
[f_q]\in H^q(F;A)
\]
is a fiber cohomology class. Hence
\begin{align}
E_1^{p,q}
\cong
C^p\bigl(B;H^q(F;A)\bigr).
\end{align}
At this stage we have taken cohomology only in the fiber direction.

\paragraph{\(E_2\)-page.}

The condition
\[
\Om\in Z_2^{p,q}
\]
means that \(\dd\Om\) has no components of base degree \(p\) or
\(p+1\). The first condition says that the fiber part is closed. The
second says that the base coboundary of the resulting fiber cohomology
class vanishes.

Thus the \(E_2\)-page is
\begin{align}
E_2^{p,q}
=
H^p\bigl(B;H^q(F;A)\bigr),
\end{align}
assuming that the action of \(\pi_1(B)\) on \(H^q(F;A)\) is trivial.
A typical class is written
\[
[b_p][f_q],
\]
where
\[
[b_p]\in H^p(B;A),
\qquad
[f_q]\in H^q(F;A).
\]

In many computations, the \(E_2\)-page is the first page one writes
down. The later differentials
\[
\dd_r:E_r^{p,q}\to E_r^{p+r,q-r+1}
\]
then encode the twisting of the fibration.

As \(r\) increases, the cycle condition becomes stronger:
\[
Z_0^{p,q}\supset Z_1^{p,q}\supset Z_2^{p,q}\supset\cdots .
\]
At the same time, more equivalences are imposed by the boundary groups
\(B_r^{p,q}\). Therefore classes may disappear either because they have
a nonzero outgoing differential or because they are hit by an incoming
differential.

\subsection{The \(E_\infty\)-page and the extension problem}

After all differentials are taken into account, the spectral sequence
stabilizes to the \(E_\infty\)-page. However, \(E_\infty\) is not
usually the cohomology group itself. Instead, it gives the associated
graded pieces of a filtration of the desired cohomology group.

For fixed total degree \(n\), there is a filtration
\begin{widetext}
\[
0=F^{n+1}H^n(E;A)
\subset
F^nH^n(E;A)
\subset
\cdots
\subset
F^1H^n(E;A)
\subset
F^0H^n(E;A)
=
H^n(E;A),
\]
\end{widetext}
such that
\begin{align}
E_\infty^{p,n-p}
\cong
F^pH^n(E;A)/F^{p+1}H^n(E;A).
\end{align}
Thus \(E_\infty\) tells us the graded pieces of \(H^n(E;A)\). One may
still need to solve an extension problem to reconstruct the actual
group.

For example, suppose the only nonzero \(E_\infty\)-terms in total degree
\(n\) are
\[
E_\infty^{0,n}\cong \Z_2,
\qquad
E_\infty^{n,0}\cong \Z_2.
\]
Then \(H^n(E;A)\) fits into a short exact sequence
\[
0\longrightarrow \Z_2
\longrightarrow H^n(E;A)
\longrightarrow \Z_2
\longrightarrow 0.
\]
There are two possibilities:
\[
H^n(E;A)\cong \Z_2\oplus\Z_2,
\]
or
\[
H^n(E;A)\cong \Z_4.
\]
The \(E_\infty\)-page alone does not distinguish these two cases. To
solve the extension problem, one constructs explicit cocycles and
checks their orders.

\section{Serre spectral sequence computation for
\(H^5(K_{x_2^2}(\Z_2,2,\Z_2,3);\RZ)\)}
\label{SSK1}

Let
\[
K_1=K_{x_2^2}(\Z_2,2,\Z_2,3)
\]
be the two-stage Postnikov system
\[
K(\Z_2,3)
\longrightarrow
K_1
\longrightarrow
K(\Z_2,2),
\]
with Postnikov class
\[
k_4=x_2^2\in H^4(K(\Z_2,2),\Z_2).
\]
Here
\[
x_2\in H^2(K(\Z_2,2),\Z_2)
\]
is the canonical generator.

Equivalently, on \(K_1\) there are canonical \(\Z_2\)-valued cochains
\[
x_2\in C^2(K_1,\Z_2),
\qquad
x_3\in C^3(K_1,\Z_2),
\]
satisfying
\begin{align}
\label{K1-cocycle-eq}
\dd x_2&\se{2}0,\\
\dd x_3&\se{2}x_2^2.
\nonumber
\end{align}
The second equation is the cochain-level manifestation of the
Postnikov class \(k_4=x_2^2\).

We want to compute
\[
H^5(K_1;\RZ).
\]
We use the Serre spectral sequence for the fibration
\[
K(\Z_2,3)\longrightarrow K_1\longrightarrow K(\Z_2,2).
\]
Since \(K(\Z_2,2)\) is simply connected, there is no nontrivial
\(\pi_1\)-action on the cohomology of the fiber. Hence
\begin{align}
E_2^{p,q}
&=
H^p\bigl(K(\Z_2,2);H^q(K(\Z_2,3);\RZ)\bigr)
\nonumber\\
&\ \ \ \
\Longrightarrow
H^{p+q}(K_1;\RZ).
\end{align}

\subsection{The relevant \(E_2\)-page}

We only need classes that can affect total degree \(5\), together with
their nearby differential targets.

On the base \(K(\Z_2,2)\), define
\begin{align}
Q_4&:=\frac14\,\mathfrak P(x_2)
\in H^4(K(\Z_2,2);\RZ),\\
S_5&:=\frac12\,x_2\Sq^1x_2
\in H^5(K(\Z_2,2);\RZ),\\
R_6&:=\frac12\,x_2^3
\in H^6(K(\Z_2,2);\RZ).
\end{align}
Here
\[
H^4(K(\Z_2,2);\RZ)\cong \Z_4
\]
is generated by \(Q_4\). The class \(S_5\) is the string
self-statistics class, while \(R_6\) will appear as the \(\dd_4\)-image
of the particle-string mutual term.

On the fiber \(K(\Z_2,3)\), let
\[
x_3\in H^3(K(\Z_2,3),\Z_2)
\]
be the canonical generator, and define
\begin{align}
U_3&:=\frac12\,x_3
\in H^3(K(\Z_2,3);\RZ),\\
P_5&:=\frac12\,\Sq^2x_3
=
\frac12\,(x_3\hcup{1}x_3)
\in H^5(K(\Z_2,3);\RZ).
\end{align}
The class \(P_5\) is the fiber class corresponding to fermionic
particle self-statistics.

Before imposing the Postnikov constraint \(\dd x_3=x_2^2\), the
total-degree-\(5\) candidates are
\begin{align}
S_5&=\frac12\,x_2\Sq^1x_2,\\
M_5&=\frac12\,x_2x_3,\\
P_5&=\frac12\,\Sq^2x_3.
\end{align}
They correspond respectively to
\[
\begin{array}{ccl}
S_5 &:& \text{string self-statistics},\\[1mm]
M_5 &:& \text{particle-string mutual statistics},\\[1mm]
P_5 &:& \text{particle self-statistics}.
\end{array}
\]

The relevant part of the \(E_2\)-page is:
\[
\begin{array}{c|ccccccc}
q\backslash p
&0&1&2&3&4&5&6\\
\hline
5
&\Z_2\langle P_5\rangle
&\ast&\ast&\ast&\ast&\ast&\ast
\\[1mm]
4
&0&\ast&\ast&\ast&\ast&\ast&\ast
\\[1mm]
3
&\Z_2\langle U_3\rangle
&\ast
&\Z_2\langle M_5\rangle
&\ast&\ast&\ast&\ast
\\[1mm]
2
&0&\ast&\ast&\ast&\ast&\ast&\ast
\\[1mm]
1
&0&\ast&\ast&\ast&\ast&\ast&\ast
\\[1mm]
0
&\RZ
&0&0&0
&\Z_4\langle Q_4\rangle
&\Z_2\langle S_5\rangle
&\Z_2\langle R_6\rangle
\\
\hline
\end{array}
\]
The diagonal \(p+q=5\) contains precisely
\[
P_5,\qquad M_5,\qquad S_5.
\]

\subsection{The Postnikov differential \(\dd_4\)}

In this fibration, \(x_2\) is a base cochain and \(x_3\) is a fiber
cochain. Hence we assign bidegrees
\[
x_2:\ (2,0),
\qquad
x_3:\ (0,3).
\]
The Postnikov relation
\[
\dd x_3\se{2}x_2^2
\]
takes a cochain of bidegree \((0,3)\) to a cochain of bidegree
\((4,0)\). Therefore it raises base degree by \(4\) and lowers fiber
degree by \(3\). This is precisely the bidegree of a \(\dd_4\)-differential:
\[
\dd_4:E_4^{p,q}\longrightarrow E_4^{p+4,q-3}.
\]
There are no nonzero \(\dd_2\) or \(\dd_3\) differentials generated by this
Postnikov class, so
\[
E_2=E_3=E_4
\]
for the entries relevant to this computation.

\paragraph{The differential of \(U_3\).}

Consider
\[
U_3=\frac12\,x_3.
\]
On the fiber, \(x_2=0\), so \(U_3\) is closed. On the total space
\(K_1\), however,
\begin{align}
\dd U_3
=
\dd\left(\frac12x_3\right)
=
\frac12\,\dd x_3
\se{1}
\frac12\,x_2^2.
\end{align}
The right-hand side has bidegree \((4,0)\). Therefore
\[
\dd_4(U_3)=\frac12\,x_2^2.
\]
Since
\[
2Q_4
=
\frac12\,\mathfrak P(x_2)
\se{1,\dd}
\frac12\,x_2^2,
\]
we write
\begin{align}
\label{d4-U3}
\dd_4(U_3)=2Q_4.
\end{align}
Thus \(U_3\) does not survive to \(E_5\).

Physically, the pure particle \(U_3\) term cannot be extended to a
closed topological term on the total space once the Postnikov constraint
\[
\dd x_3=x_2^2
\]
is imposed.

\paragraph{The differential of \(M_5\).}

Now consider the mutual particle-string candidate
\[
M_5=\frac12\,x_2x_3.
\]
Using \(\dd x_2=0\), we find
\begin{align}
\dd M_5
&=
\dd\left(\frac12\,x_2x_3\right)\\
&=
\frac12\,x_2\,\dd x_3
\nonumber\\
&\se{1}
\frac12\,x_2^3.
\nonumber
\end{align}
The result has bidegree \((6,0)\), so it defines a \(\dd_4\)-differential
\[
\dd_4:E_4^{2,3}\longrightarrow E_4^{6,0}.
\]
Therefore
\begin{align}
\label{d4-M5}
\dd_4(M_5)
=
\frac12\,x_2^3
=
R_6.
\end{align}
Thus the mutual particle-string term does not survive to \(E_\infty\).

The relevant \(\dd_4\)-differentials are therefore
\[
\begin{array}{rcl}
\dd_4:E_4^{0,3}&\longrightarrow&E_4^{4,0},\\[1mm]
U_3&\longmapsto&2Q_4,
\end{array}
\qquad
\begin{array}{rcl}
\dd_4:E_4^{2,3}&\longrightarrow&E_4^{6,0},\\[1mm]
M_5&\longmapsto&R_6.
\end{array}
\]

Graphically:
\begin{widetext}
\[
\xymatrix@C=1.8em@R=1.2em{
q\backslash p
&0&1&2&3&4&5&6
\\
5
&\Z_2\langle P_5\rangle
&\ast&\ast&\ast&\ast&\ast&\ast
\\
4
&0&\ast&\ast&\ast&\ast&\ast&\ast
\\
3
&\Z_2\langle U_3\rangle \ar[dddrrrr]^{\dd_4}
&\ast
&\Z_2\langle M_5\rangle \ar[dddrrrr]^{\dd_4}
&\ast&\ast&\ast&\ast
\\
2
&0&\ast&\ast&\ast&\ast&\ast&\ast
\\
1
&0&\ast&\ast&\ast&\ast&\ast&\ast
\\
0
&\RZ
&0&0&0
&\Z_4\langle Q_4\rangle
&\Z_2\langle S_5\rangle
&\Z_2\langle R_6\rangle .
}
\]
\end{widetext}

\subsection{Tracking the total-degree-\(5\) diagonal}

We now examine the three total-degree-\(5\) candidates:
\[
P_5,\qquad M_5,\qquad S_5.
\]

\paragraph{The class \(S_5\).}

The base class
\[
S_5=\frac12\,x_2\Sq^1x_2
\]
lies in
\[
E_4^{5,0}.
\]
It has no possible outgoing differential, because a differential
\[
\dd_r:E_r^{5,0}\to E_r^{5+r,1-r}
\]
would land in negative fiber degree for \(r\ge 2\). It also has no
incoming differential relevant to this computation. Hence \(S_5\)
survives to \(E_\infty\).

\paragraph{The class \(M_5\).}

The mutual class
\[
M_5=\frac12\,x_2x_3
\]
does not survive, since
\[
\dd_4(M_5)=R_6\neq0.
\]
Thus an independent particle-string mutual-statistics term is obstructed
by the nontrivial Postnikov constraint
\[
\dd x_3=x_2^2.
\]

\paragraph{The class \(P_5\).}

The fiber class
\[
P_5=\frac12\,\Sq^2x_3
\]
lies in
\[
E_4^{0,5}.
\]
Its possible \(\dd_4\)-target is
\[
E_4^{4,2},
\]
but the relevant fiber cohomology in degree \(2\) vanishes. Therefore
\(\dd_4(P_5)=0\).

There is also no possible \(\dd_5\)-target relevant here. The next possible
obstruction is
\[
\dd_6:E_6^{0,5}\longrightarrow E_6^{6,0}.
\]
At the cochain level, this obstruction is governed by the Adem
combination
\[
\Sq^2\Sq^2x_2+\Sq^3\Sq^1x_2.
\]
The Adem relation
\[
\Sq^2\Sq^2+\Sq^3\Sq^1=0
\]
implies that this obstruction vanishes in cohomology. Therefore
\(P_5\) survives to \(E_\infty\). However, lifting \(P_5\) to an actual
closed cochain on \(K_1\) requires adding a secondary operation. This
secondary operation will also solve the extension problem below.

Thus in total degree \(5\), the \(E_\infty\)-page has two surviving
\(\Z_2\) pieces:
\begin{align}
E_\infty^{5,0}&\cong \Z_2\langle S_5\rangle,\\
E_\infty^{0,5}&\cong \Z_2\langle P_5\rangle.
\end{align}
Therefore there is a filtration
\[
0\subset F^5\subset H^5(K_1;\RZ)
\]
with
\[
F^5\cong \Z_2\langle S_5\rangle,
\qquad
H^5(K_1;\RZ)/F^5
\cong
\Z_2\langle P_5\rangle.
\]
Equivalently,
\begin{align}
0\rightarrow
\Z_2\langle S_5\rangle
\rightarrow
H^5(K_1;\RZ)
\rightarrow
\Z_2\langle P_5\rangle
\rightarrow
0.
\end{align}
Thus the associated graded group is
\[
\Z_2\oplus\Z_2.
\]
The actual group is either
\[
\Z_2\oplus\Z_2
\]
or
\[
\Z_4.
\]
We now solve this extension problem explicitly.

\subsection{Solving the extension problem by an explicit cocycle}

The naive particle self-statistics term is
\[
\frac12\,\Sq^2x_3.
\]
Because \(x_3\) is not closed on \(K_1\), one must use the generalized
Steenrod square
\[
\gSq^2x_3
=
x_3\hcup{1}x_3+x_3\hcup{2}\dd x_3.
\]
This expression reduces to the ordinary representative
\[
\Sq^2x_3=x_3\hcup{1}x_3
\]
on the fiber, where \(\dd x_3=0\). Hence
\[
\frac12\,\gSq^2x_3
\]
is the natural lift of \(P_5\) from the fiber to the total space.

However, it is not closed by itself. Using the generalized Steenrod
identity
\[
\dd\gSq^k c\se{2}\gSq^k(\dd c)
\]
for \(\Z_2\)-valued cochains, we get
\begin{align}
\dd\,\gSq^2x_3
\se{2}
\gSq^2(\dd x_3)
\se{2}
\gSq^2(x_2^2).
\end{align}
Since \(x_2\) is closed mod \(2\), \(x_2^2=\Sq^2x_2\), and therefore
\[
\gSq^2(x_2^2)
\se{2}
\Sq^2\Sq^2x_2.
\]
The Adem relation says
\[
\Sq^2\Sq^2+\Sq^3\Sq^1=0.
\]
Thus
\[
\Sq^2\Sq^2x_2
\se{2,\dd}
\Sq^3\Sq^1x_2.
\]
More precisely, choose a \(\Z_2\)-valued \(5\)-cochain
\[
\Del_5(x_2)
\]
such that
\begin{align}
\label{Delta5-def}
\dd\Del_5(x_2)
\se{2}
\Sq^2\Sq^2x_2+\Sq^3\Sq^1x_2.
\end{align}
This cochain exists because the right-hand side is the cochain-level
representative of the Adem relation.

We claim that a closed lift of \(P_5\) is
\begin{align}
\label{Omega5-def}
\Om_5(x_2,x_3)
=
\frac12\,\gSq^2x_3
+
\frac18\,\gSq^3x_2
+
\frac12\,\Del_5(x_2).
\end{align}

Let us check closedness carefully. Since $x_2$ is $\Z_2$-valued cocycle,
we have
\[
\dd x_3\se{2}\gSq^2x_2 \se{2} \Sq^2x_2
\]
and
\begin{align}
\dd\left(\frac12\,\gSq^2x_3\right)
&\se{1}
\frac12\,\gSq^2(\dd x_3)
\se{1}
\frac12\,\Sq^2\Sq^2x_2.
\end{align}
Again because \(x_2\) is a \(\Z_2\)-valued cocycle, an integer lift
satisfies
\[
\dd x_2=2\Bs_2x_2.
\]
For \(x_2\) of degree \(2\),
\[
\gSq^3x_2=x_2\,\dd x_2.
\]
Hence
\begin{align}
\dd\left(\frac18\,\gSq^3x_2\right)
&=
\frac18\,\dd(x_2\,\dd x_2)
=
\frac18\,(\dd x_2)^2
\nonumber\\
&=
\frac18\,(2\Bs_2x_2)^2
=
\frac12\,(\Bs_2x_2)^2.
\end{align}
Modulo \(2\),
\[
\Bs_2x_2\se{2}\Sq^1x_2.
\]
Therefore
\begin{align}
\dd\left(\frac18\,\gSq^3x_2\right)
\se{1,\dd}
\frac12(\Sq^1x_2)^2
\se{1,\dd}
\frac12\,\Sq^3\Sq^1x_2
\end{align}
Finally,
\[
\dd\left(\frac12\,\Del_5(x_2)\right)
=
\frac12\,\dd\Del_5(x_2).
\]
Using \eqref{Delta5-def}, the three pieces add to
\begin{align}
\dd\Om_5
&\se{1}
\frac12\,\Sq^2\Sq^2x_2
+
\frac12\,\Sq^3\Sq^1x_2
\nonumber\\
&\ \ \ \
+
\frac12\left(
\Sq^2\Sq^2x_2+\Sq^3\Sq^1x_2
\right)
\nonumber\\
&\se{1}0.
\end{align}
Thus
\[
\dd\Om_5\se{1}0,
\]
and \(\Om_5\) defines a class in
\[
H^5(K_1;\RZ).
\]
By construction, its leading fiber component is
\[
\frac12\,\Sq^2x_3=P_5.
\]
Hence \(\Om_5\) is a lift of the surviving \(E_\infty^{0,5}\) class.

\subsection{The order of \(\Om_5\)}

We now compute twice this class. Since
\[
\gSq^2x_3
\qquad\text{and}\qquad
\Del_5(x_2)
\]
are integer-valued cochains, multiplying their coefficients by \(2\)
makes those terms vanish in \(\RZ\). Therefore
\begin{align}
2\Om_5(x_2,x_3)
\se{1}
\frac14\,\gSq^3x_2.
\end{align}
Using
\[
\gSq^3x_2=x_2\,\dd x_2
=
2x_2\Bs_2x_2,
\]
we obtain
\begin{align}
2\Om_5(x_2,x_3)
&\se{1}
\frac14\cdot 2x_2\Bs_2x_2
\nonumber\\
&=
\frac12\,x_2\Bs_2x_2.
\end{align}
Again using
\[
\Bs_2x_2\se{2,\dd}\Sq^1x_2,
\]
we find
\begin{align}
2\Om_5(x_2,x_3)
\se{1,\dd}
\frac12\,x_2\Sq^1x_2
=
S_5.
\end{align}
Thus the lift \(\Om_5\) satisfies
\begin{align}
\label{2Omega-S5}
2\Om_5=S_5.
\end{align}
Since \(S_5\neq0\), the class \(\Om_5\) has order \(4\).

This solves the extension problem. The short exact sequence
\[
0\longrightarrow
\Z_2\langle S_5\rangle
\longrightarrow
H^5(K_1;\RZ)
\longrightarrow
\Z_2\langle P_5\rangle
\longrightarrow
0
\]
is not split. Instead, it is the nonsplit extension
\[
0\longrightarrow
\Z_2
\longrightarrow
\Z_4
\longrightarrow
\Z_2
\longrightarrow
0.
\]
Therefore
\begin{align}
H^5(K_1;\RZ)\cong \Z_4.
\end{align}

A generator may be chosen as
\begin{align}
\label{K1-H5-generator}
\Om_5(x_2,x_3)
=
\frac12\,\gSq^2x_3
+
\frac18\,\gSq^3x_2
+
\frac12\,\Del_5(x_2),
\end{align}
where
\[
\dd\Del_5(x_2)
\se{2}
\gSq^2\gSq^2x_2+\gSq^3\gSq^1x_2.
\]
It obeys
\begin{align}
2\Om_5
\se{1,\dd}
\frac12\,x_2\Sq^1x_2
=
S_5.
\end{align}

Thus the spectral sequence first gives the associated graded group
\[
\operatorname{gr}H^5(K_1;\RZ)\cong \Z_2\oplus\Z_2,
\]
with graded pieces generated by \(P_5\) and \(S_5\). The explicit
secondary-operation cocycle \(\Om_5\) shows that these two \(\Z_2\)
pieces combine into a single \(\Z_4\):
\[
H^5(K_1;\RZ)\cong \Z_4.
\]

\bibliography{./all,./publst,./local}

\begin{thebibliography}{59}%
\makeatletter
\providecommand \@ifxundefined [1]{%
 \@ifx{#1\undefined}
}%
\providecommand \@ifnum [1]{%
 \ifnum #1\expandafter \@firstoftwo
 \else \expandafter \@secondoftwo
 \fi
}%
\providecommand \@ifx [1]{%
 \ifx #1\expandafter \@firstoftwo
 \else \expandafter \@secondoftwo
 \fi
}%
\providecommand \natexlab [1]{#1}%
\providecommand \enquote  [1]{``#1''}%
\providecommand \bibnamefont  [1]{#1}%
\providecommand \bibfnamefont [1]{#1}%
\providecommand \citenamefont [1]{#1}%
\providecommand \href@noop [0]{\@secondoftwo}%
\providecommand \href [0]{\begingroup \@sanitize@url \@href}%
\providecommand \@href[1]{\@@startlink{#1}\@@href}%
\providecommand \@@href[1]{\endgroup#1\@@endlink}%
\providecommand \@sanitize@url [0]{\catcode `\\12\catcode `\$12\catcode
  `\&12\catcode `\#12\catcode `\^12\catcode `\_12\catcode `\%12\relax}%
\providecommand \@@startlink[1]{}%
\providecommand \@@endlink[0]{}%
\providecommand \url  [0]{\begingroup\@sanitize@url \@url }%
\providecommand \@url [1]{\endgroup\@href {#1}{\urlprefix }}%
\providecommand \urlprefix  [0]{URL }%
\providecommand \Eprint [0]{\href }%
\providecommand \doibase [0]{https://doi.org/}%
\providecommand \selectlanguage [0]{\@gobble}%
\providecommand \bibinfo  [0]{\@secondoftwo}%
\providecommand \bibfield  [0]{\@secondoftwo}%
\providecommand \translation [1]{[#1]}%
\providecommand \BibitemOpen [0]{}%
\providecommand \bibitemStop [0]{}%
\providecommand \bibitemNoStop [0]{.\EOS\space}%
\providecommand \EOS [0]{\spacefactor3000\relax}%
\providecommand \BibitemShut  [1]{\csname bibitem#1\endcsname}%
\let\auto@bib@innerbib\@empty
\bibitem [{\citenamefont {Leinaas}\ and\ \citenamefont
  {Myrheim}(1977)}]{LM7701}%
  \BibitemOpen
  \bibfield  {author} {\bibinfo {author} {\bibfnamefont {J.~M.}\ \bibnamefont
  {Leinaas}}\ and\ \bibinfo {author} {\bibfnamefont {J.}~\bibnamefont
  {Myrheim}},\ }\bibfield  {title} {\bibinfo {title} {On the theory of
  identical particles},\ }\href {https://doi.org/10.1007/bf02727953} {\bibfield
   {journal} {\bibinfo  {journal} {Nuovo Cim B}\ }\textbf {\bibinfo {volume}
  {37}},\ \bibinfo {pages} {1} (\bibinfo {year} {1977})}\BibitemShut {NoStop}%
\bibitem [{\citenamefont {Wilczek}(1982)}]{W8257}%
  \BibitemOpen
  \bibfield  {author} {\bibinfo {author} {\bibfnamefont {F.}~\bibnamefont
  {Wilczek}},\ }\bibfield  {title} {\bibinfo {title} {Quantum mechanics of
  fractional-spin particles},\ }\href
  {https://doi.org/10.1103/physrevlett.49.957} {\bibfield  {journal} {\bibinfo
  {journal} {Physical Review Letter}\ }\textbf {\bibinfo {volume} {49}},\
  \bibinfo {pages} {957} (\bibinfo {year} {1982})}\BibitemShut {NoStop}%
\bibitem [{\citenamefont {Halperin}(1984)}]{H8483}%
  \BibitemOpen
  \bibfield  {author} {\bibinfo {author} {\bibfnamefont {B.~I.}\ \bibnamefont
  {Halperin}},\ }\bibfield  {title} {\bibinfo {title} {Statistics of
  quasiparticles and the hierarchy of fractional quantized {{H}all} states},\
  }\href@noop {} {\bibfield  {journal} {\bibinfo  {journal} {Physical Review
  Letter}\ }\textbf {\bibinfo {volume} {52}},\ \bibinfo {pages} {1583}
  (\bibinfo {year} {1984})}\BibitemShut {NoStop}%
\bibitem [{\citenamefont {Arovas}\ \emph {et~al.}(1984)\citenamefont {Arovas},
  \citenamefont {Schrieffer},\ and\ \citenamefont {Wilczek}}]{ASW8422}%
  \BibitemOpen
  \bibfield  {author} {\bibinfo {author} {\bibfnamefont {D.}~\bibnamefont
  {Arovas}}, \bibinfo {author} {\bibfnamefont {J.~R.}\ \bibnamefont
  {Schrieffer}},\ and\ \bibinfo {author} {\bibfnamefont {F.}~\bibnamefont
  {Wilczek}},\ }\bibfield  {title} {\bibinfo {title} {Fractional statistics and
  the quantum {H}all effect},\ }\href
  {https://doi.org/10.1103/physrevlett.53.722} {\bibfield  {journal} {\bibinfo
  {journal} {Physical Review Letter}\ }\textbf {\bibinfo {volume} {53}},\
  \bibinfo {pages} {722} (\bibinfo {year} {1984})}\BibitemShut {NoStop}%
\bibitem [{\citenamefont {Wu}(1984)}]{W8411}%
  \BibitemOpen
  \bibfield  {author} {\bibinfo {author} {\bibfnamefont {Y.-S.}\ \bibnamefont
  {Wu}},\ }\bibfield  {title} {\bibinfo {title} {Multiparticle quantum
  mechanics obeying fractional statistics},\ }\href
  {https://doi.org/10.1103/physrevlett.53.111} {\bibfield  {journal} {\bibinfo
  {journal} {Physical Review Letter}\ }\textbf {\bibinfo {volume} {53}},\
  \bibinfo {pages} {111} (\bibinfo {year} {1984})}\BibitemShut {NoStop}%
\bibitem [{\citenamefont {Witten}(1989)}]{W8951}%
  \BibitemOpen
  \bibfield  {author} {\bibinfo {author} {\bibfnamefont {E.}~\bibnamefont
  {Witten}},\ }\bibfield  {title} {\bibinfo {title} {Quantum field theory and
  the {Jones} polynomial},\ }\href {https://doi.org/10.1007/bf01217730}
  {\bibfield  {journal} {\bibinfo  {journal} {Commun. Math. Phys.}\ }\textbf
  {\bibinfo {volume} {121}},\ \bibinfo {pages} {351} (\bibinfo {year}
  {1989})}\BibitemShut {NoStop}%
\bibitem [{\citenamefont {Moore}\ and\ \citenamefont {Read}(1991)}]{MR9162}%
  \BibitemOpen
  \bibfield  {author} {\bibinfo {author} {\bibfnamefont {G.}~\bibnamefont
  {Moore}}\ and\ \bibinfo {author} {\bibfnamefont {N.}~\bibnamefont {Read}},\
  }\bibfield  {title} {\bibinfo {title} {Nonabelions in the fractional quantum
  hall effect},\ }\href {https://doi.org/10.1016/0550-3213(91)90407-o}
  {\bibfield  {journal} {\bibinfo  {journal} {Nucl. Phys. B}\ }\textbf
  {\bibinfo {volume} {360}},\ \bibinfo {pages} {362} (\bibinfo {year}
  {1991})}\BibitemShut {NoStop}%
\bibitem [{\citenamefont {Wen}(1991)}]{W9102}%
  \BibitemOpen
  \bibfield  {author} {\bibinfo {author} {\bibfnamefont {X.-G.}\ \bibnamefont
  {Wen}},\ }\bibfield  {title} {\bibinfo {title} {Non-{A}belian statistics in
  the {FQH} states},\ }\href {https://doi.org/10.1103/PhysRevLett.66.802}
  {\bibfield  {journal} {\bibinfo  {journal} {Phys. Rev. Lett.}\ }\textbf
  {\bibinfo {volume} {66}},\ \bibinfo {pages} {802} (\bibinfo {year}
  {1991})}\BibitemShut {NoStop}%
\bibitem [{\citenamefont {Wang}\ and\ \citenamefont {Levin}(2014)}]{WL1437}%
  \BibitemOpen
  \bibfield  {author} {\bibinfo {author} {\bibfnamefont {C.}~\bibnamefont
  {Wang}}\ and\ \bibinfo {author} {\bibfnamefont {M.}~\bibnamefont {Levin}},\
  }\bibfield  {title} {\bibinfo {title} {Braiding statistics of loop
  excitations in three dimensions},\ }\href
  {https://doi.org/10.1103/physrevlett.113.080403} {\bibfield  {journal}
  {\bibinfo  {journal} {Physical Review Letter}\ }\textbf {\bibinfo {volume}
  {113}},\ \bibinfo {pages} {080403} (\bibinfo {year} {2014})},\ \Eprint
  {https://arxiv.org/abs/1403.7437} {arXiv:1403.7437} \BibitemShut {NoStop}%
\bibitem [{\citenamefont {{Jiang}}\ \emph {et~al.}(2014)\citenamefont
  {{Jiang}}, \citenamefont {{Mesaros}},\ and\ \citenamefont {{Ran}}}]{JMR1462}%
  \BibitemOpen
  \bibfield  {author} {\bibinfo {author} {\bibfnamefont {S.}~\bibnamefont
  {{Jiang}}}, \bibinfo {author} {\bibfnamefont {A.}~\bibnamefont {{Mesaros}}},\
  and\ \bibinfo {author} {\bibfnamefont {Y.}~\bibnamefont {{Ran}}},\ }\bibfield
   {title} {\bibinfo {title} {{Generalized Modular Transformations in
  $(3+1)\mathrm{D}$ Topologically Ordered Phases and Triple Linking Invariant
  of Loop Braiding}},\ }\href {https://doi.org/10.1103/physrevx.4.031048}
  {\bibfield  {journal} {\bibinfo  {journal} {Phys. Rev. X}\ }\textbf {\bibinfo
  {volume} {4}},\ \bibinfo {pages} {031048} (\bibinfo {year} {2014})},\ \Eprint
  {https://arxiv.org/abs/1404.1062} {arXiv:1404.1062} \BibitemShut {NoStop}%
\bibitem [{\citenamefont {Wang}\ and\ \citenamefont {Wen}(2015)}]{WW1454}%
  \BibitemOpen
  \bibfield  {author} {\bibinfo {author} {\bibfnamefont {J.~C.}\ \bibnamefont
  {Wang}}\ and\ \bibinfo {author} {\bibfnamefont {X.-G.}\ \bibnamefont {Wen}},\
  }\bibfield  {title} {\bibinfo {title} {Non-{A}belian string and particle
  braiding in topological order: {Modular SL(3,Z) representation}
  and(3+1)-dimensional twisted gauge theory},\ }\href
  {https://doi.org/10.1103/physrevb.91.035134} {\bibfield  {journal} {\bibinfo
  {journal} {Phys. Rev. B}\ }\textbf {\bibinfo {volume} {91}},\ \bibinfo
  {pages} {035134} (\bibinfo {year} {2015})},\ \Eprint
  {https://arxiv.org/abs/1404.7854} {arXiv:1404.7854} \BibitemShut {NoStop}%
\bibitem [{\citenamefont {Thorngren}(2015)}]{T14044385}%
  \BibitemOpen
  \bibfield  {author} {\bibinfo {author} {\bibfnamefont {R.}~\bibnamefont
  {Thorngren}},\ }\bibfield  {title} {\bibinfo {title} {Framed {Wilson}
  operators, fermionic strings, and gravitational anomaly in 4d},\ }\href
  {https://doi.org/10.1007/jhep02(2015)152} {\bibfield  {journal} {\bibinfo
  {journal} {J. High Energ. Phys.}\ }\textbf {\bibinfo {volume}
  {2015}}\bibfield  {number} {\bibinfo  {number} { (2)},\ \bibinfo {pages}
  {152}},\ }\Eprint {https://arxiv.org/abs/1404.4385} {arXiv:1404.4385}
  \BibitemShut {NoStop}%
\bibitem [{\citenamefont {Fidkowski}\ \emph {et~al.}(2022)\citenamefont
  {Fidkowski}, \citenamefont {Haah},\ and\ \citenamefont
  {Hastings}}]{FH211014654}%
  \BibitemOpen
  \bibfield  {author} {\bibinfo {author} {\bibfnamefont {L.}~\bibnamefont
  {Fidkowski}}, \bibinfo {author} {\bibfnamefont {J.}~\bibnamefont {Haah}},\
  and\ \bibinfo {author} {\bibfnamefont {M.~B.}\ \bibnamefont {Hastings}},\
  }\bibfield  {title} {\bibinfo {title} {Gravitational anomaly of 3+1
  dimensional $z_2$ toric code with fermionic charges and fermionic loop
  self-statistics},\ }\bibfield  {journal} {\bibinfo  {journal} {Physical
  Review B}\ }\textbf {\bibinfo {volume} {106}},\ \href
  {https://doi.org/10.1103/physrevb.106.165135} {10.1103/physrevb.106.165135}
  (\bibinfo {year} {2022}),\ \Eprint {https://arxiv.org/abs/2110.14654}
  {arXiv:2110.14654} \BibitemShut {NoStop}%
\bibitem [{\citenamefont {{Wan}}\ \emph {et~al.}(2021)\citenamefont {{Wan}},
  \citenamefont {{Wang}},\ and\ \citenamefont {{Wen}}}]{WW211212148}%
  \BibitemOpen
  \bibfield  {author} {\bibinfo {author} {\bibfnamefont {Z.}~\bibnamefont
  {{Wan}}}, \bibinfo {author} {\bibfnamefont {J.}~\bibnamefont {{Wang}}},\ and\
  \bibinfo {author} {\bibfnamefont {X.-G.}\ \bibnamefont {{Wen}}},\ }\bibfield
  {title} {\bibinfo {title} {{3+1d Boundaries with Gravitational Anomaly of
  4+1d Invertible Topological Order for Branch-Independent Bosonic Systems}},\
  }\href@noop {} {\  (\bibinfo {year} {2021})},\ \Eprint
  {https://arxiv.org/abs/2112.12148} {arXiv:2112.12148} \BibitemShut {NoStop}%
\bibitem [{\citenamefont {Kobayashi}\ \emph {et~al.}(2026)\citenamefont
  {Kobayashi}, \citenamefont {Li}, \citenamefont {Xue}, \citenamefont {Hsin},\
  and\ \citenamefont {Chen}}]{KC241201886}%
  \BibitemOpen
  \bibfield  {author} {\bibinfo {author} {\bibfnamefont {R.}~\bibnamefont
  {Kobayashi}}, \bibinfo {author} {\bibfnamefont {Y.}~\bibnamefont {Li}},
  \bibinfo {author} {\bibfnamefont {H.}~\bibnamefont {Xue}}, \bibinfo {author}
  {\bibfnamefont {P.-S.}\ \bibnamefont {Hsin}},\ and\ \bibinfo {author}
  {\bibfnamefont {Y.-A.}\ \bibnamefont {Chen}},\ }\bibfield  {title} {\bibinfo
  {title} {Generalized statistics on lattices},\ }\bibfield  {journal}
  {\bibinfo  {journal} {Physical Review X}\ }\textbf {\bibinfo {volume} {16}},\
  \href {https://doi.org/10.1103/6k88-w52n} {10.1103/6k88-w52n} (\bibinfo
  {year} {2026}),\ \Eprint {https://arxiv.org/abs/2412.01886}
  {arXiv:2412.01886} \BibitemShut {NoStop}%
\bibitem [{\citenamefont {Xue}(2026)}]{Xue2026StatisticsAbelian}%
  \BibitemOpen
  \bibfield  {author} {\bibinfo {author} {\bibfnamefont {H.}~\bibnamefont
  {Xue}},\ }\bibfield  {title} {\bibinfo {title} {Statistics of abelian
  topological excitations},\ }\bibfield  {journal} {\bibinfo  {journal}
  {Physical Review B}\ }\href {https://doi.org/10.1103/g3nc-fwqg}
  {10.1103/g3nc-fwqg} (\bibinfo {year} {2026})\BibitemShut {NoStop}%
\bibitem [{\citenamefont {Ji}\ and\ \citenamefont {Wen}(2020)}]{JW191213492}%
  \BibitemOpen
  \bibfield  {author} {\bibinfo {author} {\bibfnamefont {W.}~\bibnamefont
  {Ji}}\ and\ \bibinfo {author} {\bibfnamefont {X.-G.}\ \bibnamefont {Wen}},\
  }\bibfield  {title} {\bibinfo {title} {Categorical symmetry and noninvertible
  anomaly in symmetry-breaking and topological phase transitions},\ }\href
  {https://doi.org/10.1103/PhysRevResearch.2.033417} {\bibfield  {journal}
  {\bibinfo  {journal} {Phys. Rev. Res.}\ }\textbf {\bibinfo {volume} {2}},\
  \bibinfo {pages} {033417} (\bibinfo {year} {2020})},\ \Eprint
  {https://arxiv.org/abs/1912.13492} {arXiv:1912.13492} \BibitemShut {NoStop}%
\bibitem [{\citenamefont {{Kong}}\ \emph {et~al.}(2020)\citenamefont {{Kong}},
  \citenamefont {{Lan}}, \citenamefont {{Wen}}, \citenamefont {{Zhang}},\ and\
  \citenamefont {{Zheng}}}]{KZ200514178}%
  \BibitemOpen
  \bibfield  {author} {\bibinfo {author} {\bibfnamefont {L.}~\bibnamefont
  {{Kong}}}, \bibinfo {author} {\bibfnamefont {T.}~\bibnamefont {{Lan}}},
  \bibinfo {author} {\bibfnamefont {X.-G.}\ \bibnamefont {{Wen}}}, \bibinfo
  {author} {\bibfnamefont {Z.-H.}\ \bibnamefont {{Zhang}}},\ and\ \bibinfo
  {author} {\bibfnamefont {H.}~\bibnamefont {{Zheng}}},\ }\bibfield  {title}
  {\bibinfo {title} {{Algebraic higher symmetry and categorical symmetry: A
  holographic and entanglement view of symmetry}},\ }\href
  {https://doi.org/10.1103/PhysRevResearch.2.043086} {\bibfield  {journal}
  {\bibinfo  {journal} {Phys. Rev. Res.}\ }\textbf {\bibinfo {volume} {2}},\
  \bibinfo {pages} {043086} (\bibinfo {year} {2020})},\ \Eprint
  {https://arxiv.org/abs/2005.14178} {arXiv:2005.14178} \BibitemShut {NoStop}%
\bibitem [{\citenamefont {Levin}\ and\ \citenamefont {Wen}(2003)}]{LW0316}%
  \BibitemOpen
  \bibfield  {author} {\bibinfo {author} {\bibfnamefont {M.}~\bibnamefont
  {Levin}}\ and\ \bibinfo {author} {\bibfnamefont {X.-G.}\ \bibnamefont
  {Wen}},\ }\bibfield  {title} {\bibinfo {title} {Fermions, strings, and gauge
  fields in lattice spin models},\ }\href
  {https://doi.org/10.1103/physrevb.67.245316} {\bibfield  {journal} {\bibinfo
  {journal} {Phys. Rev. B}\ }\textbf {\bibinfo {volume} {67}},\ \bibinfo
  {pages} {245316} (\bibinfo {year} {2003})},\ \Eprint
  {https://arxiv.org/abs/cond-mat/0302460} {arXiv:cond-mat/0302460}
  \BibitemShut {NoStop}%
\bibitem [{\citenamefont {{Baez}}\ and\ \citenamefont
  {{Lauda}}(2003)}]{BLm0307200}%
  \BibitemOpen
  \bibfield  {author} {\bibinfo {author} {\bibfnamefont {J.~C.}\ \bibnamefont
  {{Baez}}}\ and\ \bibinfo {author} {\bibfnamefont {A.~D.}\ \bibnamefont
  {{Lauda}}},\ }\bibfield  {title} {\bibinfo {title} {{Higher-Dimensional
  Algebra V: 2-Groups}},\ }\href {https://doi.org/10.48550/arXiv.math/0307200}
  {\ ,\ \bibinfo {pages} {math/0307200} (\bibinfo {year} {2003})},\ \Eprint
  {https://arxiv.org/abs/math/0307200} {arXiv:math/0307200} \BibitemShut
  {NoStop}%
\bibitem [{\citenamefont {{Kapustin}}\ and\ \citenamefont
  {{Thorngren}}(2017)}]{KT13094721}%
  \BibitemOpen
  \bibfield  {author} {\bibinfo {author} {\bibfnamefont {A.}~\bibnamefont
  {{Kapustin}}}\ and\ \bibinfo {author} {\bibfnamefont {R.}~\bibnamefont
  {{Thorngren}}},\ }\bibinfo {title} {{Higher Symmetry and Gapped Phases of
  Gauge Theories}},\ in\ \href {https://doi.org/10.1007/978-3-319-59939-7_5}
  {\emph {\bibinfo {booktitle} {{Algebra, Geometry, and Physics in the 21st
  Century: Kontsevich Festschrift}}}},\ \bibinfo {editor} {edited by\ \bibinfo
  {editor} {\bibfnamefont {D.}~\bibnamefont {Auroux}}, \bibinfo {editor}
  {\bibfnamefont {L.}~\bibnamefont {Katzarkov}}, \bibinfo {editor}
  {\bibfnamefont {T.}~\bibnamefont {Pantev}}, \bibinfo {editor} {\bibfnamefont
  {Y.}~\bibnamefont {Soibelman}},\ and\ \bibinfo {editor} {\bibfnamefont
  {Y.}~\bibnamefont {Tschinkel}}}\ (\bibinfo  {publisher} {Birkh\"auser,
  Cham},\ \bibinfo {year} {2017})\ pp.\ \bibinfo {pages} {177--202},\ \Eprint
  {https://arxiv.org/abs/1309.4721} {arXiv:1309.4721} \BibitemShut {NoStop}%
\bibitem [{\citenamefont {Levin}\ and\ \citenamefont {Gu}(2012)}]{LG1209}%
  \BibitemOpen
  \bibfield  {author} {\bibinfo {author} {\bibfnamefont {M.}~\bibnamefont
  {Levin}}\ and\ \bibinfo {author} {\bibfnamefont {Z.-C.}\ \bibnamefont {Gu}},\
  }\bibfield  {title} {\bibinfo {title} {Braiding statistics approach to
  symmetry-protected topological phases},\ }\href
  {https://doi.org/10.1103/physrevb.86.115109} {\bibfield  {journal} {\bibinfo
  {journal} {Physical Review B}\ }\textbf {\bibinfo {volume} {86}},\ \bibinfo
  {pages} {115109} (\bibinfo {year} {2012})},\ \Eprint
  {https://arxiv.org/abs/1202.3120} {arXiv:1202.3120} \BibitemShut {NoStop}%
\bibitem [{\citenamefont {{Barkeshli}}\ \emph {et~al.}(2019)\citenamefont
  {{Barkeshli}}, \citenamefont {{Bonderson}}, \citenamefont {{Cheng}},\ and\
  \citenamefont {{Wang}}}]{BZ14104540}%
  \BibitemOpen
  \bibfield  {author} {\bibinfo {author} {\bibfnamefont {M.}~\bibnamefont
  {{Barkeshli}}}, \bibinfo {author} {\bibfnamefont {P.}~\bibnamefont
  {{Bonderson}}}, \bibinfo {author} {\bibfnamefont {M.}~\bibnamefont
  {{Cheng}}},\ and\ \bibinfo {author} {\bibfnamefont {Z.}~\bibnamefont
  {{Wang}}},\ }\bibfield  {title} {\bibinfo {title} {{Symmetry
  fractionalization, defects, and gauging of topological phases}},\ }\href
  {https://doi.org/10.1103/PhysRevB.100.115147} {\bibfield  {journal} {\bibinfo
   {journal} {Physical Review B}\ }\textbf {\bibinfo {volume} {100}},\ \bibinfo
  {pages} {115147} (\bibinfo {year} {2019})},\ \Eprint
  {https://arxiv.org/abs/1410.4540} {arXiv:1410.4540} \BibitemShut {NoStop}%
\bibitem [{\citenamefont {Wen}(2019)}]{W181202517}%
  \BibitemOpen
  \bibfield  {author} {\bibinfo {author} {\bibfnamefont {X.-G.}\ \bibnamefont
  {Wen}},\ }\bibfield  {title} {\bibinfo {title} {Emergent (anomalous) higher
  symmetries from topological order and from dynamical electromagnetic field in
  condensed matter systems},\ }\href
  {https://doi.org/10.1103/physrevb.99.205139} {\bibfield  {journal} {\bibinfo
  {journal} {Phys. Rev. B}\ }\textbf {\bibinfo {volume} {99}},\ \bibinfo
  {pages} {205139} (\bibinfo {year} {2019})},\ \Eprint
  {https://arxiv.org/abs/1812.02517} {arXiv:1812.02517} \BibitemShut {NoStop}%
\bibitem [{\citenamefont {Dijkgraaf}\ and\ \citenamefont
  {Witten}(1990)}]{DW9093}%
  \BibitemOpen
  \bibfield  {author} {\bibinfo {author} {\bibfnamefont {R.}~\bibnamefont
  {Dijkgraaf}}\ and\ \bibinfo {author} {\bibfnamefont {E.}~\bibnamefont
  {Witten}},\ }\bibfield  {title} {\bibinfo {title} {Topological gauge theories
  and group cohomology},\ }\href {https://doi.org/10.1007/bf02096988}
  {\bibfield  {journal} {\bibinfo  {journal} {Commun. Math. Phys.}\ }\textbf
  {\bibinfo {volume} {129}},\ \bibinfo {pages} {393} (\bibinfo {year}
  {1990})}\BibitemShut {NoStop}%
\bibitem [{\citenamefont {Freed}\ and\ \citenamefont
  {Quinn}(1993)}]{FreedQuinn1993}%
  \BibitemOpen
  \bibfield  {author} {\bibinfo {author} {\bibfnamefont {D.~S.}\ \bibnamefont
  {Freed}}\ and\ \bibinfo {author} {\bibfnamefont {F.}~\bibnamefont {Quinn}},\
  }\bibfield  {title} {\bibinfo {title} {Chern--simons theory with finite gauge
  group},\ }\href {https://doi.org/10.1007/BF02096860} {\bibfield  {journal}
  {\bibinfo  {journal} {Communications in Mathematical Physics}\ }\textbf
  {\bibinfo {volume} {156}},\ \bibinfo {pages} {435} (\bibinfo {year}
  {1993})}\BibitemShut {NoStop}%
\bibitem [{\citenamefont {Joyal}\ and\ \citenamefont
  {Street}(1993)}]{JoyalStreet1993}%
  \BibitemOpen
  \bibfield  {author} {\bibinfo {author} {\bibfnamefont {A.}~\bibnamefont
  {Joyal}}\ and\ \bibinfo {author} {\bibfnamefont {R.}~\bibnamefont {Street}},\
  }\bibfield  {title} {\bibinfo {title} {Braided tensor categories},\ }\href
  {https://doi.org/10.1006/aima.1993.1055} {\bibfield  {journal} {\bibinfo
  {journal} {Advances in Mathematics}\ }\textbf {\bibinfo {volume} {102}},\
  \bibinfo {pages} {20} (\bibinfo {year} {1993})}\BibitemShut {NoStop}%
\bibitem [{\citenamefont {Etingof}\ \emph
  {et~al.}(2015{\natexlab{a}})\citenamefont {Etingof}, \citenamefont {Gelaki},
  \citenamefont {Nikshych},\ and\ \citenamefont
  {Ostrik}}]{EtingofGelakiNikshychOstrik2015}%
  \BibitemOpen
  \bibfield  {author} {\bibinfo {author} {\bibfnamefont {P.}~\bibnamefont
  {Etingof}}, \bibinfo {author} {\bibfnamefont {S.}~\bibnamefont {Gelaki}},
  \bibinfo {author} {\bibfnamefont {D.}~\bibnamefont {Nikshych}},\ and\
  \bibinfo {author} {\bibfnamefont {V.}~\bibnamefont {Ostrik}},\ }\href@noop {}
  {\emph {\bibinfo {title} {Tensor Categories}}},\ \bibinfo {series}
  {Mathematical Surveys and Monographs}, Vol.\ \bibinfo {volume} {205}\
  (\bibinfo  {publisher} {American Mathematical Society},\ \bibinfo {address}
  {Providence, RI},\ \bibinfo {year} {2015})\BibitemShut {NoStop}%
\bibitem [{\citenamefont {Gu}\ and\ \citenamefont {Wen}(2014)}]{GuWen2014}%
  \BibitemOpen
  \bibfield  {author} {\bibinfo {author} {\bibfnamefont {Z.-C.}\ \bibnamefont
  {Gu}}\ and\ \bibinfo {author} {\bibfnamefont {X.-G.}\ \bibnamefont {Wen}},\
  }\bibfield  {title} {\bibinfo {title} {Symmetry-protected topological orders
  for interacting fermions: Fermionic topological nonlinear sigma models and a
  special group supercohomology theory},\ }\href
  {https://doi.org/10.1103/PhysRevB.90.115141} {\bibfield  {journal} {\bibinfo
  {journal} {Physical Review B}\ }\textbf {\bibinfo {volume} {90}},\ \bibinfo
  {pages} {115141} (\bibinfo {year} {2014})},\ \Eprint
  {https://arxiv.org/abs/1201.2648} {arXiv:1201.2648 [cond-mat.str-el]}
  \BibitemShut {NoStop}%
\bibitem [{\citenamefont {Kapustin}\ \emph {et~al.}(2015)\citenamefont
  {Kapustin}, \citenamefont {Thorngren}, \citenamefont {Turzillo},\ and\
  \citenamefont {Wang}}]{KTT1429}%
  \BibitemOpen
  \bibfield  {author} {\bibinfo {author} {\bibfnamefont {A.}~\bibnamefont
  {Kapustin}}, \bibinfo {author} {\bibfnamefont {R.}~\bibnamefont {Thorngren}},
  \bibinfo {author} {\bibfnamefont {A.}~\bibnamefont {Turzillo}},\ and\
  \bibinfo {author} {\bibfnamefont {Z.}~\bibnamefont {Wang}},\ }\bibfield
  {title} {\bibinfo {title} {Fermionic symmetry protected topological phases
  and cobordisms},\ }\href {https://doi.org/10.1007/jhep12(2015)052} {\bibfield
   {journal} {\bibinfo  {journal} {J. High Energ. Phys.}\ }\textbf {\bibinfo
  {volume} {2015}}\bibfield  {number} {\bibinfo  {number} { (12)},\ \bibinfo
  {pages} {1}},\ }\Eprint {https://arxiv.org/abs/1406.7329} {arXiv:1406.7329}
  \BibitemShut {NoStop}%
\bibitem [{\citenamefont {Wen}(2017)}]{W161201418}%
  \BibitemOpen
  \bibfield  {author} {\bibinfo {author} {\bibfnamefont {X.-G.}\ \bibnamefont
  {Wen}},\ }\bibfield  {title} {\bibinfo {title} {Exactly soluble local bosonic
  cocycle models, statistical transmutation, and simplest time-reversal
  symmetric topological orders in 3+1 dimensions},\ }\href
  {https://doi.org/10.1103/physrevb.95.205142} {\bibfield  {journal} {\bibinfo
  {journal} {Phys. Rev. B}\ }\textbf {\bibinfo {volume} {95}},\ \bibinfo
  {pages} {205142} (\bibinfo {year} {2017})},\ \Eprint
  {https://arxiv.org/abs/1612.01418} {arXiv:1612.01418} \BibitemShut {NoStop}%
\bibitem [{\citenamefont {Kapustin}\ and\ \citenamefont
  {Thorngren}(2017)}]{KT170108264}%
  \BibitemOpen
  \bibfield  {author} {\bibinfo {author} {\bibfnamefont {A.}~\bibnamefont
  {Kapustin}}\ and\ \bibinfo {author} {\bibfnamefont {R.}~\bibnamefont
  {Thorngren}},\ }\bibfield  {title} {\bibinfo {title} {Fermionic {SPT} phases
  in higher dimensions and bosonization},\ }\href
  {https://doi.org/10.1007/jhep10(2017)080} {\bibfield  {journal} {\bibinfo
  {journal} {J. High Energ. Phys.}\ }\textbf {\bibinfo {volume}
  {2017}}\bibfield  {number} {\bibinfo  {number} { (10)},\ \bibinfo {pages}
  {80}},\ }\Eprint {https://arxiv.org/abs/1701.08264} {arXiv:1701.08264}
  \BibitemShut {NoStop}%
\bibitem [{\citenamefont {Wang}\ and\ \citenamefont {Gu}(2018)}]{WangGu2018}%
  \BibitemOpen
  \bibfield  {author} {\bibinfo {author} {\bibfnamefont {Q.-R.}\ \bibnamefont
  {Wang}}\ and\ \bibinfo {author} {\bibfnamefont {Z.-C.}\ \bibnamefont {Gu}},\
  }\bibfield  {title} {\bibinfo {title} {Towards a complete classification of
  symmetry-protected topological phases for interacting fermions in three
  dimensions and a general group supercohomology theory},\ }\href
  {https://doi.org/10.1103/PhysRevX.8.011055} {\bibfield  {journal} {\bibinfo
  {journal} {Physical Review X}\ }\textbf {\bibinfo {volume} {8}},\ \bibinfo
  {pages} {011055} (\bibinfo {year} {2018})},\ \Eprint
  {https://arxiv.org/abs/1703.10937} {arXiv:1703.10937 [cond-mat.str-el]}
  \BibitemShut {NoStop}%
\bibitem [{\citenamefont {{Chatterjee}}\ and\ \citenamefont
  {{Wen}}(2023)}]{CW220303596}%
  \BibitemOpen
  \bibfield  {author} {\bibinfo {author} {\bibfnamefont {A.}~\bibnamefont
  {{Chatterjee}}}\ and\ \bibinfo {author} {\bibfnamefont {X.-G.}\ \bibnamefont
  {{Wen}}},\ }\bibfield  {title} {\bibinfo {title} {{Symmetry as a shadow of
  topological order and a derivation of topological holographic principle}},\
  }\href {https://doi.org/10.1103/PhysRevB.107.155136} {\bibfield  {journal}
  {\bibinfo  {journal} {Phys. Rev. B}\ }\textbf {\bibinfo {volume} {107}},\
  \bibinfo {pages} {155136} (\bibinfo {year} {2023})},\ \Eprint
  {https://arxiv.org/abs/2203.03596} {arXiv:2203.03596} \BibitemShut {NoStop}%
\bibitem [{\citenamefont {{Freed}}\ \emph {et~al.}(2022)\citenamefont
  {{Freed}}, \citenamefont {{Moore}},\ and\ \citenamefont
  {{Teleman}}}]{FT220907471}%
  \BibitemOpen
  \bibfield  {author} {\bibinfo {author} {\bibfnamefont {D.~S.}\ \bibnamefont
  {{Freed}}}, \bibinfo {author} {\bibfnamefont {G.~W.}\ \bibnamefont
  {{Moore}}},\ and\ \bibinfo {author} {\bibfnamefont {C.}~\bibnamefont
  {{Teleman}}},\ }\href@noop {} {\bibinfo {title} {{Topological symmetry in
  quantum field theory}}} (\bibinfo {year} {2022}),\ \Eprint
  {https://arxiv.org/abs/2209.07471} {arXiv:2209.07471} \BibitemShut {NoStop}%
\bibitem [{\citenamefont {Kong}\ \emph {et~al.}(2015)\citenamefont {Kong},
  \citenamefont {Wen},\ and\ \citenamefont {Zheng}}]{KZ150201690}%
  \BibitemOpen
  \bibfield  {author} {\bibinfo {author} {\bibfnamefont {L.}~\bibnamefont
  {Kong}}, \bibinfo {author} {\bibfnamefont {X.-G.}\ \bibnamefont {Wen}},\ and\
  \bibinfo {author} {\bibfnamefont {H.}~\bibnamefont {Zheng}},\ }\href@noop {}
  {\bibinfo {title} {Boundary-bulk relation for topological orders as the
  functor mapping higher categories to their centers}} (\bibinfo {year}
  {2015}),\ \Eprint {https://arxiv.org/abs/1502.01690} {arXiv:1502.01690}
  \BibitemShut {NoStop}%
\bibitem [{\citenamefont {{Thorngren}}\ and\ \citenamefont
  {{Wang}}(2024)}]{TW191202817}%
  \BibitemOpen
  \bibfield  {author} {\bibinfo {author} {\bibfnamefont {R.}~\bibnamefont
  {{Thorngren}}}\ and\ \bibinfo {author} {\bibfnamefont {Y.}~\bibnamefont
  {{Wang}}},\ }\bibfield  {title} {\bibinfo {title} {{Fusion Category Symmetry
  I: Anomaly In-Flow and Gapped Phases}},\ }\href
  {https://doi.org/10.1007/JHEP04(2024)132} {\bibfield  {journal} {\bibinfo
  {journal} {J. High Energ. Phys.}\ }\textbf {\bibinfo {volume} {2024}},\
  \bibinfo {pages} {132}},\ \Eprint {https://arxiv.org/abs/1912.02817}
  {arXiv:1912.02817} \BibitemShut {NoStop}%
\bibitem [{\citenamefont {Kong}\ and\ \citenamefont {Wen}(2014)}]{KW1458}%
  \BibitemOpen
  \bibfield  {author} {\bibinfo {author} {\bibfnamefont {L.}~\bibnamefont
  {Kong}}\ and\ \bibinfo {author} {\bibfnamefont {X.-G.}\ \bibnamefont {Wen}},\
  }\bibfield  {title} {\bibinfo {title} {Braided fusion categories,
  gravitational anomalies, and the mathematical framework for topological
  orders in any dimensions},\ }\href@noop {} {\  (\bibinfo {year} {2014})},\
  \Eprint {https://arxiv.org/abs/1405.5858} {arXiv:1405.5858} \BibitemShut
  {NoStop}%
\bibitem [{\citenamefont {{Lichtman}}\ \emph {et~al.}(2021)\citenamefont
  {{Lichtman}}, \citenamefont {{Thorngren}}, \citenamefont {{Lindner}},
  \citenamefont {{Stern}},\ and\ \citenamefont {{Berg}}}]{LB200304328}%
  \BibitemOpen
  \bibfield  {author} {\bibinfo {author} {\bibfnamefont {T.}~\bibnamefont
  {{Lichtman}}}, \bibinfo {author} {\bibfnamefont {R.}~\bibnamefont
  {{Thorngren}}}, \bibinfo {author} {\bibfnamefont {N.~H.}\ \bibnamefont
  {{Lindner}}}, \bibinfo {author} {\bibfnamefont {A.}~\bibnamefont {{Stern}}},\
  and\ \bibinfo {author} {\bibfnamefont {E.}~\bibnamefont {{Berg}}},\
  }\bibfield  {title} {\bibinfo {title} {{Bulk anyons as edge symmetries:
  Boundary phase diagrams of topologically ordered states}},\ }\href
  {https://doi.org/10.1103/PhysRevB.104.075141} {\bibfield  {journal} {\bibinfo
   {journal} {Physical Review B}\ }\textbf {\bibinfo {volume} {104}},\ \bibinfo
  {pages} {075141} (\bibinfo {year} {2021})},\ \Eprint
  {https://arxiv.org/abs/2003.04328} {arXiv:2003.04328} \BibitemShut {NoStop}%
\bibitem [{\citenamefont {{Gaiotto}}\ and\ \citenamefont
  {{Kulp}}(2021)}]{GK200805960}%
  \BibitemOpen
  \bibfield  {author} {\bibinfo {author} {\bibfnamefont {D.}~\bibnamefont
  {{Gaiotto}}}\ and\ \bibinfo {author} {\bibfnamefont {J.}~\bibnamefont
  {{Kulp}}},\ }\bibfield  {title} {\bibinfo {title} {{Orbifold groupoids}},\
  }\href {https://doi.org/10.1007/JHEP02(2021)132} {\bibfield  {journal}
  {\bibinfo  {journal} {J. High Energ. Phys.}\ }\textbf {\bibinfo {volume}
  {2021}}\bibfield  {number} {\bibinfo  {number} { (2)},\ \bibinfo {pages}
  {132}},\ }\Eprint {https://arxiv.org/abs/2008.05960} {arXiv:2008.05960}
  \BibitemShut {NoStop}%
\bibitem [{\citenamefont {Apruzzi}\ \emph {et~al.}(2023)\citenamefont
  {Apruzzi}, \citenamefont {Bonetti}, \citenamefont {Etxebarria}, \citenamefont
  {Hosseini},\ and\ \citenamefont {Schäfer-Nameki}}]{AS211202092}%
  \BibitemOpen
  \bibfield  {author} {\bibinfo {author} {\bibfnamefont {F.}~\bibnamefont
  {Apruzzi}}, \bibinfo {author} {\bibfnamefont {F.}~\bibnamefont {Bonetti}},
  \bibinfo {author} {\bibfnamefont {I.~G.}\ \bibnamefont {Etxebarria}},
  \bibinfo {author} {\bibfnamefont {S.~S.}\ \bibnamefont {Hosseini}},\ and\
  \bibinfo {author} {\bibfnamefont {S.}~\bibnamefont {Schäfer-Nameki}},\
  }\bibfield  {title} {\bibinfo {title} {Symmetry {TFTs} from string theory},\
  }\href {https://doi.org/10.1007/s00220-023-04737-2} {\bibfield  {journal}
  {\bibinfo  {journal} {Communications in Mathematical Physics}\ }\textbf
  {\bibinfo {volume} {402}},\ \bibinfo {pages} {895} (\bibinfo {year}
  {2023})},\ \Eprint {https://arxiv.org/abs/2112.02092} {arXiv:2112.02092
  [hep-th]} \BibitemShut {NoStop}%
\bibitem [{\citenamefont {{Chatterjee}}\ \emph {et~al.}(2022)\citenamefont
  {{Chatterjee}}, \citenamefont {{Ji}},\ and\ \citenamefont
  {{Wen}}}]{CW221214432}%
  \BibitemOpen
  \bibfield  {author} {\bibinfo {author} {\bibfnamefont {A.}~\bibnamefont
  {{Chatterjee}}}, \bibinfo {author} {\bibfnamefont {W.}~\bibnamefont {{Ji}}},\
  and\ \bibinfo {author} {\bibfnamefont {X.-G.}\ \bibnamefont {{Wen}}},\
  }\href@noop {} {\bibinfo {title} {{Emergent generalized symmetry and maximal
  symmetry-topological-order}}} (\bibinfo {year} {2022}),\ \Eprint
  {https://arxiv.org/abs/2212.14432} {arXiv:2212.14432} \BibitemShut {NoStop}%
\bibitem [{\citenamefont {Doplicher}\ \emph
  {et~al.}(1969{\natexlab{a}})\citenamefont {Doplicher}, \citenamefont {Haag},\
  and\ \citenamefont {Roberts}}]{DHR1969_fields_I}%
  \BibitemOpen
  \bibfield  {author} {\bibinfo {author} {\bibfnamefont {S.}~\bibnamefont
  {Doplicher}}, \bibinfo {author} {\bibfnamefont {R.}~\bibnamefont {Haag}},\
  and\ \bibinfo {author} {\bibfnamefont {J.~E.}\ \bibnamefont {Roberts}},\
  }\bibfield  {title} {\bibinfo {title} {Fields, observables and gauge
  transformations. {I}},\ }\href {https://doi.org/10.1007/BF01645267}
  {\bibfield  {journal} {\bibinfo  {journal} {Communications in Mathematical
  Physics}\ }\textbf {\bibinfo {volume} {13}},\ \bibinfo {pages} {1} (\bibinfo
  {year} {1969}{\natexlab{a}})}\BibitemShut {NoStop}%
\bibitem [{\citenamefont {Doplicher}\ \emph
  {et~al.}(1969{\natexlab{b}})\citenamefont {Doplicher}, \citenamefont {Haag},\
  and\ \citenamefont {Roberts}}]{DHR1969_fields_II}%
  \BibitemOpen
  \bibfield  {author} {\bibinfo {author} {\bibfnamefont {S.}~\bibnamefont
  {Doplicher}}, \bibinfo {author} {\bibfnamefont {R.}~\bibnamefont {Haag}},\
  and\ \bibinfo {author} {\bibfnamefont {J.~E.}\ \bibnamefont {Roberts}},\
  }\bibfield  {title} {\bibinfo {title} {Fields, observables and gauge
  transformations. {II}},\ }\href {https://doi.org/10.1007/BF01645674}
  {\bibfield  {journal} {\bibinfo  {journal} {Communications in Mathematical
  Physics}\ }\textbf {\bibinfo {volume} {15}},\ \bibinfo {pages} {173}
  (\bibinfo {year} {1969}{\natexlab{b}})}\BibitemShut {NoStop}%
\bibitem [{\citenamefont {{Lan}}\ \emph {et~al.}(2019)\citenamefont {{Lan}},
  \citenamefont {{Zhu}},\ and\ \citenamefont {{Wen}}}]{LW180901112}%
  \BibitemOpen
  \bibfield  {author} {\bibinfo {author} {\bibfnamefont {T.}~\bibnamefont
  {{Lan}}}, \bibinfo {author} {\bibfnamefont {C.}~\bibnamefont {{Zhu}}},\ and\
  \bibinfo {author} {\bibfnamefont {X.-G.}\ \bibnamefont {{Wen}}},\ }\bibfield
  {title} {\bibinfo {title} {Fermion decoration construction of symmetry
  protected trivial orders for fermion systems with any symmetries $g_f$ and in
  any dimensions},\ }\href {https://doi.org/10.1103/PhysRevB.100.235141}
  {\bibfield  {journal} {\bibinfo  {journal} {Phys. Rev. B}\ }\textbf {\bibinfo
  {volume} {100}},\ \bibinfo {pages} {235141} (\bibinfo {year} {2019})},\
  \Eprint {https://arxiv.org/abs/1809.01112} {arXiv:1809.01112} \BibitemShut
  {NoStop}%
\bibitem [{\citenamefont {Gaiotto}\ and\ \citenamefont
  {Kapustin}(2016)}]{GK150505856}%
  \BibitemOpen
  \bibfield  {author} {\bibinfo {author} {\bibfnamefont {D.}~\bibnamefont
  {Gaiotto}}\ and\ \bibinfo {author} {\bibfnamefont {A.}~\bibnamefont
  {Kapustin}},\ }\bibfield  {title} {\bibinfo {title} {Spin {TQFTs} and
  fermionic phases of matter},\ }\href
  {https://doi.org/10.1142/s0217751x16450445} {\bibfield  {journal} {\bibinfo
  {journal} {Int. J. Mod. Phys. A}\ }\textbf {\bibinfo {volume} {31}},\
  \bibinfo {pages} {1645044} (\bibinfo {year} {2016})},\ \Eprint
  {https://arxiv.org/abs/1505.05856} {arXiv:1505.05856} \BibitemShut {NoStop}%
\bibitem [{\citenamefont {Nussinov}\ and\ \citenamefont
  {Ortiz}(2009)}]{NOc0605316}%
  \BibitemOpen
  \bibfield  {author} {\bibinfo {author} {\bibfnamefont {Z.}~\bibnamefont
  {Nussinov}}\ and\ \bibinfo {author} {\bibfnamefont {G.}~\bibnamefont
  {Ortiz}},\ }\bibfield  {title} {\bibinfo {title} {Sufficient symmetry
  conditions for topological quantum order},\ }\href
  {https://doi.org/10.1073/pnas.0803726105} {\bibfield  {journal} {\bibinfo
  {journal} {Proc. Natl. Acad. Sci. U.S.A.}\ }\textbf {\bibinfo {volume}
  {106}},\ \bibinfo {pages} {16944} (\bibinfo {year} {2009})},\ \Eprint
  {https://arxiv.org/abs/cond-mat/0605316} {arXiv:cond-mat/0605316}
  \BibitemShut {NoStop}%
\bibitem [{\citenamefont {Gaiotto}\ \emph {et~al.}(2015)\citenamefont
  {Gaiotto}, \citenamefont {Kapustin}, \citenamefont {Seiberg},\ and\
  \citenamefont {Willett}}]{GW14125148}%
  \BibitemOpen
  \bibfield  {author} {\bibinfo {author} {\bibfnamefont {D.}~\bibnamefont
  {Gaiotto}}, \bibinfo {author} {\bibfnamefont {A.}~\bibnamefont {Kapustin}},
  \bibinfo {author} {\bibfnamefont {N.}~\bibnamefont {Seiberg}},\ and\ \bibinfo
  {author} {\bibfnamefont {B.}~\bibnamefont {Willett}},\ }\bibfield  {title}
  {\bibinfo {title} {Generalized global symmetries},\ }\href
  {https://doi.org/10.1007/jhep02(2015)172} {\bibfield  {journal} {\bibinfo
  {journal} {J. High Energ. Phys.}\ }\textbf {\bibinfo {volume}
  {2015}}\bibfield  {number} {\bibinfo  {number} { (2)},\ \bibinfo {pages}
  {172}},\ }\Eprint {https://arxiv.org/abs/1412.5148} {arXiv:1412.5148}
  \BibitemShut {NoStop}%
\bibitem [{\citenamefont {Etingof}\ \emph {et~al.}(2005)\citenamefont
  {Etingof}, \citenamefont {Nikshych},\ and\ \citenamefont
  {Ostrik}}]{EtingofNikshychOstrik2005}%
  \BibitemOpen
  \bibfield  {author} {\bibinfo {author} {\bibfnamefont {P.}~\bibnamefont
  {Etingof}}, \bibinfo {author} {\bibfnamefont {D.}~\bibnamefont {Nikshych}},\
  and\ \bibinfo {author} {\bibfnamefont {V.}~\bibnamefont {Ostrik}},\
  }\bibfield  {title} {\bibinfo {title} {On fusion categories},\ }\href
  {https://doi.org/10.4007/annals.2005.162.581} {\bibfield  {journal} {\bibinfo
   {journal} {Annals of Mathematics}\ }\textbf {\bibinfo {volume} {162}},\
  \bibinfo {pages} {581} (\bibinfo {year} {2005})}\BibitemShut {NoStop}%
\bibitem [{\citenamefont {Etingof}\ \emph
  {et~al.}(2015{\natexlab{b}})\citenamefont {Etingof}, \citenamefont {Gelaki},
  \citenamefont {Nikshych},\ and\ \citenamefont {Ostrik}}]{EGNO2015}%
  \BibitemOpen
  \bibfield  {author} {\bibinfo {author} {\bibfnamefont {P.}~\bibnamefont
  {Etingof}}, \bibinfo {author} {\bibfnamefont {S.}~\bibnamefont {Gelaki}},
  \bibinfo {author} {\bibfnamefont {D.}~\bibnamefont {Nikshych}},\ and\
  \bibinfo {author} {\bibfnamefont {V.}~\bibnamefont {Ostrik}},\ }\href
  {https://doi.org/10.1090/surv/205} {\emph {\bibinfo {title} {Tensor
  Categories}}},\ \bibinfo {series} {Mathematical Surveys and Monographs},
  Vol.\ \bibinfo {volume} {205}\ (\bibinfo  {publisher} {American Mathematical
  Society},\ \bibinfo {address} {Providence, RI},\ \bibinfo {year}
  {2015})\BibitemShut {NoStop}%
\bibitem [{\citenamefont {Baez}\ and\ \citenamefont
  {Lauda}(2004)}]{BaezLauda2004}%
  \BibitemOpen
  \bibfield  {author} {\bibinfo {author} {\bibfnamefont {J.~C.}\ \bibnamefont
  {Baez}}\ and\ \bibinfo {author} {\bibfnamefont {A.~D.}\ \bibnamefont
  {Lauda}},\ }\bibfield  {title} {\bibinfo {title} {Higher-dimensional algebra
  {V}: 2-groups},\ }\href@noop {} {\bibfield  {journal} {\bibinfo  {journal}
  {Theory and Applications of Categories}\ }\textbf {\bibinfo {volume} {12}},\
  \bibinfo {pages} {423} (\bibinfo {year} {2004})},\ \Eprint
  {https://arxiv.org/abs/math/0307200} {arXiv:math/0307200} \BibitemShut
  {NoStop}%
\bibitem [{\citenamefont {Douglas}\ and\ \citenamefont
  {Reutter}(2018)}]{DouglasReutter2018}%
  \BibitemOpen
  \bibfield  {author} {\bibinfo {author} {\bibfnamefont {C.~L.}\ \bibnamefont
  {Douglas}}\ and\ \bibinfo {author} {\bibfnamefont {D.~J.}\ \bibnamefont
  {Reutter}},\ }\bibfield  {title} {\bibinfo {title} {Fusion 2-categories and a
  state-sum invariant for 4-manifolds},\ }\href@noop {} {\bibfield  {journal}
  {\bibinfo  {journal} {arXiv preprint}\ } (\bibinfo {year} {2018})},\ \Eprint
  {https://arxiv.org/abs/1812.11933} {arXiv:1812.11933 [math.QA]} \BibitemShut
  {NoStop}%
\bibitem [{\citenamefont {Kong}\ \emph {et~al.}(2020)\citenamefont {Kong},
  \citenamefont {Tian},\ and\ \citenamefont {Zhou}}]{KongTianZhou2020}%
  \BibitemOpen
  \bibfield  {author} {\bibinfo {author} {\bibfnamefont {L.}~\bibnamefont
  {Kong}}, \bibinfo {author} {\bibfnamefont {Y.}~\bibnamefont {Tian}},\ and\
  \bibinfo {author} {\bibfnamefont {S.}~\bibnamefont {Zhou}},\ }\bibfield
  {title} {\bibinfo {title} {The center of monoidal 2-categories in 3+1d
  dijkgraaf--witten theory},\ }\href
  {https://doi.org/10.1016/j.aim.2019.106928} {\bibfield  {journal} {\bibinfo
  {journal} {Advances in Mathematics}\ }\textbf {\bibinfo {volume} {360}},\
  \bibinfo {pages} {106928} (\bibinfo {year} {2020})},\ \Eprint
  {https://arxiv.org/abs/1905.04644} {arXiv:1905.04644 [math.QA]} \BibitemShut
  {NoStop}%
\bibitem [{\citenamefont {D{\'e}coppet}\ \emph {et~al.}(2024)\citenamefont
  {D{\'e}coppet}, \citenamefont {Huston}, \citenamefont {Johnson-Freyd},
  \citenamefont {Nikshych}, \citenamefont {Penneys}, \citenamefont {Plavnik},
  \citenamefont {Reutter},\ and\ \citenamefont
  {Yu}}]{DecoppetEtAl2024Fusion2Categories}%
  \BibitemOpen
  \bibfield  {author} {\bibinfo {author} {\bibfnamefont {T.~D.}\ \bibnamefont
  {D{\'e}coppet}}, \bibinfo {author} {\bibfnamefont {P.}~\bibnamefont
  {Huston}}, \bibinfo {author} {\bibfnamefont {T.}~\bibnamefont
  {Johnson-Freyd}}, \bibinfo {author} {\bibfnamefont {D.}~\bibnamefont
  {Nikshych}}, \bibinfo {author} {\bibfnamefont {D.}~\bibnamefont {Penneys}},
  \bibinfo {author} {\bibfnamefont {J.}~\bibnamefont {Plavnik}}, \bibinfo
  {author} {\bibfnamefont {D.}~\bibnamefont {Reutter}},\ and\ \bibinfo {author}
  {\bibfnamefont {M.}~\bibnamefont {Yu}},\ }\bibfield  {title} {\bibinfo
  {title} {The classification of fusion 2-categories},\ }\href@noop {}
  {\bibfield  {journal} {\bibinfo  {journal} {arXiv preprint}\ } (\bibinfo
  {year} {2024})},\ \Eprint {https://arxiv.org/abs/2411.05907}
  {arXiv:2411.05907 [math.QA]} \BibitemShut {NoStop}%
\bibitem [{\citenamefont {Kitaev}(2001)}]{K0131}%
  \BibitemOpen
  \bibfield  {author} {\bibinfo {author} {\bibfnamefont {A.~Y.}\ \bibnamefont
  {Kitaev}},\ }\bibfield  {title} {\bibinfo {title} {Unpaired majorana fermions
  in quantum wires},\ }\href {https://doi.org/10.1070/1063-7869/44/10s/s29}
  {\bibfield  {journal} {\bibinfo  {journal} {Phys.-Usp.}\ }\textbf {\bibinfo
  {volume} {44}},\ \bibinfo {pages} {131} (\bibinfo {year} {2001})},\ \Eprint
  {https://arxiv.org/abs/cond-mat/0010440} {arXiv:cond-mat/0010440}
  \BibitemShut {NoStop}%
\bibitem [{\citenamefont {Costantino}(2005)}]{C0527}%
  \BibitemOpen
  \bibfield  {author} {\bibinfo {author} {\bibfnamefont {F.}~\bibnamefont
  {Costantino}},\ }\bibfield  {title} {\bibinfo {title} {A calculus for
  branched spines of 3-manifolds},\ }\href
  {https://doi.org/10.1007/s00209-005-0810-0} {\bibfield  {journal} {\bibinfo
  {journal} {Math. Z.}\ }\textbf {\bibinfo {volume} {251}},\ \bibinfo {pages}
  {427} (\bibinfo {year} {2005})},\ \Eprint
  {https://arxiv.org/abs/math/0403014} {math/0403014} \BibitemShut {NoStop}%
\bibitem [{\citenamefont {Chen}\ \emph {et~al.}(2013)\citenamefont {Chen},
  \citenamefont {Gu}, \citenamefont {Liu},\ and\ \citenamefont
  {Wen}}]{CGL1172}%
  \BibitemOpen
  \bibfield  {author} {\bibinfo {author} {\bibfnamefont {X.}~\bibnamefont
  {Chen}}, \bibinfo {author} {\bibfnamefont {Z.-C.}\ \bibnamefont {Gu}},
  \bibinfo {author} {\bibfnamefont {Z.-X.}\ \bibnamefont {Liu}},\ and\ \bibinfo
  {author} {\bibfnamefont {X.-G.}\ \bibnamefont {Wen}},\ }\bibfield  {title}
  {\bibinfo {title} {Symmetry protected topological orders and the group
  cohomology of their symmetry group},\ }\href
  {https://doi.org/10.1103/physrevb.87.155114} {\bibfield  {journal} {\bibinfo
  {journal} {Phys. Rev. B}\ }\textbf {\bibinfo {volume} {87}},\ \bibinfo
  {pages} {155114} (\bibinfo {year} {2013})},\ \Eprint
  {https://arxiv.org/abs/1106.4772} {arXiv:1106.4772} \BibitemShut {NoStop}%
\bibitem [{\citenamefont {Chen}\ \emph {et~al.}(2012)\citenamefont {Chen},
  \citenamefont {Gu}, \citenamefont {Liu},\ and\ \citenamefont
  {Wen}}]{CGL1204}%
  \BibitemOpen
  \bibfield  {author} {\bibinfo {author} {\bibfnamefont {X.}~\bibnamefont
  {Chen}}, \bibinfo {author} {\bibfnamefont {Z.-C.}\ \bibnamefont {Gu}},
  \bibinfo {author} {\bibfnamefont {Z.-X.}\ \bibnamefont {Liu}},\ and\ \bibinfo
  {author} {\bibfnamefont {X.-G.}\ \bibnamefont {Wen}},\ }\bibfield  {title}
  {\bibinfo {title} {Symmetry-protected topological orders in interacting
  bosonic systems},\ }\href {https://doi.org/10.1126/science.1227224}
  {\bibfield  {journal} {\bibinfo  {journal} {Science}\ }\textbf {\bibinfo
  {volume} {338}},\ \bibinfo {pages} {1604} (\bibinfo {year} {2012})},\ \Eprint
  {https://arxiv.org/abs/1301.0861} {arXiv:1301.0861} \BibitemShut {NoStop}%
\bibitem [{\citenamefont {Steenrod}(1947)}]{S4790}%
  \BibitemOpen
  \bibfield  {author} {\bibinfo {author} {\bibfnamefont {N.~E.}\ \bibnamefont
  {Steenrod}},\ }\bibfield  {title} {\bibinfo {title} {Products of cocycles and
  extensions of mappings},\ }\href {https://doi.org/10.2307/1969172} {\bibfield
   {journal} {\bibinfo  {journal} {The Annals of Mathematics}\ }\textbf
  {\bibinfo {volume} {48}},\ \bibinfo {pages} {290} (\bibinfo {year}
  {1947})}\BibitemShut {NoStop}%
\end{thebibliography}%
\end{document}